\newlist{inlinelist}{enumerate*}{1}
\setlist[enumerate]{wide=\parindent,labelsep=.5\parindent,leftmargin=*}
\setlist[itemize]{wide=\parindent,labelsep=.5\parindent,leftmargin=*}
\algrenewcommand\algorithmicrequire{\textbf{Input:}}
\algrenewcommand\algorithmicensure{\textbf{Output:}}
\newcommand{\crefnameof}[1]{\csname cref@#1@name\endcsname}
  \DeclareRobustCommand{\labelcrefrange}[2]{\@crefrangenostar{labelcref}{#1}{#2}}
\crefname{equation}{}{}
\Crefname{equation}{Eq.}{Eqs.}
\crefname{ALG@line}{line}{lines}
\Crefname{ALG@line}{Line}{Lines}
\crefname{inlinelisti}{statement}{statements}
\Crefname{inlinelisti}{Statement}{Statements}
\newcommand{\customqed}{\textup{$\lozenge$}}
\declaretheorem[style=plain,qed=\customqed]{theorem}
\declaretheorem[style=plain,qed=\customqed]{proposition}
\declaretheorem[style=plain,qed=\customqed]{lemma}
\declaretheorem[style=plain,qed=\customqed]{corollary}
\declaretheorem[style=definition,qed=\customqed]{definition}
\declaretheorem[style=definition,qed=\customqed]{example}
\declaretheorem[style=definition,qed=\customqed]{remark}
\declaretheorem[style=definition,qed=\customqed]{problem}
\newcommand{\defeq}{\coloneqq}
\newcommand{\eqdef}{\eqqcolon}
\newcommand{\diff}{\mathrm{d}}
\newcommand{\real}{\mathbb{R}}
\newcommand{\nat}{\mathbb{N}}
\newcommand{\integers}{\mathbb{Z}}
\newcommand{\expect}[1]{\mathbb{E}\left[{#1}\right]}
\newcommand{\cost}[1]{\mathscr{C}\!\left({#1}\right)}
\newcommand{\bp}{P}
\newcommand{\bq}{{Q}}
\newcommand{\bb}{{\mathbf{b}}}
\newcommand{\bc}{{\mathbf{c}}}
\newcommand{\dom}{\mathrm{dom}}
\NewDocumentCommand{\leaves}{smmm}{\ell_{#2}(#3,#4)}
\NewDocumentCommand{\toll}{sm}{\tau\IfBooleanTF{#1}{(#2)}{\!\left(#2\right)}}
\NewDocumentCommand{\trel}{sm}{\tau_{\rm r}\IfBooleanTF{#1}{(#2)}{\!\left(#2\right)}}
\NewDocumentCommand{\trelfldr}{sm}{\tau_{\rm r, \fldr}\IfBooleanTF{#1}{(#2)}{\!\left(#2\right)}}
\NewDocumentCommand{\Hb}{sm}{H_{\rm b}\IfBooleanTF{#1}{(#2)}{\!\left(#2\right)}}
\NewDocumentCommand{\Ho}{sm}{H_{1}\IfBooleanTF{#1}{(#2)}{\!\left(#2\right)}}
\NewDocumentCommand{\nuu}{sm}{\nu\IfBooleanTF{#1}{(#2)}{\!\left(#2\right)}}
\NewDocumentCommand{\epsd}{sO{d}m}{\epsilon_{#2}\IfBooleanTF{#1}{(#3)}{\!\left(#3\right)}}
\newcommand{\flip}{\mathsf{flip}}
\newcommand{\TextOp}[1]{\mathop{\text{\normalfont #1}}\nolimits}
\DeclareMathOperator{\KY}{\TextOp{KY}}
\DeclareMathOperator{\FLDR}{\TextOp{FLDR}}
\DeclareMathOperator{\ALDR}{\TextOp{ALDR}}
\NewDocumentCommand{\ky}{so}{\KY\IfValueT{#2}{\IfBooleanF{#1}{\!}\left(#2\right)}}
\NewDocumentCommand{\fldr}{so}{\FLDR\IfValueT{#2}{\IfBooleanF{#1}{\!}\left(#2\right)}}
\NewDocumentCommand{\aldr}{so}{\ALDR\IfValueT{#2}{\IfBooleanF{#1}{\!}\left(#2\right)}}
\DeclareMathOperator*{\argmin}{arg\,min}
\tikzset{every tree node/.style={anchor=north}}
\tikzset{every leaf node/.style={inner sep=1pt, minimum height=10pt,draw=none}}
\tikzset{branch/.style={shape=coordinate}}
\tikzset{leaf/.style={inner sep=1pt,draw, minimum height=10pt}}
\DeclarePairedDelimiter{\set}{\lbrace}{\rbrace}
\DeclarePairedDelimiter{\ceil}{\lceil}{\rceil}
\DeclarePairedDelimiter{\floor}{\lfloor}{\rfloor}
\DeclarePairedDelimiter{\abs}{\lvert}{\rvert}
\title{Efficient Rejection Sampling in the Entropy-Optimal Range}
\author{Thomas L.~Draper and Feras A.~Saad
  \thanks{The authors are with the Computer Science Department at Carnegie
    Mellon University, 5000 Forbes Avenue, Pittsburgh, PA 15213 USA
    (e-mail:
      \href{mailto:tdraper@cmu.edu}{tdraper@cmu.edu};
      \href{mailto:fsaad@cmu.edu}{fsaad@cmu.edu})}}
\newcommand{\footer}{\copyright 2026 IEEE.  Personal use of this material is permitted.
Permission from IEEE must be obtained for all other uses, in any current or
future media, including reprinting/republishing this material for
advertising or promotional purposes, creating new collective works, for
resale or redistribution to servers or lists, or reuse of any copyrighted
component of this work in other works.
\begin{refsection}
Appears as: \fullcite{draper2026b}
\end{refsection}
}
  \ifnum\value{page}=1\relax
        \parbox{\textwidth}{%
          \scriptsize
          \emergencystretch=1em 
          \color{blue}\footer
        }%
\begin{document}

\maketitle

\begin{abstract}
We study the problem of generating a random variate $X$ from a finite discrete
probability distribution $\bp$ using an entropy source of independent
fair coin flips.
A classic result from \citeauthor{knuth1976} shows that the optimal
expected number of input coin flips per output sample lies between $H(\bp)$
and $H(\bp)\,{+}\,2$, where $H$ is the Shannon entropy function.
However, implementing the \citeauthor{knuth1976} ``entropy-optimal''
sampler entails a tradeoff between using either exponential space with low
runtime per sample, or linear space with high runtime
per sample.
We introduce a new sampling algorithm that avoids this
tradeoff: it requires linearithmic space, incurs negligible runtime
overhead per sample, and uses an expected number of coin flips
that lies in
the entropy-optimal range $[H(\bp), H(\bp)\,{+}\,2)$.
No previous sampler for discrete distributions simultaneously
achieves these space, time, and entropy characteristics.
Numerical experiments demonstrate improvements in runtime and
entropy of the proposed method compared to the celebrated alias method.
\end{abstract}

\begin{IEEEkeywords}
Random variate generation,
variable-to-fixed-length codes,
algorithm design and analysis,
entropy.
\end{IEEEkeywords}

\section{Introduction}
\label{sec:introduction}

We are concerned with algorithms that use independent flips of a fair coin
to generate a random variate $X$ from a rational discrete probability distribution
$\bp \defeq (p_1, \dots, p_n)$ over $n$ outcomes.
The ``efficiency'' of a sampling algorithm is measured in terms of its
space requirements, running time, and expected number of coin flips drawn from the
entropy source per output (henceforth, ``entropy cost'').
Entropy sources that provide coin flips stem from natural phenomena~\citep{faran1968}---such as
electrical, atmospheric, or thermal noise--or from
software-based mechanisms that collect system-level entropy to seed
a uniform pseudorandom number generator~\citep{Kneusel2018}.

\Citet{knuth1976} settle the problem of developing an ``entropy-optimal''
sampling algorithm for a discrete distribution $\bp$ that consumes the least
possible number of coin flips per output on average.
Its entropy cost is between $H(\bp)$ and $H(\bp)+2$
coin flips per output, where $H(\bp) \defeq \sum_{i=1}^{n}p_i \log(1/p_i)$ is
the Shannon entropy.
The \citeauthor{knuth1976} method is well known in the information theory
literature~\citep[\S5.11]{cover2012}.
On the practical side, it has been used
to develop high-performance and cryptographically secure hardware
devices for nonuniform random number
generation~\cite{roy2013,dwarakanath2014,follath2014,baidya2024}, where
entropy is a premium resource and expensive numerical operations cannot be
performed during sampling time.

Despite these advantages, broader adoption of the \citeauthor{knuth1976}
method has been limited by practical implementation difficulties.
With the notable exception of \citet[Chapter~15]{devroye1986},
entropy-optimal sampling is largely absent from standard references on
random variate generation (e.g.,
\cites[\S3.3]{ripley1987}[Chapter~3]{fishman1996}[Chapters 3 and 10]{hormann2004}{schwarz2011}),
and it is not available in prominent numerical
software libraries \citep{leydold2009,galassi2009,lea1992}.
This gap reflects a lack of implementations that achieve the entropy
guarantee while also satisfying the space and time constraints demanded by
modern applications.

The standard implementation strategy is to first construct, in a preprocessing
phase, a data structure called a discrete distribution generating (DDG)
tree, and then to generate samples by traversing this tree.
While this strategy enables fast sampling once the tree is constructed,
\citet[Theorem~3.5]{saad2020popl} show that the space complexity of
entropy-optimal DDG trees scales exponentially with the number of bits
used to encode $\bp$.
Explicit DDG-tree construction is thus inefficient for target
distributions specified at moderate-to-high precision.

An alternative implementation strategy~\cites[Alg.~1]{roy2013}[Alg.~A1]{draper2026}
is to avoid explicitly constructing a DDG tree during preprocessing
by instead using lazy evaluation during sampling.
In particular, the exponential space cost is circumvented by
constructing only those parts of the DDG tree that are demanded by the coin
flips revealed during sampling.
However, the resulting on-the-fly construction incurs a high runtime cost
per sample, diminishing the benefits of entropy-optimal
sampling.

These considerations underscore a fundamental tension among three
resources---space, time, and entropy---and motivate the central question of
the present work: \textit{is it possible to design a space- and
time-efficient sampler whose entropy cost lies in the optimal range of at
most $H(P)+2$ coin flips per output sample?}

\subsection{Main Results}

\begin{table*}[!t]
\centering
\caption{
  Comparison of exact sampling methods for discrete probability
  distributions $\bp \defeq (a_1/m, \dots, a_n/m)$.
  As the input $\bp$ is represented using $n \log(m)$ bits, an expression
  with space complexity $O(m)$ scales exponentially in the input size
  (also called pseudo-polynomial).
  The entropy cost of any method is at least
  $H(\bp) \leq \log(n)$.
  For algorithms that explicitly traverse a discrete distribution
  generating (DDG) tree, the expected sampling runtime is equal to
  the entropy cost.
  For algorithms that implicitly traverse an underlying DDG tree, the
  expected sample time may be higher than the entropy cost,
  because operations such as binary search are performed during sampling.
  All bounds in this table are tight; in particular, the Amplified
  Loaded Dice Roller (\cref{alg:aldr}) is the only method that simultaneously
  achieves entropy cost $<H(\bp) + 2$ and linearithmic space with no runtime
  overhead, for every distribution $\bp$.
  The space complexity has an asymptotic upper bound, the entropy
  cost has a strict upper bound, and the expected sampling time has an
  asymptotic upper and lower bound.
}
\label{table:complexity}
\begin{adjustbox}{max width=\linewidth}
\newcommand{\specialcell}[2][c]{\begin{tabular}[#1]{@{}c@{}}#2\end{tabular}}
\begin{tabular}{|llllll@{}|}
\hline
\textbf{Method}
& \textbf{DDG Tree}
& \textbf{Space}
& \textbf{Entropy Cost} $C$
& \textbf{Runtime}
& \textbf{Reference} \rule{0pt}{2ex}
\\ \hline \hline
Entropy-Optimal
& Explicit
& $nm\log(n)$
& $< H(\bp) + 2$
& $C$
& \Citet[\S2]{knuth1976} \rule{0pt}{2.5ex}
\\ 
\bfseries \begin{tabular}[t]{@{}l}Rejection Sampling\\\quad (Amplified Loaded Dice Roller)\end{tabular}
& Explicit
& $n\log(m)\log(n)$
& $< H(\bp) + 2$
& $C$
& \bfseries {\cref{alg:aldr}}
\\ 
\begin{tabular}[t]{@{}l}Interval Algorithm\end{tabular}
& Explicit
& $nm\log(n)$
& $< H(\bp) + 3$
& $C$
& \begin{tabular}[t]{@{}l}\Citet{han1997}\\\Citet{gill1972}\end{tabular}
\\ 
\begin{tabular}[t]{@{}l}Rejection Sampling\\\quad (Fast Loaded Dice Roller)\end{tabular}
& Explicit
& $n\log(m)\log(n)$
& $< H(\bp) + 6$
& $C$
& \Citet[Alg.~4]{saad2020fldr}
\\ \hline
Entropy-Optimal
& Implicit
& $n \log(m)$
& $< H(\bp) + 2$
& $C n \log(m)$
& \begin{tabular}[t]{@{}l}\Citet[Alg.~1]{roy2013}\\\Citet[Alg.~A1]{draper2026}\end{tabular}
\rule{0pt}{2.5ex}
\\
Alias Method
& Implicit
& $n\log(m)$
& $< \ceil{\log(n)}+3$
& $C$
& \Citet{walker1977}; \Citet{vose1991}
\\ 
\begin{tabular}[t]{@{}l}Interval Algorithm\end{tabular}
& Implicit
& $n\log(m)$
& $< H(\bp) + 3$
& $C\log(n)$
& \begin{tabular}[t]{@{}l}\Citet{han1997}\\\Citet{uyematsu2003}\end{tabular}
\\ 
\begin{tabular}[t]{@{}l}Rejection Sampling\\\quad (Uniform Proposal)\end{tabular}
& Implicit
& $\log(m)$
& $< n (\ceil{\log(n)} + 3)$
& $C$
& \begin{tabular}[t]{@{}l}\Citet[Alg.~1]{saad2020fldr}\\\Citet{lumbroso2013}\end{tabular}
\\ 
\begin{tabular}[t]{@{}l}Rejection Sampling\\\quad(Dyadic Proposal; Lookup Table)\end{tabular}
& Implicit
& $m\log(n)$
& $< 2\ceil{\log(m)}$
& $C$
& \begin{tabular}[t]{@{}l}\Citet[Alg.~2]{saad2020fldr}\\\Citet[p.~770]{devroye1986}\end{tabular}
\\ 
\begin{tabular}[t]{@{}l}Rejection Sampling\\\quad (Dyadic Proposal; Binary Search)\end{tabular}
& Implicit
& $n\log(m)$
& $< 2\ceil{\log(m)}$
& $\log(n) + C$
& \begin{tabular}[t]{@{}l}\Citet[Alg.~3]{saad2020fldr}\\\Citet[p.~770]{devroye1986}\end{tabular}
\\ \hline
\end{tabular}
\end{adjustbox}
\end{table*}

\begin{definition}[{\citep[Definition~4]{han1993}}]
\label{def:m-type}
For any integer $M \ge 1$, a discrete probability distribution $\bp \defeq
(p_1,\dots,p_n)$ over $n \geq 1$ outcomes is said to be ${M}$-\textit{type}
if each $p_i = {A}_i/{M}$
for some integer ${A}_i \in \set{0,\dots,{M}}$ ($1 \le i \le n$).%
\end{definition}

Any finite discrete probability distribution with rational
probabilities---say $p_i = d_i/b_i$ in lowest terms $(1 \le i \le n)$---%
is $cm$-type for each $c\ge 1$, where $m$ is the least common multiple of $b_1, \dots, b_n$
and the numerators are $A_i = (cm/b_i)d_i$.
If $c=1$ then all the numerators $A_i$ are coprime,
yielding the $m$-type representation
of $\bp$ with the smallest possible denominator.

\begin{problem}
\label{problem:sampling}
Given an $m$-type
probability distribution $\bp$
encoded as a list $(a_1, \dots, a_n)$ of $n$ coprime
positive integers
and access to a stream of independent fair coin flips, generate an
integer $i$ with probability $p_i = a_i/m$.
\end{problem}

This article introduces the \textit{Amplified Loaded Dice Roller} ($\aldr$),
a family of samplers for $m$-type distributions
that combine the entropy-optimal
method of \citet{knuth1976} with the rejection sampling method of
\citet{neumann1951}.
Each $\aldr$ sampler for $\bp$ is parameterized by an integer ``amplification'' parameter $K \geq k
\defeq \ceil{\log(m)}$ that governs its space and entropy cost, written $\aldr[\bp,K]$.
The results are as follows.

\begin{itemize}[leftmargin=*,itemsep=5pt]
\item \Cref{theorem:aldr-generic-bound}
shows that $\aldr[\bp,K]$ has an entropy cost bounded by $H(\bp)+2+O((K-k)/2^{K-k})$.

\item \Cref{theorem:aldr-2k-toll-two}
shows that $\aldr[\bp,K]$ has an entropy cost less than $H(\bp)+2$ for every $K \geq 2k$.

\item \Cref{theorem:aldr-doubling-minimal} shows that the smallest
parameter that ensures the $H(\bp)+2$ bound is $K=2k$,
achieving linearithmic space complexity of $n \log(m)\log(n)$
with respect the $n\log(m)$-sized input.

\item \Cref{theorem:aldr-ky-match-bounded}
describes the distributions $\bp$ for which $\aldr[\bp,K]$
is entropy optimal for some $K$.

\item \Cref{theorem:aldr-leq-fldr}
shows that the entropy cost of each $\aldr[\bp,K]$ sampler
($K \geq k$) is upper bounded by that of the first member ($K = k$),
giving the exact conditions for strict inequality.

\end{itemize}


\begin{table*}[t]
\centering
\normalsize
\caption{Variations of the random number generation problem investigated in
the literature, under various assumptions on the input source and generated
output variates.
The assumptions in this article are underlined (the proposed
method is readily generalizable to an input source giving
i.i.d.~rolls of a fair $k$-sided dice).
}
\label{table:assumptions}
\begin{adjustbox}{max width=\linewidth}
\begin{tabular}{|c|c|c||c|c|c|c|}
\hline
\multicolumn{3}{|c||}{\textbf{Input Source}} & \multicolumn{4}{c|}{\textbf{Output Variates}} \\ \hline\hline
\textbf{Symbols}                             & \textbf{Distribution}                                         & \textbf{Sequence}  & \textbf{Symbols}                     & \textbf{Distribution} & \textbf{Error}    & \textbf{Length} \\ \hline
\underline{Coins} $(\set{0,1})$              & \underline{Uniform}                                           & \underline{i.i.d.} & Coins $(\set{0,1})$                  & Uniform               & \underline{Exact} & \underline{Fixed-length} \\
Dice $(\set{1,\dots,k})$                     & Arbitrary (Known)                                             & Markov             & \underline{Dice} $(\set{1,\dots,n})$ & \underline{Arbitrary} & Approximate       & Variable-length \\
~                                            & Arbitrary (Unknown)                                           & Nonstationary      & ~                                    & ~                     & ~                 & ~  \\ \hline
\end{tabular}
\end{adjustbox}
\end{table*}

In particular, when $K=2k$, the entropy cost lies in the optimal
range of $[H(\bp), H(\bp)+2)$ coin flips per sample, using space that scales
linearithmically with the input size of $n\log(m)$ bits used to specify
the $m$-type distribution $\bp$, and runtime that matches the entropy cost
with no additional overhead.
No existing sampler
simultaneously achieves these space, runtime, and entropy characteristics
for \textit{every} distribution $\bp$.
The preprocessing and generation phases of $\aldr$ are
readily implementable using fast integer
arithmetic and a simple data structure.
The method is well suited for any situation that requires exact samples.
It is also suitable for
sampling on a constrained hardware device, where floating-point units
are unavailable or introduce unacceptable errors, and where the overhead of
arbitrary-precision (bignum) arithmetic is prohibitively high.
\Cref{table:complexity} compares the computational complexity
of $\aldr$ (at $K=2k$) with various existing sampling algorithms in the
literature, which are described next.

\subsection{Related Sampling Techniques}

\paragraph*{Exact Samplers}

Several works have developed concrete algorithms for entropy-optimal
sampling~\citep{knuth1976}, under various assumptions.
\Citet{lumbroso2013} describes an efficient, linear space implementation
of the entropy-optimal \citeauthor{knuth1976} method when $\bp$ is the
uniform or Bernoulli distribution.
\Citet{huber2024} analyze this optimal uniform sampler as a
``randomness recycler'' protocol and generalize it to arbitrary discrete
distributions, assuming access to the binary expansions of the target
probabilities.
\Textcites[Alg.~1]{roy2013}[Alg.~A1]{draper2026} give a lazy, space-efficient
implementation of entropy-optimal sampling that constructs the DDG tree on
the fly, but incurs a high runtime overhead from computing the binary
expansions of probabilities during sampling.
\Citet{knuth1976,saad2025} show how to implement entropy-optimal sampling
given access to a procedure that computes the cumulative distribution
function of $\bp$, which can handle a much larger number of outcomes
as compared to being given access to an array of weights.
The table method of \citet{marsaglia1963} is entropy optimal,
but applies only when each probability in $\bp$ is dyadic
(i.e., the denominators of the $p_i$ are powers of $2$).
\Citet[\S2.1.1]{devroye2020} discuss further practical considerations for
implementing entropy-optimal generators in software.

\Citet{gill1972} describes an exact sampling algorithm for discrete
distributions, based on a lazy implementation of the inverse transform
method, and proves that it consumes at most $H(\bp)+4$ flips per output on
average.
This algorithm is a special case of the more general interval algorithm of
\citet{han1997}, who establish a tighter bound of $H(\bp)+3$.
Typical implementations of the interval algorithm
use linear space but perform an expensive $O(\log n)$ binary search
after each coin flip is obtained from the source~\citep{uyematsu2003,devroye2020}.
Eliminating the binary search at sampling time is possible by
constructing an exponentially sized tree data structure during preprocessing,
or by navigating other space--time trade-offs.
\Citet{saad2020fldr} present a linearithmic space sampler, based on
combining entropy-optimal sampling and rejection sampling, whose
entropy cost is less than $H(\bp)+6$.
The family of samplers introduced in this article generalize and
improve upon this prior work.


\begin{figure*}[!t]
\tikzset{level distance=20pt, sibling distance=10pt}
\tikzstyle{branchE}=[font=\footnotesize,fill=white,inner sep=0.3pt]
\captionsetup[subfigure]{aboveskip=0pt,belowskip=5pt}
\centering
\begin{subfigure}[t]{.2\linewidth}
\caption{$\bp = (1/4, 3/4)$}
\centering
\begin{tikzpicture}
\Tree[
  \edge[-latex] node[branchE,pos=.5]{0}; \node[leaf]{2};
  \edge[-latex] node[branchE,pos=.5]{1} ;
    [
    \edge[-latex] node[branchE,pos=.5]{0}; \node[leaf]{1};
    \edge[-latex] node[branchE,pos=.5]{1}; \node[leaf]{2};
    ] ]
\end{tikzpicture}
\medskip

$\left\lbrace
\begin{aligned}
T(0)&=2\\
T(10)&=1\\
T(11)&=2
\end{aligned}
\right\rbrace
$
\end{subfigure}
\begin{subfigure}[t]{.2\linewidth}
\caption{$\bp = (1/3, 2/3)$}
\centering

\begin{tikzpicture}
\Tree[.\node[branch,name=root]{};
  \edge[-latex] node[pos=.5,branchE]{0};
    [
     \edge[-latex] node[pos=.5,branchE]{0}; \node[leaf,draw,name=back]{\phantom{0}};
     \edge[-latex] node[pos=.5,branchE]{1}; \node[leaf,draw]{1};
     ]
  \edge[-latex] node[pos=.5,branchE]{1}; \node[leaf,draw]{2};
  ]
\draw[red,-latex,out=110,in=180] (back.center) to (root.south);
\end{tikzpicture}
\medskip

$\left\lbrace
\begin{aligned}
T(1)&=2\\
T(01)&=1\\
T(00\bb)&=T(\bb)
\end{aligned}
\right\rbrace
$
\end{subfigure}
\begin{subfigure}[t]{.25\linewidth}
\caption{$\bp = (1/4, 1/4, 1/4, 1/4)$}
\centering
\begin{tikzpicture}
\Tree[
  \edge[-latex] node[pos=.5,branchE]{0};
    [
      \edge[-latex] node[pos=.5,branchE]{0}; \node[leaf,draw]{1};
      \edge[-latex] node[pos=.5,branchE]{1}; \node[leaf,draw]{2};
    ]
  \edge[-latex] node[pos=.5,branchE]{1};
    [
    \edge[-latex] node[pos=.5,branchE]{0}; \node[leaf,draw]{3};
    \edge[-latex] node[pos=.5,branchE]{1}; \node[leaf,draw]{4};
    ] ]
\end{tikzpicture}
\medskip

$\left\lbrace
\begin{aligned}
T(00)&=1\\
T(01)&=2\\
T(10)&=3\\
T(11)&=4
\end{aligned}
\right\rbrace$
\end{subfigure}
\begin{subfigure}[t]{.325\linewidth}
\caption{$\bp = (1/6, 2/6, 2/6, 1/6)$}
\centering
\begin{tikzpicture}
\Tree[
  \edge[-latex] node[pos=.5,branchE]{0};
    [.\node[branch,name=root1]{};
      \edge[-latex] node[pos=.5,branchE]{0};
        [
          \edge[-latex] node[pos=.5,branchE]{0}; \node[leaf,draw,name=back1]{\phantom{1}};
          \edge[-latex] node[pos=.5,branchE]{1}; \node[leaf,draw,name=back2]{\phantom{4}};
        ]
      \edge[-latex] node[pos=.5,branchE]{1};
        [
          \edge[-latex] node[pos=.5,branchE]{0}; \node[leaf,draw]{1};
          \edge[-latex] node[pos=.5,branchE]{1}; \node[leaf,draw]{4};
        ]
    ]
  \edge[-latex] node[pos=.5,branchE]{1};
    [.\node[branch,name=root2]{};
    \edge[-latex] node[pos=.5,branchE]{0}; \node[leaf,draw]{2};
    \edge[-latex] node[pos=.5,branchE]{1}; \node[leaf,draw]{3};
    ]
  ]
\draw[red,-latex,out=110,in=180] (back1.center) to (root1.south);
\draw[red,-latex,out=70,in=180] (back2.center) to (root2.south);
\end{tikzpicture}
\medskip

$\left\lbrace
\begin{aligned}
\begin{array}[t]{r@{\,}lr@{\,}l}
T(10)&=2 & T(11)&=3 \\
T(010)&=1 & T(011)&=4 \\
\multicolumn{4}{c}{T(000\bb)=T(0\bb)}\\
\multicolumn{4}{c}{T(001\bb)=T(1\bb)}
\end{array}
\end{aligned}
\right\rbrace$
\end{subfigure}

\captionsetup{skip=5pt}
\caption[DDG representations]{DDG tree representations of four random sampling algorithms
$T:\set{0,1}^* \rightharpoonup \mathbb{N}$ with output distributions $\bp$.
These trees are constructed using the entropy-optimal \citeauthor{knuth1976}
method from \cref{theorem:knuth-yao}.
The string $\bb \in \set{0,1}^*$ ranges over all finite-length bit string
continuations.
Any string $\bb$ that does not index a path to a leaf node is
not in the domain of $T$.
In future figures of DDG trees, \begin{inlinelist}[label=(\roman*)]
  \item arrows and labels along edges are omitted; and
  \item leaves are labeled as $a_i$ instead of $i$
    for clarity when there are no duplicate weights.
  \end{inlinelist}
}
\label{fig:ddg-tree-examples}
\end{figure*}

\paragraph*{Approximate Samplers}

Most sampling algorithms for discrete distributions used in
practice are based on the so-called ``real RAM'' model of
computation~\citep{shamos1978,blum2004,feferman2013}.
A survey of these
techniques is given in \citet{schwarz2011}.
In the real RAM model, a sampling algorithm is assumed to be able to perform
the following operations in constant time~\citep[Assumptions 1--3]{devroye1986}:
\begin{inlinelist}[label=(\roman*)]
\item obtain i.i.d.\ draws of continuous uniform random variables in
  $[0,1]$;
\item store and look up infinitely precise real numbers;
\item evaluate fundamental real functions with infinite accuracy.
\end{inlinelist}
A sampler is then understood as a map from one or more
uniforms $U_1, U_2, \ldots, U_k$ to an outcome in $\nat$.
For example, a sample from $\bp$ can be
generated using the inverse transform method:
generate $U \sim \mathrm{Uniform}(0,1)$
and then select the integer $i$ that satisfies
$p_1 + \dots + p_{i-1} < U \leq p_1 + \dots + p_i$, using a linear or
binary search through the array of cumulative probabilities.
As this procedure may be too slow for large $n$,
specialized data structures can be constructed
during preprocessing to speed up generation,
such as
the \citeauthor{marsaglia1963} table method~\citep{marsaglia1963,norman1972},
the \citeauthor{chen1974} ``index table'' method~\citep{chen1974},
and the \citeauthor{walker1977} alias method~\citep{walker1977}.
Typical implementations of these samplers in numerical
libraries~\citep{galassi2009,leydold2009}
suffer from many sources of approximation errors~\citep{Monahan1985,Mironov2012,goualard2020,goualard2022,saad2025},
such as using a floating-point uniform $\hat{U} = W \div d$
to approximate the real uniform $U$, where $W$ is a random integer
comprised of 32, 53, or 64 bits and $d$ is a fixed denominator
(e.g., a power of two or Mersenne number).
Replacing floating-point arithmetic with arbitrary-precision arithmetic
could address these errors in principle~\citep{devroye2020},
but imposes large computational overhead in practice.
These implementations also waste entropy, because the
number of coin flips used to generate $W$ may greatly exceed $H(\bp)$
(e.g., if $\bp = \mathrm{Bernoulli}(10^{-3})$,
generating $W$ requires over $1000 \times H(\bp)$ flips).
As compared to approximate samplers that use floating-point arithmetic,
the method in this article is exact and uses fast integer arithmetic.

\paragraph*{Variations}
The random number generation problem has been widely investigated in the
literature under a variety of assumptions on the input source and output
distribution (\cref{table:assumptions}).
These variants are less common in practical applications
than the problem of converting fair coin flips to arbitrary dice rolls.
For example, several authors have studied the problem of extracting fair coin flips
from an i.i.d.\ source with an arbitrary but known
distribution~\citep{elias1972,abrahams1996,roche1991,pae2005a,kozen2014,pae2015,pae2020}.
\Citet{han1997} and \citet{kozen2018} explore more general reductions,
where the former gives an elegant method for converting a sequence of rolls
of an arbitrary $k$-sided dice into rolls of an arbitrary $n$-sided
dice, where the input and output sequences may be i.i.d., Markov, or
arbitrary stochastic processes.
Another variant is extracting fair coin flips from a coin or dice whose
distribution is \textit{unknown}, when the source is
i.i.d.~\citep{neumann1951,hoeffding1970,stout1984,cohen1985,peres1992,pae2006} or a
stationary Markov chain~\citep{elias1972,blum1986}.
Some samplers~\citep{elias1972,peres1992,cicalese2006}
produce a variable-length output instead of a single
fixed-length output, where the number of outputs is random and is
determined by the realized values of the coin flips produced by the entropy source.

Another generalization allows the sampler to produce approximate
samples from $\bp$ up to a given statistical error tolerance, which is
investigated by~\citet{vembu1995,han1993} in the asymptotic regime.
In the non-asymptotic setting, \citet{saad2020popl} show how to find an
$m$-type approximation $\hat{\bp}$ to a given distribution $\bp$ with
$n$ outcomes that achieves minimal
approximation error in terms of any $f$-divergence
using $O(n)$ space and $O(n\log{n})$ time,
generalizing the
results of \citet{Bocherer2016} who considered the total variation and
Kullback--Leibler divergence.
For an irrational distribution $\bp$, our method could be used in
conjunction with the method in \citet{saad2020popl} by first
finding an optimal $m$-type approximation $\hat{\bp}$.
However, in this setting, the denominator $m$ may as well be chosen to be a power of two,
where the usual entropy-optimal \citeauthor{knuth1976} method
on $\hat{\bp}$ achieves linearithmic space.
Therefore, our algorithm is most useful for exactly and efficiently
sampling arbitrary $m$-type distributions.

Finally, several authors develop techniques to reduce the
\textit{amortized} entropy cost of random sampling across a large
number of outputs, either by generating outputs in batches or
by recycling a fraction of the entropy used to generate each output.
Examples include
the distribution generating (DG) tree method in \citet[\S4]{knuth1976},
the iterative interval method in \citet[\S{V}]{han1997},
the randomness extraction method in \citet[\S{2.3}]{devroye2020},
the restart protocols of \citet{kozen2018},
the dice-rolling algorithm in \citet{shao2025},
and the randomness recycling method of \citet{draper2026}.
While we focus on the non-amortized (single-sample) entropy cost, our
method is compatible with general randomness recycling algorithms for DDG
trees that reduce the amortized (multi-sample) entropy cost of random
sampling~\citep[\S5.3]{draper2026}.

\Cref{sec:preliminaries} outlines mathematical preliminaries.
\Cref{sec:fldr} reviews the \textit{Fast Loaded Dice Roller} ($\fldr$) sampler and
provides new results on the tightness of its entropy bound.
\Cref{sec:aldr,sec:properties} introduce the ALDR family of samplers and
characterize its space, time, and entropy complexity.
\Cref{sec:implementation} gives an implementation of $\aldr$ using fast
integer arithmetic and empirically evaluates the algorithm, showing
performance improvements over the widely used \citeauthor{walker1977} alias
method~\citep{walker1977}.
\Cref{sec:remarks} concludes with closing remarks.

\section{Preliminaries}
\label{sec:preliminaries}

To speak precisely about the time, space,
and entropy cost of random sampling algorithms in a realistic model of
computation, we use the ``random bit model'' of random variate
generation~\parencites[Chapter 15]{devroye1986}{knuth1976}.
In contrast to the real RAM model,
which assumes the entropy source provides i.i.d.~uniforms over the real interval $[0,1]$,
the random bit model assumes that the entropy source
provides i.i.d.~fair coin flips,
which implies that a random sampling algorithm can be formally
understood as a binary decision tree.

\subsection{Discrete Distribution Generating Trees}
\label{sec:preliminaries-ddg}

\Citet{knuth1976} introduce \textit{discrete distribution generating} (DDG) trees,
a universal representation of any random sampling algorithm
that maps a sequence of coin flips from an entropy source to an
outcome in $\nat$.
Abstractly, a DDG tree is any partial function
$T: \set{0,1}^* \rightharpoonup \mathbb{N}$ whose domain is a prefix-free set
of finite-length bit strings.
The concrete execution semantics of $T$ can be understood in terms of its
representation as a full binary tree, where each binary string $\bb \in \mathrm{dom}(T)$
in the domain of $T$ indexes a path from the root node to a leaf node labeled $T(\bb)$.
Using this representation, a random number is generated from $T$ as
follows: starting from the root, a coin flip $b \sim \mathrm{Bernoulli}(1/2)$
is drawn from the entropy source.
If $b=0$ (resp.~$b=1$), then the left (resp.\ right) child is visited.
This process repeats until reaching a leaf node, whose label is returned
as the generated sample.
The tree $T$ defines a probability distribution if and only if it is
exhaustive, i.e., $\sum_{b \in \dom(T)}2^{-\abs{b}} = 1$, where
$\abs{b}$ denotes the length of the binary string $b$.
\Cref{fig:ddg-tree-examples} shows four examples of DDG trees.

For any DDG tree $T$, the number of leaves
with label $i \geq 0$ at depth $d \geq 0$ is denoted
$\leaves{T}{d}{i}$.
The partial function $T:\set{0,1}^* \rightharpoonup \mathbb{N}$
can be lifted to a function defined almost-everywhere on $\set{0,1}^\nat$
(i.e., the set of all infinite length binary sequences from the entropy source)
by letting $T(\bb \bc) \defeq T(\bb)$ for all $\bb \in \dom(T)$ and
$\bc \in \set{0,1}^\nat$.
In this way, $T$ is formally understood as an $\nat$-valued discrete random
variable on the standard probability space $([0,1], \mathcal{B}_{[0,1]}, \Pr)$
with \textit{output distribution} $\bp_T \defeq (p_{T,1}, p_{T,2}, \dots)$,
whose probabilities are
\begin{align}
p_{T,i} \defeq \Pr(T = i)
= \sum_{d=0}^{\infty} \leaves{T}{d}{i} 2^{-d}
&& (i \in \nat).
\label{eq:ddg-tree-probs}
\end{align}
The \textit{entropy consumption} of $T$, denoted $\cost{T}$, is a discrete random
variable over $\nat$ that counts the random number of coin flips used by
$T$ to generate an output in a given simulation, whose distribution is
\begin{align}
\Pr(\cost{T} = d) &= \sum_{i=1}^{n} \leaves{T}{d}{i} 2^{-d}.
\label{eq:ddg-tree-cost-dist}
\end{align}
The \textit{entropy cost} is the expected entropy consumption:
\begin{align}
\expect{\cost{T}} &= \sum_{d=0}^{\infty} d \sum_{i=1}^{n} \leaves{T}{d}{i} 2^{-d}.
\label{eq:ddg-tree-cost-avg}
\end{align}
The \textit{entropy toll}
\begin{align}
\toll{T} \defeq \expect{\cost{T}} - H(\bp_T) \geq 0
\label{eq:ddg-tree-toll}
\end{align}
of $T$ is the difference between the entropy cost of
$T$ and the Shannon entropy of its output distribution $\bp_T$.
The Shannon entropy is a tight lower bound on the entropy cost by
Shannon's source coding theorem~\citep{shannon1948}.
\Cref{table:notation} summarizes these notations and other
symbols used throughout the paper.

\begin{table*}[t]
\caption{Overview of notation.}
\label{table:notation}
\begin{adjustbox}{max width=\linewidth}
\begin{tabular}{|llll|}
\hline
\textbf{Symbol}
& \textbf{Description}
& \textbf{Definition}
& \textbf{Reference}
\\\hline\hline
$(a_1, \dots, a_n)$
& coprime integer weights of target distribution
& $a_i \geq 1$; $\gcd(a_1, \dots, a_n) = 1$
& \Cref{problem:sampling}
\\
$m$
& sum of integer weights of target distribution
& $m=a_1+\dots+a_n$
& \Cref{problem:sampling}
\\
$p_i$
& target probability of outcome $i \in \set{1, \dots, n}$
& $p_i \defeq a_i/m$
& \Cref{problem:sampling} \rule{0pt}{2ex}
\\
$\bp$
& target probability distribution
& $\bp \defeq (p_1,\dots,p_n)$
& \Cref{problem:sampling}
\\\hline
$\Ho{\cdot}$
& weighted information content function
& $\Ho{x} \defeq x\log(1/x)$
& \Cref{theorem:ky-toll} \rule{0pt}{2ex}
\\
$\Hb{\cdot}$
& binary entropy function
& $\Hb{x} \defeq \Ho{x} + \Ho{1-x}$
& \Cref{theorem:topsoe}
\\
$H(\cdot)$
& Shannon entropy function
& $H(\bp) \defeq \sum_{i=1}^n \Ho{p_i}$
& Page~\pageref{sec:introduction}
\\ \hline
$\set{0,1}^*$
& set of all finite-length binary strings
& $\set{0,1}^* \defeq \bigcup_{n=0}^\infty \set{0,1}^n$
& \Cref{sec:preliminaries-ddg} \rule{0pt}{2.5ex}
\\
$T$
& discrete distribution generating (DDG) tree
& $T \in (\set{0,1}^*\rightharpoonup\nat)$
& \Cref{sec:preliminaries-ddg}
\\
$\leaves{T}{d}{i}$
& number of leaves in $T$ at depth $d$ with label $i$
& ~
& Page~\pageref{eq:ddg-tree-probs}
\\
$\bp_T$
& output distribution of $T$
& $p_{T,i} \defeq \sum_{d=0}^\infty \ell_T(d,i)2^{-d}$
& \Cref{eq:ddg-tree-probs}
\\
$\cost{T}$
& entropy consumption of $T$
& $\Pr(\cost{T} = d) \defeq \sum_{i=1}^{n} \leaves{T}{d}{i} 2^{-d}$
& \Cref{eq:ddg-tree-cost-dist}
\\
$\toll{T}$
& entropy toll of $T$
& $\toll{T} \defeq \expect{\cost{T}} - H(\bp_T)$
& \Cref{eq:ddg-tree-toll}
\\
$\epsd{\cdot}$
& $d$\textsuperscript{th} bit in binary expansion
& $\epsd{x} \defeq \floor{2^d x} \bmod 2$ for $x \in \real$
& \Cref{theorem:knuth-yao}
\\
$\nuu{\cdot}$
& ``new'' entropy function
& $\nuu{x} \defeq \sum_{d=0}^{\infty}d\epsd{x}2^{-d}$
& \Cref{theorem:knuth-yao}
\\
~
& ~
& $\nuu{g} \defeq \sum_{d=D}^{\infty} d g_d z^d$ for $g \in \real((z))$
& \Cref{definition:nu-entropy}
\\
$\trel{\cdot}$
& relative entropy toll contribution
& $\trel{x} \defeq (\nu(x)-\Ho{x}) / x$
& \Cref{definition:relative-toll}
\\
$\ky[\bp]$
& DDG tree of \citeauthor{knuth1976} entropy-optimal sampler
& ~
& \Cref{theorem:knuth-yao}
\\ \hline
$(A_1, \dots, A_n)$
& integer weights of target distribution
& $A_i = c a_i$
& \Cref{sec:fldr} \rule{0pt}{2ex}
\\
$\fldr[A_1,\dots,A_n]$
& DDG tree of Fast Loaded Dice Roller sampler
& \Cref{alg:fldr}
& \Cref{sec:fldr}
\\
$\fldr[\bp]$
& minimum-depth $\fldr$ DDG tree for $\bp$
& $\fldr[\bp] \equiv \fldr[a_1,\dots,a_n]$
& \Cref{remark:fldr-scaling}
\\
$k$
& depth of $\fldr[\bp]$ tree
& $k \defeq \ceil{\log(m)}$ (i.e., $2^{k-1} < m \leq 2^k$)
& \Cref{problem:sampling}
\\
$a_0$
& integer weight of reject outcome for $\fldr[\bp]$
& $a_0 \defeq 2^k-m$
& \Cref{eq:proposal-distribution}
\\
$M$
& sum of integer weights
& $M=cm=A_1+\dots+A_n$
& \Cref{sec:fldr}
\\
$K$
& depth of $\fldr[A_1,\dots,A_n]$ tree
& $K = \ceil{\log(M)}$ (i.e., $2^{K-1} < M \leq 2^K$)
& \Cref{problem:sampling}
\\
$A_0$
& weight of reject outcome for $\fldr[A_1,\dots,A_n]$
& $A_0 \defeq 2^K-M$
& \Cref{eq:amplified-proposal-distribution}
\\
$q_i$
& proposal probability of outcome $i \in \set{0, \dots, n}$
& $q_i \defeq A_i/2^K$
& \Cref{eq:proposal-distribution}
\\
$\bq$
& proposal distribution for $\fldr[A_1,\dots,A_n]$
& $\bq \defeq (q_0,\dots,q_n)$
& \Cref{eq:proposal-distribution}
\\
$\trelfldr{A_i, M}$
& relative $\fldr$ toll contribution
& $\begin{aligned}[t]
  &\trelfldr*{A_i, M} \defeq
  \trel*{A_i / 2^K}\\[-5pt]
  &+ (2^K/M)
    \left( \Ho*{M/2^K} + \nu(1 - M/2^K) \right)
  \end{aligned}$
& \Cref{definition:fldr-toll}
\\ \hline
$\aldr[\bp,K]$
& DDG tree of Amplified Loaded Dice Roller sampler
& \cref{alg:aldr}
& \Cref{sec:aldr-main-idea} \rule{0pt}{2ex}
\\
$K$
& amplification depth parameter for $\aldr[\bp,K]$
& $K \geq k$
& \Cref{sec:aldr-main-idea}
\\
$c_K$
& amplification factor for $\aldr[\bp, K]$
& $c \equiv c_K \defeq \floor{2^K/m}$
& \Cref{sec:aldr-main-idea}
\\ \hline
$\real[[z]]$
& ring of formal power series
& $\real[[z]] \defeq \set*{\sum_{d=0}^{\infty} g_d z^d \mid g_0,g_1,\ldots \in \real}$
& \Cref{theorem:generating-fn} \rule{0pt}{2.5ex}
\\
$\real((z))$
& ring of formal Laurent series
& $\real((z)) \defeq \set*{\sum_{d=D}^{\infty} g_d z^d \mid D \in \integers, g_d \in \real}$
& \Cref{definition:nu-entropy}
\\
$z \mapsto 1/2$
& series evaluation
& $\left[ \sum_{d=D}^{\infty} g_d z^d \right]_{z \mapsto 1/2} \defeq \sum_{d=D}^{\infty} g_d 2^{-d}$
& \Cref{lemma:nu-derivation}
\\
$\log(\cdot)$
& base $2$ logarithm
& $\log(x) \defeq \ln(x) / \ln(2)$
& ~
\\ \hline\hline
\end{tabular}
\end{adjustbox}
\end{table*}

\subsection{Entropy-Optimal DDG Trees}

\Citet{knuth1976} settle the problem of constructing a DDG tree sampler for
any distribution $\bp$ whose entropy cost is minimal among the
class of all DDG trees with output distribution $\bp$.

\begin{theorem}[{\Citet[Theorem 2.1]{knuth1976}}]
\label{theorem:knuth-yao}
Let $\bp \defeq (p_1, \dots, p_n)$ denote a discrete probability
distribution over $n$ outcomes.
Let
$\mathcal{T}(\bp) = \set{T : \set{0,1}^{*} \rightharpoonup \nat \mid \bp_T = \bp}$
denote the set of all DDG trees whose output distribution is $\bp$.
The following statements regarding a DDG tree $T \in \mathcal{T}(\bp)$ are equivalent.
\begin{enumerate}[label={\labelcref{theorem:knuth-yao}.\arabic*)}]
\item For all $d \geq 0$, $T$ minimizes the probability of consuming more than
$d$ coin flips, in the sense that
for all $T' \in \mathcal{T}(\bp)$
\begin{align*}
\Pr(\cost{T} > d) \leq \Pr(\cost{T'} > d).
\end{align*}

\item For each label $i=1,\dots,n$ and depth $d \geq 0$,
the number of leaf nodes with label $i$ at depth $d$ of $T$ satisfies
$\leaves{T}{d}{i} = \epsd{p_i} \in \set{0,1}$, where
$\epsd{x} \defeq \floor{2^d x} \bmod 2$ is the $d$th bit
in the (concise) binary expansion of $x \in [0,1]$.

\item The entropy cost of $T$ is minimal:
\begin{align*}
\expect{\cost{T}}
&= \min\set{ \expect{\cost{T'}} \mid T' \in \mathcal{T}(\bp)} \\
&= \nu(p_1) + \dots + \nu(p_n),
\end{align*}
where $\nu(x) \defeq \sum_{d=0}^{\infty} d \epsd{x}2^{-d}$ is the ``new''
entropy function.
\qedhere
\end{enumerate}
\end{theorem}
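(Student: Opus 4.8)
The plan is to recast all three conditions as bookkeeping about one quantity. For a tree $T \in \mathcal{T}(\bp)$ and depth $d \geq 0$, let
\[
r_d(T) \defeq 2^d - \sum_{i=1}^{n}\sum_{d'=0}^{d}\leaves{T}{d'}{i}\,2^{d-d'},
\]
the number of length-$d$ bit strings not ``covered'' by a leaf at depth at most $d$. Because $\dom(T)$ is prefix-free these coverings are disjoint, so $r_d(T)$ is a nonnegative integer; since $\Pr(\cost{T} \leq d) = \sum_{d'\leq d}\sum_i \leaves{T}{d'}{i}2^{-d'}$ one gets $\Pr(\cost{T} > d) = r_d(T)\,2^{-d}$, and the tail-sum identity $\expect{\cost{T}} = \sum_{d\geq 0}\Pr(\cost{T} > d)$, valid for $\nat$-valued random variables, gives $\expect{\cost{T}} = \sum_{d\geq 0} r_d(T)\,2^{-d}$. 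Thus the first statement asks that $r_d(T)$ be minimal over $\mathcal{T}(\bp)$ for every $d$, the third that $\sum_d r_d(T)2^{-d}$ be minimal, and the second is the pointwise identity $\leaves{T}{d}{i} = \epsilon_d(p_i)$; the theorem is the claim that these coincide, together with the value of the common optimum.

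The engine is a per-depth lower bound. Put $s_d \defeq 2^d - \sum_{i=1}^n \floor{2^d p_i}$, a nonnegative integer since $\sum_i \floor{2^d p_i} \leq 2^d$; writing $\floor{2^d p_i} = 2\floor{2^{d-1}p_i} + \epsilon_d(p_i)$ gives the level recursion $s_d = 2 s_{d-1} - \sum_i \epsilon_d(p_i)$ for $d \geq 1$, with $s_0 = 1 - \sum_i \floor{p_i} \in \set{0,1}$. I would prove $r_d(T) \geq s_d$ for every $T \in \mathcal{T}(\bp)$ and $d$: for each fixed $i$, $\sum_{d'\leq d}\leaves{T}{d'}{i}2^{-d'} \leq \bp_{T,i} = p_i$ since the omitted terms are nonnegative, so $\sum_{d'\leq d}\leaves{T}{d'}{i}2^{d-d'}$ is an integer bounded above by $2^d p_i$, hence by $\floor{2^d p_i}$; summing over $i$ and subtracting from $2^d$ yields $r_d(T) \geq s_d$. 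Consequently $\Pr(\cost{T} > d) \geq s_d 2^{-d}$ and $\expect{\cost{T}} \geq \sum_d s_d 2^{-d}$ for every $T \in \mathcal{T}(\bp)$.

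Next I would construct a tree $\ky[\bp]$ attaining equality at every depth, by greedily peeling off leaves level by level: the root is the unique depth-$0$ node, a leaf iff $s_0 = 0$; given $s_{d-1}$ internal nodes at depth $d-1$, their $2 s_{d-1}$ children at depth $d$ are taken to include $2s_{d-1} - s_d = \sum_i \epsilon_d(p_i)$ leaves---with label $i$ assigned to $\epsilon_d(p_i) \in \set{0,1}$ of them---and $s_d$ internal nodes. The recursion shows the leaf count lies in $\set{0,\dots,2s_{d-1}}$ (using $s_d \geq 0$), and whenever some $p_i$ still has a nonzero bit beyond position $d-1$ one has $\sum_i \floor{2^{d-1}p_i} < 2^{d-1}$, i.e., $s_{d-1}\geq 1$, so the process never stalls while leaves remain. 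The result is prefix-free, exhaustive ($\sum_d\sum_i \epsilon_d(p_i)2^{-d} = \sum_i p_i = 1$), satisfies $\leaves{\ky[\bp]}{d}{i} = \epsilon_d(p_i)$ hence has output distribution $\bp$, and has $r_d(\ky[\bp]) = s_d$ for all $d$. This step is the crux: optimizing the tail probability at one fixed $d$ alone is trivial, and the content of the theorem is that the greedy strategy is \emph{globally} consistent---exactly what the level recursion, together with positivity of $s_{d-1}$ whenever leaves remain, guarantees.

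Finally I would close the cycle using $\ky[\bp]$ as witness. For \emph{second $\Rightarrow$ first}: a tree with $\leaves{T}{d}{i}=\epsilon_d(p_i)$ has $r_d(T) = s_d$, which the lower bound shows is minimal for each $d$. For \emph{first $\Rightarrow$ third}: minimality of $r_d(T)2^{-d}$ for all $d$ forces $\expect{\cost{T}} = \sum_d r_d(T)2^{-d}$ to its minimum $\sum_d s_d 2^{-d}$, and this optimum is evaluated by summing $\expect{\cost{\ky[\bp]}} = \sum_d d\sum_i \epsilon_d(p_i)2^{-d} = \nu(p_1)+\dots+\nu(p_n)$. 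For \emph{third $\Rightarrow$ second}: if $\expect{\cost{T}}$ is minimal then $\sum_d(r_d(T)-s_d)2^{-d} = \expect{\cost{T}} - \sum_d s_d 2^{-d} = 0$ with nonnegative summands, so $r_d(T) = s_d$ for all $d$; tracing this equality back through the lower bound forces each per-$i$ inequality $\sum_{d'\leq d}\leaves{T}{d'}{i}2^{d-d'}\leq\floor{2^d p_i}$ to be an equality, and differencing the cases $d$ and $d-1$ gives $\leaves{T}{d}{i} = \floor{2^d p_i} - 2\floor{2^{d-1}p_i} = \epsilon_d(p_i)$. Apart from the crux in the third paragraph, everything is routine manipulation of binary expansions.
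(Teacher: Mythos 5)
Your proof is correct: the level-by-level count of uncovered length-$d$ strings, the integrality lower bound $r_d(T) \geq 2^d - \sum_i \floor{2^d p_i}$, the greedy construction attaining it, and the equality-tracing for the converse direction all hold up, and together they establish the three-way equivalence and the value $\nu(p_1)+\dots+\nu(p_n)$. The paper itself states this theorem as a citation of \citet{knuth1976} without proof, and your argument is essentially a faithful reconstruction of the original Knuth--Yao proof, so there is nothing to reconcile beyond minor polish (e.g., noting that leaves with labels outside the support of $\bp$ cannot occur in $\mathcal{T}(\bp)$, and that the minimum $\sum_d s_d 2^{-d}$ is finite so the subtraction in your third-to-second step is legitimate).
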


Following this theorem, the entropy-optimal DDG tree
can be constructed from the binary expansions of the probabilities $p_i$.
\Citeauthor{knuth1976} also characterize the entropy cost of
entropy-optimal DDG trees.

\begin{theorem}[{\Citet[Theorem 2.2 and Corollary]{knuth1976}}]
\label{theorem:ky-toll}
The entropy cost $\expect{\cost{T}}$ of any entropy-optimal DDG
tree $T$ for $\bp$ satisfies
$H(\bp) \leq \expect{\cost{T}} < H(\bp) + 2$;
and these bounds are tight.
\end{theorem}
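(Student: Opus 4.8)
The plan is to work directly from the characterization in \cref{theorem:knuth-yao}.3, namely that for an entropy-optimal tree $T$ the expected cost equals $\nu(p_1) + \dots + \nu(p_n)$, where $\nu(x) = \sum_{d=0}^{\infty} d\,\epsilon_d(x)2^{-d}$. Since the Shannon entropy also decomposes additively as $H(\bp) = \sum_i H_1(p_i)$ with $H_1(x) = x\log(1/x)$, it suffices to control the single-term gap $\nu(x) - H_1(x)$ and the single-term lower bound $\nu(x) \geq H_1(x)$, and then sum over $i$. So the whole problem reduces to two pointwise facts about a real number $x \in (0,1)$ with binary expansion bits $\epsilon_d(x)$.

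First I would establish $\nu(x) \geq H_1(x)$ for each $i$, which gives the lower bound $\expect{\cost{T}} \geq H(\bp)$; this also follows immediately from Shannon's source coding theorem as the excerpt notes, so I would just cite that. For the upper bound, the key computation is to bound $\sum_i (\nu(p_i) - H_1(p_i))$ by $2$. The natural approach is to write $\nu(x) = \sum_{d \geq 1} d\,\epsilon_d(x) 2^{-d}$ and compare it against the ``ideal'' codeword-length contribution. Here I would use the standard trick of comparing the binary expansion of $x$ against the Shannon code: group the terms by the positions where $\epsilon_d(x) = 1$, and use that the partial sums of $2^{-d}$ over the 1-bits telescope toward $x$. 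Concretely, if the 1-bits of $x$ occur at positions $d_1 < d_2 < \cdots$, then $\nu(x) = \sum_j d_j 2^{-d_j}$, and one shows term-by-term that $d_j 2^{-d_j}$ is dominated (after summing) by $H_1$ of the corresponding ``dyadic piece'' plus a controlled slack. Summing the slack across all $i$ and all dyadic pieces yields a geometric-type series bounding the total by $2$, and strictness comes from the fact that equality in the per-term comparison would force $\bp$ itself to be dyadic with each $p_i$ a single power of two — but even then the bound is $< 2$ because the sum $\sum_d d\, 2^{-d}$ type expressions never reach the extreme.

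For tightness of the bounds — showing they are "the best possible" — I would exhibit families of distributions approaching each endpoint. For the lower bound $H(\bp)$, any dyadic distribution (e.g. $\bp = (1/2, 1/2)$) achieves $\expect{\cost{T}} = H(\bp)$ exactly. For the upper bound $H(\bp) + 2$, I would construct a sequence of distributions whose toll $\toll{T} = \nu(p_1) + \dots + \nu(p_n) - H(\bp)$ tends to $2$; the canonical example is a two-outcome distribution $\bp = (2^{-\ell} - \delta, 1 - 2^{-\ell} + \delta)$ or a distribution supported on a long ``tail'' of dyadic rationals, where the binary expansions are long and the optimal tree is deep, driving $\nu$ far above $H_1$. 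I would verify that the toll of such a family converges to $2$ from below.

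The main obstacle I anticipate is the pointwise upper bound $\sum_i (\nu(p_i) - H_1(p_i)) < 2$: unlike the additive lower bound, this is not a consequence of a clean convexity or Jensen argument applied termwise, because $\nu(x) - H_1(x)$ is not itself bounded by a constant for a single $x$ (it can be large when $x$ is small). The bound only holds after summing, using that the $p_i$ sum to $1$, so the argument must carefully track how the ``deep binary expansion'' cost of small probabilities is paid for by the small weight $p_i$ multiplying it. Getting the bookkeeping right — pairing each 1-bit of each $p_i$ with the right portion of the entropy and showing the leftover telescopes to something $< 2$ — is the delicate part; I would look to the structure of the DDG tree (each internal node at depth $d$ contributing $2^{-d}$ of "unassigned" probability mass that must eventually be partitioned) to organize this as a single accounting over all nodes of $T$.
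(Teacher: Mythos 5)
There is a genuine gap, and it sits exactly where you flag your ``main obstacle.'' Your premise that $\nu(x) - H_1(x)$ ``is not itself bounded by a constant for a single $x$'' and that ``the bound only holds after summing, using that the $p_i$ sum to $1$'' is false, and because of it you never supply the one lemma the proof actually needs. The paper's proof (following Knuth--Yao, Theorem~2.2) is precisely a termwise bound: for every $x > 0$ one has $0 \leq (\nu(x) - H_1(x))/x < 2$, i.e.\ $\nu(p_i) - H_1(p_i) < 2p_i$ for each $i$, and then
\begin{equation}
\toll{T} \;=\; \sum_{i=1}^{n} p_i \left[\frac{\nu(p_i) - H_1(p_i)}{p_i}\right] \;<\; 2\sum_{i=1}^{n} p_i \;=\; 2,
\end{equation}
so the summation over $i$ is trivial once the pointwise fact is in hand. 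Small $x$ is not the dangerous regime: the relative toll $\trel{x} \defeq (\nu(x)-H_1(x))/x$ is invariant under bit shifts, $\trel{2^a x} = \trel{x}$ (\cref{lemma:bit-shift-relative-toll}), so a tiny probability costs no more per unit mass than its shifted copy in $[1/2,1)$; the supremum $2$ is approached as $x \to 1^-$ with expansion $0.11\cdots1$, which is what the bound in \cref{lemma:relative-toll-bound-linear,corollary:relative-toll-bound-constant} controls. Your replacement argument --- the telescoping/slack accounting over 1-bits --- is never carried out, and your strictness claim is also off: forcing each $p_i$ to be a single power of two puts you at toll $0$ (the dyadic, entropy-exact case), not near the upper extreme, so it cannot be the source of strict inequality. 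Without establishing $\trel{x} < 2$ (or an equivalent pointwise inequality), the upper bound is unproved.

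The remaining pieces of your proposal are fine in spirit: the lower bound via Shannon's source coding theorem (or termwise $\nu(x) \geq H_1(x)$, since $\log(1/x)$ is at most the position of the leading $1$-bit) matches the paper, and for tightness the dyadic case gives $\expect{\cost{T}} = H(\bp)$ exactly, while the family $\bp = (2^{-a}, 1-2^{-a})$ with $\delta = 0$ has toll $2 - 2^{1-a} - H_{\rm b}(2^{-a}) \to 2$, which is the standard witness that $H(\bp)+2$ cannot be improved. But these do not repair the missing central lemma.
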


\begin{proof}[Proof (Sketch)]
\Cref{eq:ddg-tree-toll} and \cref{theorem:knuth-yao} give a simple
expression for the entropy toll of the entropy-optimal DDG tree $T$
with output distribution $\bp_T = \bp$:
\begin{align*}
\toll{T}
  \defeq \expect{\cost{T}} - H(\bp)
  = \sum_{i=1}^{n}p_i \left[{\frac{\nu(p_i) - \Ho{p_i}}{p_i}}\right],
\end{align*}
where $\Ho{x} \defeq x \log(1/x)$, and by continuity $\Ho{0} \defeq 0$
(the ${}_1$ subscript indicates that the function is applied to a
single value, rather than a distribution.)
\Citet[Theorem 2.2]{knuth1976} show that
$0 \leq ({\nu(x) - \Ho{x}})/{x} < 2$
for any $x>0$ (cf.~\cref{corollary:relative-toll-bound-constant}),
which implies that the entropy toll satisfies
$0 \leq \toll{T} < 2$.
\end{proof}

\begin{remark}
\label{remark:toll-of-dist}
The notation $\ky[\bp]$
denotes any entropy-optimal DDG tree with output distribution $\bp$.
Every DDG tree $T$ satisfies $\toll{\ky[\bp_T]} \leq \toll{T}$, with
equality if and only if $T$ is entropy optimal.
The gap $\toll{T} - \toll{\ky[\bp_T]} \geq 0$ characterizes the
entropy inefficiency of $T$ relative to the entropy-optimal sampler.
\end{remark}

\paragraph*{Exponential Space Complexity}
\Citet[Theorem 3.5]{saad2020popl} show that
any entropy-optimal DDG tree $T$ for $\bp$
has a finite representation with at most $m$ levels, and that this
bound is tight \citep[Theorem 3.6]{saad2020popl} for infinitely many
target distributions $\bp$ \citep[Remark 3.7]{saad2020fldr} (assuming Artin's
conjecture~\citep{hooley1967} on primitive roots).
Because  $n\log(m)$ bits are needed to encode the $m$-type distribution $\bp$,
a DDG tree with $m$ levels is exponentially large in the input size.
For example, any finite representation of the entropy-optimal DDG tree for
the $\mathrm{Binomial}(50, 61/500)$ distribution has roughly
$5.6\times{10^{104}}$ levels.%
\footnote{The $\mathrm{Binomial}(50, 61/500)$ distribution has denominator
$m = 500^{50} = 2^{100} \cdot 5^{150}$.
Let $\ell=o_{5^{150}}(2)$ be the order of $2$ modulo $5^{150}$, i.e., the smallest
positive integer such that $2^\ell \equiv 1 \pmod{5^{150}}$.
Then, the binary expansions of the probabilities do not repeat until
$100 + \ell$ bits after the binary point,
so a minimal explicit DDG tree for the distribution has $101+\ell$ levels.
Because $o_5(2) = 4$, and $5$ appears only once in the prime factorization of
$2^4 - 1 = 15$, the lifting-the-exponent lemma shows that
$o_{5^r}(2) = 4 \cdot 5^{r-1}$ for all $r \geq 1$.
In particular, $\ell = o_{5^{150}}(2) = 4 \cdot 5^{149}$.
Therefore, the number of levels is $101+4 \cdot 5^{149} \approx 5.6 \times 10^{104}$.}
More generally, a 64-bit machine can natively represent $m$-type
distributions where $m \approx 2^{64}$, highlighting the enormous resources
in practice required by any algorithm whose space complexity is
order $m$.

\subsection{Rejection Sampling}
\label{sec:rejection-sampling}

The rejection method of \citet{neumann1951}
produces exact samples from
$\bp \defeq (p_1, \dots, p_n)$
by using a proposal distribution $\bq \defeq (q_0, q_1, \dots, q_n)$
over a larger domain,
for which there exists a finite bound $B \geq 1$ that satisfies
$p_i \leq B q_i$ for $i=0,\dots,n$
(with $p_0 \defeq 0$).
A sample $I \sim \bq$ is first generated from $\bq$ and then accepted with
probability $\alpha(I) \defeq p_I/(Bq_I)$;
otherwise it is rejected and the process repeats.
The probability of accepting in any given trial is $1/B$.
The number of trials until acceptance follows a geometric
distribution with parameter $1/B$.

\paragraph*{Entropy Cost}

Let $T_\bq$ be any DDG tree with output distribution $\bq$ and let $T_i$ be
entropy-optimal DDG trees for $\mathrm{Bernoulli}(\alpha(i))$, i.e.,
$p_{T_i} = (1-\alpha(i), \alpha(i))$ for $i=0,\dots,n$.
The overall DDG tree $R$ of the rejection sampler has the same structure as
$T_\bq$, where each leaf labeled $i$ in $T_\bq$ is replaced with a new subtree
$T'_i$ in $R$ for the accept-reject decision.
Each tree $T'_i$ is derived from $T_i$, where leaves with label $0$
in $T_i$ are replaced in $T'_i$ with a back edge to the root of $R$ (reject);
and leaves with label $1$ in $T_i$ are relabeled to $i$ in $T'_i$ (accept).
By memorylessness,
the overall entropy cost of $R$ is the expected number of trials times
the per-trial entropy cost:
\begin{align}
\expect{\cost{R}} &= B
  \left(
    \expect{\cost{T_\bq}}
    + \sum_{i=0}^{n}\expect{\cost{T_i}}q_i
  \right),
  \label{eq:expected-cost-rejection}
\end{align}
where $\expect{\cost{T_i}} \leq 2$, as analyzed in \cref{remark:bernoulli-nu}.

\paragraph*{Choice of Proposal}
In light of \cref{eq:expected-cost-rejection},
we argue that, when rejection sampling in the random bit model, it is
natural to consider proposal distributions $\bq$ that have
the following two properties:
\begin{itemize}
\item The probabilities in $\bq$ are dyadic rationals.
\item The acceptance probabilities satisfy $\alpha(i) = 1$ for each $i=1,\dots,n$.
\end{itemize}

The first property is designed to keep $\expect{\cost{T_\bq}}$
in \cref{eq:expected-cost-rejection} small:
if each probability $q_i$ is dyadic with
denominator $2^K = \textrm{poly}(m)$, then we can construct an entropy-optimal
DDG tree $T_\bq \equiv \ky[\bq]$ for $\bq$
whose size is polynomial in the input size $\bp$.
Using nondyadic probabilities, on the other hand, provides no such guarantee.

The second property ensures that $\expect{\cost{T_i}}$ in
\cref{eq:expected-cost-rejection} is zero for each $i=0,\dots,n$, since
each probability $\alpha(i) \in \set{0,1}$ defines a
deterministic accept-reject decision.

The rejection samplers developed in this work will
use proposal distributions $\bq$ that exhibit
these two properties, where we will
seek to minimize the rejection bound $B$ to ensure the overall
entropy cost in \cref{eq:expected-cost-rejection} is small.

\section{Fast Loaded Dice Roller}
\label{sec:fldr}

\begin{listing*}[t]
\setlength{\abovedisplayskip}{0pt}
\setlength{\belowdisplayskip}{0pt}
\captionsetup{hypcap=false}
\begin{minipage}[t]{.495\linewidth}
\begin{algorithm}[H]
\caption{Fast Loaded Dice Roller (Sketch)}
\label{alg:fldr}
\begin{algorithmic}[1]
\Require{List $(A_1, \dots, A_n)$ of positive integers}
\Ensure{Random integer $i$ with probability
  $$p_i \defeq A_i/(A_1+\ldots+A_n)\;\; (1 \leq i \leq n)$$}

  \State Let $M \defeq A_1 + \dots + A_n$

  \State Let $K \defeq \ceil{\log(M)}$

  \State Define the proposal distribution
    \label{item:phorometer}
    \begin{align*}
    \bq \defeq ((2^K - M)/2^K, A_1/2^K, \dots, A_n/2^K)
    \end{align*}

  \State Generate $i \sim \bq$, using the
    entropy-optimal sampler $\ky[\bq]$
    described in \cref{theorem:knuth-yao}
    \label{item:Larix}

  \State If $i = 0$ then go to \cref{item:Larix}; else return $i$
    \label{item:blastosphere}
\end{algorithmic}
\end{algorithm}
\end{minipage}\hfill
\begin{minipage}[t]{.495\linewidth}
\begin{algorithm}[H]
\caption{Amplified Loaded Dice Roller (Sketch)}
\label{alg:aldr}
\begin{algorithmic}[1]
\Require{List $(a_1, \dots, a_n)$ of coprime positive integers;
Amplification rule $r: k \mapsto K$, e.g., $r(k)=2k$
}
\Ensure{Random integer $i$ with probability
  $$p_i \defeq a_i/(a_1+\ldots+a_n)\;\; (1 \leq i \leq n)$$}
\State Let $m \defeq a_1 + \dots + a_n$
\State Let $k \defeq \ceil{\log(m)}$
\State Let $K \defeq r(k)$
\State Let $c \defeq \floor{2^K/m}$
\State Let $A_i \defeq ca_i$, $i=1,\dots,n$
\State Call $\fldr$ (\cref{alg:fldr}) with $(A_1, \dots, A_n)$
\end{algorithmic}
\end{algorithm}
\end{minipage}

\bigskip


\end{listing*}

The Fast Loaded Dice Roller ($\fldr$) \citep{saad2020fldr} is an efficient
rejection sampling algorithm based on the aforementioned idea, whose memory
scales linearithmically with the input size.
Recalling \cref{problem:sampling}, we momentarily assume that $\bp$
is specified by an arbitrary list
$(A_1,\dots,A_n)$ of $n$ positive integers that need not be coprime,
with sum $M$.
$\fldr$ operates as follows:
let $K \defeq \ceil{\log(M)}$, i.e., $2^{K-1} < M \leq 2^K$
and let
$\bq \defeq (q_0, q_1, \dots, q_{n})$ be a proposal
distribution
over $n+1$ outcomes, whose probabilities are dyadic with
\begin{align}
A_{0} \defeq 2^K-M,
&&
q_{i} \defeq A_i / 2^K
&& ( 0 \le i \le n).
\label{eq:proposal-distribution}
\end{align}
The DDG tree of the FLDR sampler---denoted $\fldr[A_1,\dots,A_n]$---is
identical to the entropy-optimal DDG tree $\ky[\bq]$,
except that every leaf with the reject label $0$
becomes a back edge to the root.
Because $\bq$ is dyadic, the tree $\ky[\bq]$ has $K$ levels, avoiding
exponential growth of the depth with the size of $\bp$.

In terms of rejection sampling, the tightest rejection bound
is $B \defeq 2^K/M$; i.e., it is the smallest number that satisfies
$p_i \leq B q_i$ for $i=0,1,\dots,n$, because
$p_i = A_i/M = 2^K/M \cdot A_i/2^K = Bq_i$
($i=1,\dots,n$) and $p_{0} = 0 \leq Bq_{0}$.
It follows that $i \sim \bq$ is accepted with probability $p_i/(Bq_i) = 1$
if $i \in \set{1,\dots,n}$, and $i=0$ is rejected.
This property means that the Bernoulli subtree cost is $\expect{\cost{T_i}} = 0$
in \cref{eq:expected-cost-rejection}, with all entropy used only to sample the
proposal, and in particular,
\begin{equation}
\expect{\cost{\fldr[A_1,\dots,A_n]}} = \frac{2^K}{M}\,\expect{\cost{\ky[\bq]}}.
\label{eq:fldr-cost}
\end{equation}
\Cref{alg:fldr} gives an overview of the $\fldr$ sampler.
\Cref{fig:ddg-ky-fldr} shows examples of the underlying DDG trees.

\begin{figure}[t]
\tikzset{sibling distance=2pt, level distance=10pt}
\centering
\begin{subfigure}[t]{.33\linewidth}
\centering
\begin{adjustbox}{valign=t}
\begin{tikzpicture}
\Tree[.\node[branch,name=root]{}; [ [ [ \node[leaf,name=back]{\phantom{0}}; 1 ] 1 ] 4 ] 4 ]
\draw[red,-latex,out=110,in=180] (back.center) to (root.south);
\end{tikzpicture}
\end{adjustbox}
\end{subfigure}%
\begin{subfigure}[t]{.33\linewidth}
\centering
\begin{adjustbox}{valign=t}
\begin{tikzpicture}
\Tree[.\node[branch,name=root]{}; [ [ \textcolor{red}{R} 1 ] \textcolor{red}{R} ] 4 ]
\end{tikzpicture}
\end{adjustbox}
\end{subfigure}%
\begin{subfigure}[t]{.25\linewidth}
\centering
\begin{adjustbox}{valign=t}
\begin{tikzpicture}
\Tree[.\node[branch,name=root]{}; [ [  \node[leaf,name=b2]{\phantom{0}}; 1 ] \node[leaf,name=b1]{\phantom{0}}; ] 4 ]
\draw[red,-latex] (b1.center) to[out=150,in=180] (root.south);
\draw[red,-latex] (b2.center) to[out=110,in=180] (root.south);
\end{tikzpicture}
\end{adjustbox}
\end{subfigure}

\begin{subfigure}[t]{.33\linewidth}
\caption{\begin{tabular}[t]{@{}c@{}}$\ky[\bp]$\end{tabular}}
\end{subfigure}%
\begin{subfigure}[t]{.33\linewidth}
\caption{\begin{tabular}[t]{@{}c@{}}$\ky[\bq]$\end{tabular}}
\end{subfigure}
\begin{subfigure}[t]{.25\linewidth}
\caption{\begin{tabular}[t]{@{}c@{}}$\fldr[\bp]$\end{tabular}}
\end{subfigure}
\caption{Comparison of DDG trees for $\bp = (1/5, 4/5)$
with $\fldr$ proposal $\bq = (3/8, 1/8, 4/8)$
}
\label{fig:ddg-ky-fldr}
\end{figure}

\begin{remark}
\label{remark:fldr-scaling}
The explicit notation $\fldr[A_1,\dots,A_n]$ is necessary because the space
and entropy costs of \cref{alg:fldr} depend on the scaling of
the inputs $(a_1,\dots,a_n)$ that define the $m$-type distribution
$\bp$ (i.e., these integers need not be coprime).
The notation $\fldr[\bp]$ is defined as $\fldr[a_1,\dots,a_n]$, where
$(a_1,\dots,a_n)$ is the unique list of coprime positive integers that define
$\bp = (a_1/m, \dots, a_n/m)$ and
$m \defeq a_1 + \dots + a_n$ is the smallest integer
for which $\bp$ is $m$-type.
Henceforth, the ``entropy cost (or depth or toll) of $\fldr$'' shall
refer to that of $\fldr[\bp]$ unless otherwise noted.
\end{remark}

\paragraph*{FLDR Space and Entropy Bound}

\Citet[Theorem 5.1]{saad2020fldr} prove that the tree $\fldr[A_1,\dots,A_n]$ has
at most $2(n+1)\ceil{\log(M)} = O(n\log(M))$ nodes, matching the input size
($n \ceil{\log(M)}$ bits to represent $n$ $\ceil{\log(M)}$-bit integers).
Because each label $i \in \set{0,1,\dots,n}$ at a leaf requires
$\ceil{\log(n+1)}$ bits, the overall space complexity is $n\log(M)\log(n)$,
which is linearithmic in the input size.
In addition, \citeauthor{saad2020fldr} prove the following bound on the
entropy cost.

\begin{theorem}[{\Citet[Theorem 5.1]{saad2020fldr}}]
\label{theorem:fldr-bound}
The toll of the FLDR sampler for any distribution $\bp$ is less than $6$.
That is, the entropy cost of FLDR satisfies
$H(\bp) \leq \expect{\cost{\fldr[A_1,\dots,A_n]}} < H(\bp) + 6$.
\end{theorem}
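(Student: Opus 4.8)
The plan is to chain the exact cost identity \cref{eq:fldr-cost} with the \citeauthor{knuth1976} toll bound of \cref{theorem:ky-toll} and one short calculus estimate. Throughout, write $B \defeq 2^K/M$ for the rejection bound; since $2^{K-1} < M \le 2^K$ we have $1 \le B < 2$, and the FLDR proposal satisfies $q_i = A_i/2^K = p_i/B$ for $i = 1, \dots, n$ while $q_0 = 1 - M/2^K = 1 - 1/B$.

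First I would apply \cref{theorem:ky-toll} to the entropy-optimal proposal sampler, giving $\expect{\cost{\ky[\bq]}} < H(\bq) + 2$, so that \cref{eq:fldr-cost} yields
\begin{align}
\expect{\cost{\fldr[A_1,\dots,A_n]}} = B\expect{\cost{\ky[\bq]}} < B\,H(\bq) + 2B .
\end{align}
Next I would re-express $B\,H(\bq)$ in terms of $H(\bp)$: splitting $H(\bq) = H_1(q_0) + \sum_{i=1}^n H_1(q_i)$, substituting $H_1(q_i) = H_1(p_i/B) = (1/B)(H_1(p_i) + p_i\log B)$, and using $\sum_{i=1}^n p_i = 1$, one obtains $\sum_{i=1}^n H_1(q_i) = (1/B)(H(\bp) + \log B)$, hence $B\,H(\bq) = B\,H_1(q_0) + H(\bp) + \log B$. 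Subtracting $H(\bp)$ gives
\begin{align}
\toll{\fldr[A_1,\dots,A_n]} = \expect{\cost{\fldr[A_1,\dots,A_n]}} - H(\bp) < B\,H_1(q_0) + \log B + 2B .
\end{align}
Performed per outcome rather than in aggregate, the same algebra produces the structured identity $\toll{\fldr[A_1,\dots,A_n]} = \sum_{i=1}^n p_i\,\trelfldr{A_i,M}$, the FLDR analogue of \cref{eq:toll-expectation}, which could be used equivalently.

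It then remains to bound the right-hand side by $6$ termwise: $\log B < 1$ and $2B < 4$ because $B < 2$; and $B\,H_1(q_0) < 1$. The last inequality is the only nontrivial point. Writing $g(s) \defeq s\log(1/s)/(1-s)$, we have $B\,H_1(q_0) = g(q_0)$ with $q_0 \in [0, 1/2)$ — indeed $q_0 < 1/2$ whenever $M < 2^K$, while $q_0 = 0$ makes the term vanish — and $g(s) < 1$ on $[0, 1/2)$: this follows by checking that $\phi(s) \defeq -s\ln s - (1-s)\ln 2$ has derivative $\ln(2/s) - 1 > 0$ on $[0, 1/2]$ and satisfies $\phi(1/2) = 0$, so $\phi(s) < 0$, i.e.\ $s\log(1/s) < 1-s$, throughout $[0, 1/2)$. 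Summing, $\toll{\fldr[A_1,\dots,A_n]} < 6$. The lower bound $H(\bp) \le \expect{\cost{\fldr[A_1,\dots,A_n]}}$ is immediate from Shannon's source coding theorem: the rejection sampler is exact, so by \cref{eq:ddg-tree-toll} its toll is nonnegative.

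I expect the estimate $B\,H_1(q_0) < 1$ to be the main obstacle, because it is exactly what forces the constant to be $6$ rather than something slightly larger: the crude alternative $B\,H_1(q_0) \le 2\,H_1(1/e) \approx 1.06$ (twice the global maximum of $H_1$ on $[0, 1/2]$) would already break the bound. So the monotonicity argument pinning $g(s) < 1$ on the full interval $(0, 1/2)$ — with the value $1$ only approached as $s \to 1/2$ — is the crux; everything else is bookkeeping around the cost identity.
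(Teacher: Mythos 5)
Your proof is correct and takes essentially the same route as the paper's: you recover the same three-summand bound $\toll{\fldr[A_1,\dots,A_n]} < B\,H_1(q_0) + \log B + B\,\toll{\ky[\bq]}$ (the paper reaches it via the chain-rule identity $H(\bq) = H(\bp)M/2^K + H_{\rm b}(M/2^K)$, you by substituting $q_i = p_i/B$ directly) and then bound the terms by $1$, $1$, and $4$ exactly as in \citet{saad2020fldr}. The only addition is that you spell out the estimate $B\,H_1(q_0) = q_0\log(1/q_0)/(1-q_0) < 1$ for $q_0 \in [0,1/2)$, which the paper takes from the cited prior work; your monotonicity argument for it is correct.
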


\begin{proof}
\Citet{saad2020fldr} prove that the entropy toll is
\begin{align}
\begin{aligned}
&\toll{\fldr[A_1,\dots,A_n]} \\
&= \begin{aligned}[t]
  \log\left(\frac{2^K}{M}\right)
  &+ \frac{2^K - M}{M} \log\left(\frac{2^K}{2^K - M}\right) \\
  &+ \frac{2^K}{M} \toll{\ky[\bq]}
  \end{aligned}
\end{aligned}
\label{eq:fldr-toll-original}
\end{align}
and note that the three summands in \cref{eq:fldr-toll-original} are bounded by
1, 1, and 4, respectively, which establishes \cref{theorem:fldr-bound}.
We present an alternative proof of \cref{eq:fldr-toll-original}
directly in terms of entropy, which will be useful in future sections.
Let $E$ be the event that an ``accept'' outcome
in $\set{1,\dots,n}$ is obtained from a single draw from
the proposal $\bq$, so that
$\Pr(E) = q_{1} + \cdots + q_{n} = M/2^K$.
The conditional entropy of $\bq$ given $E$
is $H(\bq|E) = H(\bp)$, and $H(\bq|E') = 0$.
Using the chain rule for conditional entropy to write $H(\bq)$ in terms of
$H(\bp)$ and the acceptance probability $M/2^K$ gives
\begin{align}
H(\bq)
&= H(\bq|E) \Pr(E) \begin{aligned}[t]&+ H(\bq|E')\Pr(E') \\ &+ H(E) \end{aligned} \notag \\
&= H(\bp)M/2^K + \Hb{M/2^K}.
\label{eq:q-entropy-decomposition}
\end{align}
Combining the toll expressions \cref{eq:fldr-cost,eq:ddg-tree-toll} with the
proposal entropy decomposition \cref{eq:q-entropy-decomposition} gives
\begin{align}
&\toll{\fldr[A_1,\dots,A_n]} \notag \\
&= \left(2^K/M\right) \left[ H(\bq)+\toll{\ky[\bq]} \right] - H(\bp) \notag \\
&= \left(2^K/M\right) \begin{aligned}[t]
  \big[ H(\bp)M/2^K
  &+ \Hb{M/2^K}  \notag \\
  &+\toll{\ky[\bq]} \big] \notag \\
  &- H(\bp)
  \end{aligned} \notag \\
&= \left(2^K/M\right) \left[ \toll{\ky[\bq]} + \Hb{M / 2^K} \right],
\label{eq:fldr-toll-binary-entropy}
\end{align}
and expanding the binary entropy yields \cref{eq:fldr-toll-original}.
\end{proof}

The specific toll depends on the input $(A_1,\dots,A_n)$.
If $M \in \set{2^K, 2^K-1}$, then FLDR is entropy optimal in the sense of
\cref{theorem:knuth-yao}.
If $M = 2^{K-1}+1$, then FLDR
obtains its worst-case rejection probability
$1-1/B = (2^K-M)/2^K = 1/2 - 2^{-K} \approx 1/2$.

\paragraph*{FLDR Entropy Bound is Tight}

\Cref{theorem:fldr-is-tight} extends the result of
\citet{saad2020fldr} by demonstrating a sequence of probability
distributions for which the entropy cost of FLDR approaches
$H(\bp) + 6$ exponentially quickly in the depth $k$,
which proves that the upper bound in \cref{theorem:fldr-bound} is tight.
The following bound on the binary entropy function is useful.

\begin{proposition}[{\Citet[Theorem 1.1]{topsoe2001}}]
\label{theorem:topsoe}
The binary entropy $\Hb{p} \defeq p \log(1/p) + (1-p) \log(1/(1-p))$
satisfies, for each $p \in [0,1]$,
\begin{align*}
\ln(2) \log(p) \log(1-p) &\leq \Hb{p} \leq \log(p) \log(1-p).
\end{align*}
When $p = 0$ or $p = 1$, these expressions are interpreted as limits,
each with value zero.
\end{proposition}

\begin{theorem}[Tightness of FLDR toll bound]
\label{theorem:fldr-is-tight}
There exists a sequence of rational discrete distributions
$\bp_2, \bp_3, \dots$ whose entropy tolls satisfy
$6 - \toll{\fldr[\bp_k]} = O(k2^{-k})$, for each FLDR tree depth $k \geq 2$.
\end{theorem}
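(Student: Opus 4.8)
The plan is to pick an explicit family $\set{\bp_k}$ and read off $\toll{\fldr[\bp_k]}$ from the exact toll expression \cref{eq:fldr-toll-original}. Its three summands $\log(2^k/m)$, $\tfrac{2^k-m}{m}\log\tfrac{2^k}{2^k-m}$, and $\tfrac{2^k}{m}\toll{\ky[\bq]}$ are individually at most $1$, $1$, and $4$ (these are exactly the bounds used to prove \cref{theorem:fldr-bound}), so it suffices to choose $\bp_k$ making all three simultaneously approach their maxima at rate $O(k2^{-k})$. The first two summands tend to $1$ precisely when the FLDR rejection probability $(2^k-m)/2^k$ tends to $1/2$, i.e.\ $m=2^{k-1}+1$; since this also forces $2^k/m\to 2$, the third summand then requires $\toll{\ky[\bq]}\to 2$, so the dyadic proposal $\bq$ must be nearly worst case for the \citeauthor{knuth1976} bound of \cref{theorem:ky-toll}.

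Concretely, I would take the coprime weights $\bp_k\defeq(a_1,a_2)=(2^{k-1}-1,\ 2)$ for $k\ge 2$; these are coprime because $2^{k-1}-1$ is odd, and $m=a_1+a_2=2^{k-1}+1$ satisfies $\ceil{\log(m)}=k$. By \cref{eq:proposal-distribution} the proposal then has reject weight $A_0=2^k-m=2^{k-1}-1$, so $\bq=(q_0,q_1,q_2)$ with $q_0=q_1=(2^{k-1}-1)/2^k=1/2-2^{-k}$ and $q_2=2^{-(k-1)}$. Here $q_2$ is a power of two, hence $\nu(q_2)=H_1(q_2)$ and $\trel{q_2}=0$ by \cref{theorem:knuth-yao}; while $q_0=q_1$ has binary expansion $0.0\,\underbrace{1\cdots1}_{k-1}$, so \cref{theorem:knuth-yao} gives the closed form $\nu(q_0)=\sum_{d=2}^{k}d\,2^{-d}=3/2-(k+2)2^{-k}$. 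Combining this with the elementary estimate $\log(1/q_0)=1-\log(1-2^{1-k})=1+O(2^{-k})$, which gives $H_1(q_0)=q_0\log(1/q_0)=1/2+O(2^{-k})$, yields $q_0\,\trel{q_0}=\nu(q_0)-H_1(q_0)=1-O(k2^{-k})$. Since $\toll{\ky[\bq]}=\sum_{i=0}^{2}q_i\,\trel{q_i}=2\,q_0\,\trel{q_0}$, we obtain $\toll{\ky[\bq]}=2-O(k2^{-k})$. (The $q_0$-computation can be shortened using the easily verified identity $\trel{x/2}=\trel{x}$, which reduces $\trel{q_0}$ to the ``all-ones'' value $\trel{1-2^{-(k-1)}}$.)

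Finally I would substitute $m=2^{k-1}+1$ into \cref{eq:fldr-toll-original}: elementary logarithm estimates give $\log(2^k/m)=1-O(2^{-k})$, $\tfrac{2^k-m}{m}\log\tfrac{2^k}{2^k-m}=1-O(2^{-k})$, and $\tfrac{2^k}{m}=2-O(2^{-k})$, so the third summand is $\bigl(2-O(2^{-k})\bigr)\bigl(2-O(k2^{-k})\bigr)=4-O(k2^{-k})$; summing the three gives $\toll{\fldr[\bp_k]}\ge 6-O(k2^{-k})$, while $\toll{\fldr[\bp_k]}<6$ by \cref{theorem:fldr-bound}. Hence $6-\toll{\fldr[\bp_k]}=O(k2^{-k})$, the finitely many small $k$ being absorbed into the implied constant. (If one prefers to work from the equivalent form \cref{eq:fldr-toll-binary-entropy}, the bound $H_{\rm b}(p)\le\log(p)\log(1-p)$ of \cref{theorem:topsoe} can replace some of these logarithm estimates.)

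The step I expect to be the main obstacle is the evaluation of $\toll{\ky[\bq]}$, specifically obtaining an upper bound on $H_1(q_0)$ sharp enough that $\nu(q_0)-H_1(q_0)=1-O(k2^{-k})$ rather than merely $1-o(1)$: because this quantity is multiplied by $2^k/m\approx 2$, any slack there degrades the final rate. Everything else — checking coprimality and $\ceil{\log(m)}=k$, evaluating $\sum_{d=2}^{k}d\,2^{-d}$ in closed form, the routine logarithm estimates, and propagating the $O(k2^{-k})$ error through one product — is straightforward bookkeeping.
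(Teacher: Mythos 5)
Your proposal is correct and uses essentially the same construction as the paper: the identical witness family $\bp_k = \bigl((2^{k-1}-1)/m,\,2/m\bigr)$ with $m = 2^{k-1}+1$, and the same central computation $\nu\bigl((2^{k-1}-1)/2^k\bigr) = \sum_{d=2}^{k} d\,2^{-d} = 3/2-(k+2)2^{-k}$. The only difference is bookkeeping — you route the estimate through the three-summand decomposition \cref{eq:fldr-toll-original} with elementary logarithm bounds on $H_1(q_0)$, whereas the paper evaluates $\expect{\cost{\ky[\bq_k]}}$ directly, multiplies by $2^k/m$, and bounds $H(\bp_k)$ via \cref{theorem:topsoe} — and your flagged concern about the sharpness of the $H_1(q_0)$ bound indeed works out to the required $O(k2^{-k})$ rate.
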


\begin{proof}
\begin{figure}
\tikzset{sibling distance=1pt, level distance=10pt}
\centering
\begin{subfigure}[t]{.5\linewidth}
\centering
\begin{adjustbox}{valign=t}
\begin{tikzpicture}[remember picture]
\Tree[.\node[branch,name=root-ft4]{}; [ \node[leaf,name=back4-ft4]{\phantom{0}}; [ \node[leaf,name=back8-ft4]{\phantom{0}}; [ \node[leaf,name=back16-ft4]{\phantom{0}}; 7 ] ] ] [ 7 [ 7 2 ] ] ]
\draw[red,-latex,out=110,in=180] (back16-ft4.center) to (root-ft4.south);
\draw[red,-latex,out=110,in=180] (back8-ft4.center) to (root-ft4.south);
\draw[red,-latex,out=110,in=180] (back4-ft4.center) to (root-ft4.south);
\end{tikzpicture}
\end{adjustbox}
\end{subfigure}\hfill
\begin{subfigure}[t]{.5\linewidth}
\centering
\begin{adjustbox}{valign=t}
\begin{tikzpicture}[remember picture]
\Tree[.\node[branch,name=root-ft5]{}; [ \node[leaf,name=back4-ft5]{\phantom{0}}; [ \node[leaf,name=back8-ft5]{\phantom{0}}; [ \node[leaf,name=back16-ft5]{\phantom{0}}; [ \node[leaf,name=back32-ft5]{\phantom{0}}; 15 ] ] ] ] [ 15 [ 15 [ 15 2 ] ] ] ]
\draw[red,-latex,out=110,in=180] (back32-ft5.center) to (root-ft5.south);
\draw[red,-latex,out=110,in=180] (back16-ft5.center) to (root-ft5.south);
\draw[red,-latex,out=110,in=180] (back8-ft5.center) to (root-ft5.south);
\draw[red,-latex,out=110,in=180] (back4-ft5.center) to (root-ft5.south);
\end{tikzpicture}
\end{adjustbox}
\end{subfigure}
\begin{subfigure}[t]{.5\linewidth}
\caption{$\toll{\fldr[2,7]} \approx 3.90$}
\end{subfigure}%
\begin{subfigure}[t]{.5\linewidth}
\caption{$\toll{\fldr[2,15]} \approx 4.77$}
\end{subfigure}%

\bigskip

\begin{subfigure}[t]{\linewidth}
\centering
\begin{tikzpicture}[remember picture]
\Tree[.\node[branch,name=root-ft6]{}; [ \node[leaf,name=back4-ft6]{\phantom{0}}; [ \node[leaf,name=back8-ft6]{\phantom{0}}; [ \node[leaf,name=back16-ft6]{\phantom{0}}; [ \node[leaf,name=back32-ft6]{\phantom{0}}; [ \node[leaf,name=back64-ft6]{\phantom{0}}; 31 ] ] ] ] ] [ 31 [ 31 [ 31 [ 31 2 ] ] ] ] ]
\draw[red,-latex,out=110,in=180] (back64-ft6.center) to (root-ft6.south);
\draw[red,-latex,out=110,in=180] (back32-ft6.center) to (root-ft6.south);
\draw[red,-latex,out=110,in=180] (back16-ft6.center) to (root-ft6.south);
\draw[red,-latex,out=110,in=180] (back8-ft6.center) to (root-ft6.south);
\draw[red,-latex,out=110,in=180] (back4-ft6.center) to (root-ft6.south);
\end{tikzpicture}
\caption{$\toll{\fldr[2,31]} \approx 5.31$}
\end{subfigure}
\caption{$\fldr$ trees with tolls rapidly approaching $6$ bits,
as constructed in the proof of \cref{theorem:fldr-is-tight}.}
\label{fig:fldr-tight}
\end{figure}

For $k \ge 2$, set $\bp_k \defeq \left((2^{k-1} - 1)/m, 2/m\right)$ with
$m \defeq \left(2^{k-1} + 1\right)$.
Combining $\log(1+x) \leq x/\ln(2)$ with \cref{theorem:topsoe} yields
a bound on the entropy:
\begin{align*}
H(\bp_k)
&= \Hb{\frac{2}{2^{k-1} + 1}} \\
&\leq \log\left(\frac{2^{k-1} + 1}{2}\right) \log\left(\frac{2^{k-1} + 1}{2^{k-1} - 1}\right) \\
&< (k-1) \frac{2}{2^{k-1} - 1} \frac{1}{\ln(2)}
< \frac{k-1}{2^{k-4}}.
\end{align*}
\Cref{theorem:knuth-yao} shows that the entropy cost of the
entropy-optimal sampler for the proposal distribution
\begin{align*}
\bq_k \defeq \left(\frac{2^{k-1}-1}{2^k}, \frac{2^{k-1}-1}{2^k}, \frac{2}{2^k}\right)
\end{align*}
is given by
\begin{align*}
\expect{\cost{\ky[\bq_k]}}
&= \nuu{\frac{2}{2^k}} + 2\nuu{\frac{2^{k-1} - 1}{2^k}} \\
&= \frac{k-1}{2^{k-1}} + 2 \sum_{i=2}^{k} \frac{i}{2^i} \\
&= \frac{k-1}{2^{k-1}} + 3 - \frac{k+2}{2^{k-1}} \\
&= 3 \left(\frac{2^{k-1} - 1}{2^{k-1}}\right).
\end{align*}
Applying \cref{eq:fldr-cost} gives
\begin{align*}
  \expect{\cost{\fldr[\bp_k]}}
  &= \frac{2^k}{m} \expect{\cost{\ky[\bq_k]}} \\
  &= 6\left(\frac{2^{k-1} - 1}{2^{k-1} + 1}\right).
\end{align*}
Therefore, the toll is bounded as
\begin{align*}
\toll{\fldr[\bp_k]}
&> 6\left(\frac{2^{k-1} - 1}{2^{k-1} + 1} \right) - \frac{k-1}{2^{k-4}} \\
&> 6 - \frac{12}{2^{k-1}} - \frac{2k-2}{2^{k-3}} \\
&= 6 - \frac{2k + 1}{2^{k-3}},
\end{align*}
which approaches $6$ exponentially quickly as $k$ grows.
The bound in \cref{theorem:fldr-bound} is thus tight (cf.~\cref{fig:fldr-tight}).
\end{proof}

\section{Amplified Loaded Dice Roller}
\label{sec:aldr}

The main contribution of this article is a parameterized
family of rejection samplers called the
\textit{Amplified Loaded Dice Roller} ($\aldr$), which exploits the
sensitivity of $\fldr[A_1,\dots,A_n]$ to the
scaling of the integer weights that define $\bp$.
These weights can be scaled in such a way that the entropy cost becomes
strictly less than $H(\bp)+2$, while maintaining the linearithmic space
complexity of the FLDR algorithm.

\subsection{Main Idea}
\label{sec:aldr-main-idea}

Recalling \cref{problem:sampling}, where $(a_1, \dots, a_n)$
are coprime with sum $m$,
the proposal distribution $\bq \defeq (q_0, \dots, q_n)$
in \cref{eq:proposal-distribution}
used by $\fldr[a_1,\dots,a_n]$ has denominator $2^k$,
where $k \defeq \ceil{\log(m)}$.
This proposal can be generalized to a dyadic proposal
$\bq_K \defeq (q_{K,0}, q_{K,1}, \dots, q_{K,n})$
whose denominator is $2^K$ for some integer $K \geq k$.
To retain the desirable property from $\fldr$
that the new acceptance probabilities are
$p_i/(Bq_{K,i}) \in \set{0,1}$, it is necessary and sufficient
to scale the target weights
$(a_1, \dots, a_n)$ by an integer $c \geq 1$:
\begin{align}
A_0 &\defeq 2^K-cm, \,
A_i \defeq ca_i
&& (i = 1,\dots,n),
\\
q_{K,i} &\defeq A_i / 2^K
&& (i = 0,1,\dots,n).
\label{eq:amplified-proposal-distribution}
\end{align}
\Cref{eq:amplified-proposal-distribution} defines a probability distribution
if and only if $M \defeq cm \leq 2^K$ (i.e., $c \leq 2^K/m$), which gives
the following family of proposals whose entropy-optimal samplers
have depth at most $K$:
\begin{align}
\bq_{K,c} \defeq
  \left(\frac{2^K - cm}{2^K}, \frac{ca_1}{2^K}, \dots, \frac{ca_n}{2^K}\right),
\label{eq:amplified-proposal-distribution-family}
\end{align}
for $c = 1, 2, \dots, \floor{2^K/m}$.
Within this family, setting $c = \floor{2^K/m} \eqdef c_K$ is the
optimal choice for maximizing the acceptance probability $cm/2^K$.
We thus define $\bq_K \defeq \bq_{K,c_K}$ or simply $\bq \defeq \bq_K$ when $K$
is a constant or clear from context.
Invoking \cref{alg:fldr} with the amplified weights $(A_1, \dots, A_n)$
in \cref{eq:amplified-proposal-distribution}
gives the $\aldr$ method.
\Cref{alg:aldr} shows the resulting family of rejection samplers, which
take as input coprime positive integers $(a_1,\dots,a_n)$ and an
amplification rule $r: \mathbb{N} \to \mathbb{N}$ that maps the original
($\fldr$) depth $k$ to an amplified depth $K \geq k$.
The DDG tree of \cref{alg:aldr} is denoted $\aldr[\bp, K]$, where $K$
is an expression that involves $k$, e.g., $K = 2k$, or a constant
$K \geq k$ when there is a fixed $k$ under consideration.
Following \citet[Theorem 5.1]{saad2020fldr},
$\aldr[\bp,K]$ is a depth-$K$ DDG tree
and its total number of nodes is at most $2(n+1)\ceil{\log(M)} = 2(n+1)K$,
which is linear in the input size whenever $K = O(k)$.

\begin{remark}
In contrast to the possible ambiguity of $\fldr[\bp]$ in
\cref{remark:fldr-scaling}, $\aldr[\bp, K]$ is uniquely defined because the
inputs $(a_1,\dots,a_n)$ to \cref{alg:fldr} that define $\bp$ are coprime.
Moreover, $\fldr[\bp] \equiv \aldr[\bp, k]$.
If a target distribution is defined in terms of arbitrary weights (not
necessarily coprime), they can be made coprime by dividing them by their
greatest common divisor (GCD).
The computation of the GCD and of the divisions is linearithmic in the
input size; thus, the asymptotic time complexity of \cref{alg:aldr}
is unaffected by this step.
\end{remark}

\begin{remark}[Choice of Depth]
\label{remark:aldr-choice-of-K}
Because $2^{k-1} < m \leq 2^k$, the amplification constant
$c_K = \floor{2^K/m}$ satisfies $2^{K-k} \leq c_K < 2^{K-k+1}$.
If $c_K = 2^{K-k}$ obtains its lowest possible value,
then for $i=1,\dots,n$,
\begin{align*}
q_{K,i} = A_i/2^K = 2^{K-k}a_i/2^K = a_i/2^k = q_{k,i}
\end{align*}
i.e., the amplified proposal
$\bq_K$ is equivalent to the original proposal $\bq_k$.
To ensure that $\bq_K \neq \bq_k$, the new depth $K$
must be large enough to satisfy $2^{K-k} < c_K$:
\begin{alignat}{2}
~ & ~ & 2^K/2^k &< \floor{2^K/m}
\notag \\
\iff &  & 2^K/2^k + 1 &\leq 2^K/m
\notag \\
\iff &  & K &\geq k + \log\left({m}/{(2^k-m)}\right).
\label{eq:aldr-min-K}
\end{alignat}
If the target distribution $\bp$ satisfies
$2^{k-1} < m < 2/3 \cdot 2^k$, then \cref{eq:aldr-min-K}
holds for any choice of depth $K \geq k+1$.
For general $m \neq 2^k$, \cref{eq:aldr-min-K} holds for all $K \geq 2k$
because $\log(m/(2^k-m)) \leq \log(2^k-1) < k$.

In general, as
$c_K = \floor{2^K/m} = 2 c_{K-1} + \epsd[K]{1/m}$, the proposal
distributions satisfy
\begin{align}
\mbox{$c_K$ is even} & \begin{aligned}[t] & \iff \mbox{$c_K = 2 c_{K-1}$} \\
                                       & \iff \bq_{K} = \bq_{K-1},
                                       \end{aligned}
\label{eq:c-case-even}
\\
\mbox{$c_K$ is odd}  & \begin{aligned}[t] &\iff \mbox{$c_K = 2 c_{K-1} + 1$} \\
                                       &\iff \bq_{K} \neq \bq_{K'}\,
                                       (K'=k,\dots,K-1). \hspace{-1cm}
                                       \end{aligned}
\label{eq:c-case-odd}
\end{align}
For an intuitive understanding of this result, observe that
for fixed $(a_1,\dots,a_n)$, the ALDR proposal distributions
$\bq_K \defeq (1 - c_K m/2^K, c_K a_1/2^K, \dots, c_K a_n/2^K)$
are parameterized by
$c_K/2^K = \floor{2^K/m}/2^K$, which is $1/m$ rounded down
to the nearest multiple of $1/2^K$.
Therefore, the proposal distributions change at precisely the
depths $K$ for which the $K$th bit in the binary expansion of $1/m$ is 1.
\end{remark}

\begin{example}
Suppose $\bp = (4/19, 7/19, 8/19)$,
whose FLDR tree has depth $k = 5$.
\Cref{fig:fldr478-tolls} shows the tolls for $\aldr[\bp,K]$ trees with new
depth $K \in \set{5, \dots, 18}$, which is entropy optimal at $K=18$.
%
\Crefrange{fig:fldr478-trees-d5}{fig:fldr478-trees-d18} illustrate the
general trend that larger depth $K$ reduces the entropy cost, at
the expense of increased space.
\begin{figure}[p]

\begin{minipage}[t][\textheight][t]{\columnwidth}

\begin{subfigure}{\linewidth}
\centering
\includegraphics[width=\linewidth]{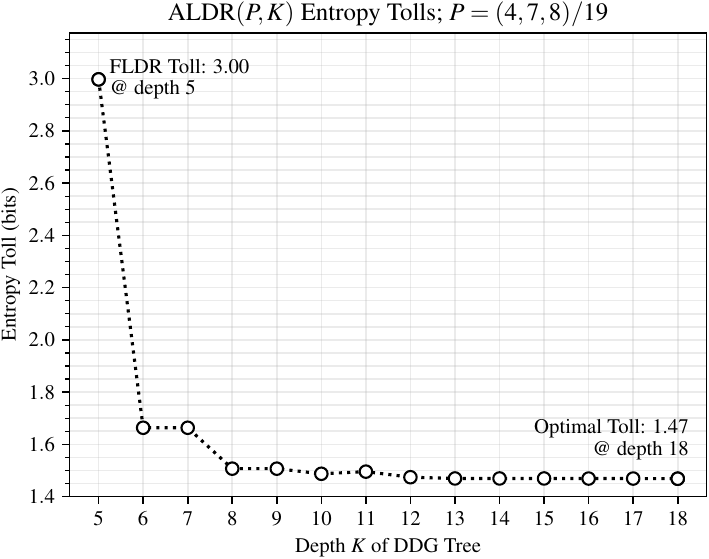}
\caption{Entropy tolls $\tau(\aldr[\bp,K]) = \expect{\cost{\aldr[\bp,K]}} - H(\bp)$ for DDG tree depth $K=5,6,\dots,18$.}
\label{fig:fldr478-tolls}
\end{subfigure}

\vfill

\captionsetup[subfigure]{justification=centering}
\centering
\newcommand{\pz}{\phantom{0}}
\tikzset{sibling distance=1pt, level distance=10pt}
\begin{subfigure}[b]{.4\linewidth}
  \centering
  \begin{adjustbox}{valign=t}
  \begin{tikzpicture}
  \Tree [.\node[branch,name=root]{}; [ [ [ [ \node[leaf,name=b1]{\pz}; 7 ] 7 ] 4 ] [ 7 \node[leaf,name=b2]{\pz}; ] ] [ 8 \node[leaf,name=b3]{\pz}; ] ]
  \draw[red,-latex,out=110,in=180] (b1.center) to (root.south);
  \draw[red,-latex,bend right] (b2.center) to (root.south);
  \draw[red,-latex,out=45,in=0] (b3.center) to (root.south);
  \end{tikzpicture}
  \end{adjustbox}
\end{subfigure}\hfill
\begin{subfigure}[b]{.5\linewidth}
  \centering
  \begin{adjustbox}{valign=t}
  \begin{tikzpicture}
  \Tree [.\node[branch,name=root]{}; [ [ [ [ [ \node[name=b1,leaf]{\pz}; 7 ] \node[name=b2,leaf]{\pz}; ] 4 ] [ 7 \node[name=b3,leaf]{\pz}; ] ] [ 4 8 ] ] [ 7 8 ] ]
  \foreach \i in {1, ..., 3} {
    \draw[red,-latex,out=110,in=180] (b\i.center) to (root.south);
  }
  \end{tikzpicture}
  \end{adjustbox}
\end{subfigure}

\begin{subfigure}[b]{.5\linewidth}
\caption{Depth $K = k = 5$\\ $\bq_5 = (13, 4, 7, 8) / 32$}
\label{fig:fldr478-trees-d5}
\end{subfigure}%
\begin{subfigure}[b]{.5\linewidth}
\caption{Depth $K = 6$ \\ $\bq_6 = (7, 12, 21, 24) / 64$}
\label{fig:fldr478-trees-d6}
\end{subfigure}

\bigskip

\begin{subfigure}{\linewidth}
  \centering
  \begin{tikzpicture}
  \Tree [.\node[branch,name=root]{}; [ [ [ [ [ [ [ \node[leaf,name=b1]{\pz}; 7 ] 7 ] 4 ] 7 ] [ 8 \node[leaf,name=b2]{\pz}; ] ] [ 4 7 ] ] [ 4 8 ] ] [ 7 8 ] ]
  \foreach \i in {1, 2} {
    \draw[red,-latex,out=110,in=180] (b\i.center) to (root.south);
  }
  \end{tikzpicture}
  \caption{Depth $K = 8$ \\ $\bq_8 = (9, 52, 91, 104) / 256$}
  \label{fig:fldr478-trees-d8}
\end{subfigure}

\bigskip

\tikzset{sibling distance=1pt, level distance=6pt}

\begin{subfigure}\linewidth
  \centering
  \begin{tikzpicture}
  \Tree [.\node[branch,name=root]{}; [ [ [ [ [ [ [ [ [ [ [ [ [ [ [ [ [ \node[leaf,name=b1]{\pz}; 7 ] 7 ] 4 ] 8 ] 4 ] 8 ] 7 ] 4 ] 4 ] [ 7 8 ] ] [ 4 8 ] ] [ 4 8 ] ] [ 7 8 ] ] [ 4 7 ] ] [ 7 8 ] ] [ 4 7 ] ] [ 4 8 ] ] [ 7 8 ] ]
  \draw[red,-latex,out=90,in=180] (b1.center) to (root.south);
  \end{tikzpicture}
  \caption{Depth $K = 18$ (Entropy Optimal) \\ $\bq_{18} = (1, 55188, 96579, 110376) / 262144$}
  \label{fig:fldr478-trees-d18}
\end{subfigure}

\vfill

\caption{DDG trees and entropy tolls of $\aldr[\bp,K]$
for $\bp = (4,7,8)/19$ and various DDG tree depths $K$.
$\aldr[\bp,5]$ coincides with $\fldr[4,7,8]$; and
$\aldr[\bp,18]$ coincides with $\ky(\bp)$ from \cref{theorem:knuth-yao}.}
\label{fig:fldr478}
\end{minipage}
\end{figure}

\end{example}

\subsection{Analysis of Entropy Cost}

\begin{proposition}
\label{prop:rejection-probability-decreasing}
The rejection probabilities of the amplified
proposals~\labelcref{eq:amplified-proposal-distribution} are monotonically
decreasing with $K$, i.e., if $\bq_{K+1} \neq \bq_{K}$ then $q_{K+1, 0} < q_{K,0}$.
\end{proposition}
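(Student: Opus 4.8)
The plan is to reduce the claim to a one‑line inequality on the amplification constants $c_K = \floor{2^K/m}$. First I would recall from \cref{eq:amplified-proposal-distribution} that the rejection probability of the depth‑$K$ proposal is $q_{K,0} = A_0/2^K = (2^K - c_K m)/2^K = 1 - c_K m/2^K$. Hence comparing $q_{K+1,0}$ with $q_{K,0}$ is the same as comparing $c_{K+1}/2^{K+1}$ with $c_K/2^K$, i.e., deciding whether $c_{K+1} > 2 c_K$.

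Next I would invoke the doubling recurrence $c_{K+1} = 2 c_K + \epsilon_{K+1}(1/m)$ from \cref{remark:aldr-choice-of-K}, where $\epsilon_{K+1}(1/m) \in \set{0,1}$ is the $(K{+}1)$th bit in the binary expansion of $1/m$. Combined with the equivalences \cref{eq:c-case-even,eq:c-case-odd}, this shows (taking contrapositives) that $\bq_{K+1} \neq \bq_{K}$ holds if and only if $c_{K+1}$ is odd, i.e., if and only if $\epsilon_{K+1}(1/m) = 1$, in which case $c_{K+1} = 2 c_K + 1$.

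Finally I would substitute this back into the closed form for $q_{K+1,0}$, obtaining $q_{K+1,0} = 1 - (2 c_K + 1) m/2^{K+1} = \left(1 - c_K m/2^K\right) - m/2^{K+1} = q_{K,0} - m/2^{K+1}$, and since $m \geq 1 > 0$ this gives $q_{K+1,0} < q_{K,0}$, as required (and more precisely the quantitative gap $q_{K,0} - q_{K+1,0} = m/2^{K+1}$, which I would record). When $\bq_{K+1} = \bq_{K}$ the identity $c_{K+1} = 2 c_K$ gives $q_{K+1,0} = q_{K,0}$, so the sequence $(q_{K,0})_{K \geq k}$ is in fact non‑increasing overall, strictly decreasing exactly at the depths where the proposal changes.

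I do not anticipate a substantive obstacle: the whole argument is a short algebraic substitution once the bookkeeping of \cref{remark:aldr-choice-of-K} is in hand. The only point requiring care is to cite the equivalence correctly — that $\bq_{K+1} \neq \bq_{K}$ is equivalent to, not merely implied by, $c_{K+1}$ being odd — which is immediate by contraposition from \cref{eq:c-case-even}.
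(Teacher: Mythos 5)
Your proposal is correct and follows essentially the same route as the paper: invoke the parity characterization in \cref{eq:c-case-odd} to conclude $c_{K+1} = 2c_K + 1$ when $\bq_{K+1} \neq \bq_K$, then substitute into $q_{K+1,0} = 1 - c_{K+1}m/2^{K+1}$ to get the strict decrease. The only addition beyond the paper's argument is your explicit gap $q_{K,0} - q_{K+1,0} = m/2^{K+1}$, which is a harmless refinement.
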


\begin{proof}
If $\bq_{K+1} \neq \bq_K$, then
$c_{K+1} = 2 c_{K} + 1$
from \cref{eq:c-case-odd}, which implies that
\begin{align*}
q_{K+1,0}
\defeq 1 - \frac{m c_{K+1}}{2^{K+1}}
&= 1- \frac{m (2c_K + 1)}{2^{K+1}}\\
&< 1- \frac{m 2c_K}{2^{K+1}} \\
&= 1- \frac{m c_K}{2^{K}}
\eqdef
q_{K,0}.
\qedhere
\end{align*}
\end{proof}

\begin{proposition}
\label{proposition:amplified-reject-prob}
In the amplified proposal $\bq_K$ from \cref{eq:amplified-proposal-distribution},
the reject outcome has probability
\begin{equation*}
q_{K,0} < \frac{1}{2^{K-k} + 1}.
\qedhere
\end{equation*}
\end{proposition}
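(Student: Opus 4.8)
The plan is to rewrite the reject probability $q_{K,0}$ explicitly as a remainder and then reduce the claimed inequality to an elementary arithmetic fact about that remainder.

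First I would set $r \defeq 2^K - c_K m$, where $c_K = \floor{2^K/m}$; thus $r$ is the remainder of $2^K$ upon division by $m$, so that $0 \leq r < m$, and by the definition of the amplified proposal we have $q_{K,0} = 1 - c_K m/2^K = r/2^K$. Clearing denominators (both $2^K$ and $2^{K-k}+1$ are positive), the target inequality $q_{K,0} < 1/(2^{K-k}+1)$ is equivalent to $r\,(2^{K-k}+1) < 2^K$, which, upon substituting $2^K = c_K m + r$, simplifies to the single inequality $r\cdot 2^{K-k} < c_K m$.

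To finish I would invoke the lower bound $c_K \geq 2^{K-k}$ from \cref{remark:aldr-choice-of-K} (a consequence of $m \leq 2^k$ together with $K \geq k$), and combine it with $r < m$ to obtain $r\cdot 2^{K-k} < m\cdot 2^{K-k} \leq c_K m$, as required. The only point needing a word of care is the strictness of $r < m$: this holds because $r$ is a remainder modulo $m$, and in the degenerate case $r = 0$ (which occurs only when $m$ is a power of two) the bound $q_{K,0} = 0 < 1/(2^{K-k}+1)$ is immediate. There is no substantive obstacle here — the one idea is expressing $q_{K,0}$ as $r/2^K$ with $r < m$ and feeding in the already-established estimate $c_K \geq 2^{K-k}$.
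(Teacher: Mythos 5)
Your proof is correct. Writing $q_{K,0}=r/2^K$ with $r \defeq 2^K \bmod m$, the equivalence $q_{K,0} < 1/(2^{K-k}+1) \iff r\cdot 2^{K-k} < c_K m$ is right (substituting $2^K = c_K m + r$ and cancelling $r$), and the conclusion follows from $r < m$ together with $c_K \geq 2^{K-k}$, which is exactly the lower bound recorded in \cref{remark:aldr-choice-of-K}; the aside about $r=0$ is harmless but unnecessary, since $r<m$ holds for every remainder. Your route differs in organization from the paper's proof, which splits into two cases according to whether $m/2^k$ lies above or below $1-1/(2^{K-k}+1)$: in the first case it identifies $c_K = 2^{K-k}$ (so $\bq_K = \bq_k$) and bounds $1-m/2^k$ directly, and in the second it bounds $q_{K,0} < m/2^K$ and invokes the case hypothesis. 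The same two ingredients ($r<m$ and $c_K \geq 2^{K-k}$) appear in both arguments, but your single cleared-denominator inequality handles all cases uniformly — it even covers the boundary value $m/2^k = 2^{K-k}/(2^{K-k}+1)$, which the paper's two strict cases formally omit (that value in fact cannot occur). What the paper's case analysis buys instead is the structural observation that in the first regime amplification does not change the proposal, and it sets up the remark that the bound is tight along $m=\floor{2^K/(2^{K-k}+1)}$, which your argument does not address (and is not required by the statement).
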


\begin{proof}
There are two cases on $m/2^k$ to consider:
\begin{enumerate}[label={Case \arabic*.}]
\item
Suppose that
\begin{equation*}
\frac{m}{2^k} > 1 - \frac{1}{2^{K-k} + 1},
\end{equation*}
so $\bq_K = \bq_k$.
The rejection probability is
\begin{equation*}
q_{K,0} = 1 - {m}/{2^k} < {1}/{(2^{K-k} + 1)}.
\end{equation*}
\item
Suppose that
\begin{equation*}
\frac{m}{2^k} < 1 - \frac{1}{2^{K-k} + 1} = \frac{2^{K-k}}{2^{K-k} + 1}.
\end{equation*}
The rejection probability is
\begin{align*}
q_{K,0}
&= \left(2^K - \floor{2^K / m} m\right)/2^K \\
&= \left(2^K \bmod m\right)/2^K \\
&< {m}/{2^K} \\
&= 2^{k-K} \ {m}{/2^k} \\
&< 1/\left(2^{K-k} + 1\right).
\end{align*}
\end{enumerate}

In either case, $q_{K,0} < 1/(2^{K-k} + 1)$, and
this bound is tight along
$m = \floor{2^K/(2^{K-k} + 1)}$ as $k \to \infty$.
\end{proof}

\begin{theorem}[Generic bound on toll of ALDR]
\label{theorem:aldr-generic-bound}
The entropy cost of the $\aldr[\bp, K]$ sampler satisfies
$\expect{\cost{\aldr[\bp, K]}} < H(\bp) + 2 + (4 + K-k)2^{k-K}$.
\end{theorem}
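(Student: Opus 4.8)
The plan is to reduce everything to an upper bound on the toll $\toll{\aldr[\bp,K]} = \expect{\cost{\aldr[\bp,K]}} - H(\bp)$, since the lower bound $H(\bp) \le \expect{\cost{\aldr[\bp,K]}}$ is just $\toll{\aldr[\bp,K]} \ge 0$ (Shannon's source coding theorem; cf.~\cref{eq:ddg-tree-toll}). I would start from the closed form of the toll recorded in \cref{eq:fldr-toll-binary-entropy}, namely $\toll{\aldr[\bp,K]} = (2^K/M)\left(\toll{\ky[\bq_K]} + H_{\rm b}(M/2^K)\right)$, and set $p \defeq q_{K,0} = 1 - M/2^K$, so that $2^K/M = 1/(1-p)$. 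Plugging in $\toll{\ky[\bq_K]} < 2$ (\cref{theorem:ky-toll} applied to the entropy-optimal tree for $\bq_K$) and expanding $H_{\rm b}(p) = p\log(1/p) + (1-p)\log(1/(1-p))$, elementary algebra gives $\toll{\aldr[\bp,K]} < 2 + \frac{p}{1-p}\left(2 + \log(1/p)\right) + \log\frac{1}{1-p}$; the degenerate case $p=0$ (that is, $M=2^K$) is trivial by continuity since $H_1(0)\defeq 0$.

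Next I would bring in the key structural input \cref{proposition:amplified-reject-prob}: $p < p^* \defeq 1/(2^{K-k}+1) \le 1/2$. I would verify two monotonicity facts on $(0,1)$: that $g(p) \defeq \frac{p}{1-p}\left(2+\log(1/p)\right)$ is strictly increasing, since $g'(p) = \frac{2+\log(1/p)}{(1-p)^2} - \frac{1}{(1-p)\ln 2} > 0$ because $2 + \log(1/p) \ge 2 > 1/\ln 2$ and $\frac{1}{(1-p)^2}\ge\frac{1}{1-p}$; and that $p \mapsto \log(1/(1-p))$ is increasing. Substituting $p^*$ and abbreviating $t \defeq 2^{K-k}\ge 1$ (so $\frac{p^*}{1-p^*} = 1/t$, $\log(1/p^*) = \log(t+1) = (K-k) + \log(1+1/t)$, and $\log(1/(1-p^*)) = \log(1+1/t)$) collapses the bound to $\toll{\aldr[\bp,K]} < 2 + \frac{K-k+2}{t} + \left(1+\frac1t\right)\log\left(1+\frac1t\right)$.

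The claimed bound $2 + (4+K-k)\,2^{k-K} = 2 + \frac{K-k+4}{t}$ then follows exactly once I establish the scalar inequality $\log(1+1/t) \le \frac{2}{t+1}$ for $t \ge 1$. I would prove this for all real $t\ge 1$ by setting $f(t) \defeq \frac{2}{t+1} - \log(1+1/t)$, observing $f(1) = 0$ and $\lim_{t\to\infty}f(t) = 0$, and computing $f'(t) = \frac{(1-2\ln 2)t + 1}{\ln 2\,\cdot\, t(t+1)^2}$, whose numerator is positive then negative (it vanishes at $t = 1/(2\ln 2 - 1)$); hence $f$ rises from $0$ and falls back to $0$, so $f\ge 0$ on $[1,\infty)$. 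Then $(1+\frac1t)\log(1+\frac1t) = \frac{t+1}{t}\log(1+\frac1t) \le \frac{t+1}{t}\cdot\frac{2}{t+1} = \frac{2}{t}$, which is exactly the slack needed, and strictness of $\toll{\ky[\bq_K]}<2$ (and of $p<p^*$) makes the final inequality strict. The main obstacle is that the constants have essentially no room: the usual linearization $\log(1+1/t)\le 1/(t\ln 2)$ is already too weak at $t=1$, where $\log 2 = 1 = 2/(t+1)$ holds with equality (consistent with \cref{theorem:fldr-bound} in the boundary case $K=k$), so the proof must retain $\log(1+1/t)$ intact throughout and must argue the monotonicity of $g$ rather than bounding its two factors separately.
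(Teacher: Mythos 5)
Your proof is correct and follows essentially the same route as the paper: both start from \cref{eq:fldr-toll-binary-entropy}, insert $\toll{\ky[\bq_K]}<2$ and the rejection bound of \cref{proposition:amplified-reject-prob}, arrive at the same intermediate expression $2+(2+K-k)2^{k-K}+(1+2^{k-K})\log(1+2^{k-K})$, and finish with an equivalent scalar inequality (your $\log(1+1/t)\le 2/(t+1)$ is a rearrangement of the paper's $(1+2^x)\log(1+2^{-x})\le 2$), proved by an elementary calculus argument. The only cosmetic differences are your explicit monotonicity step before substituting $p^*=1/(2^{K-k}+1)$ and your unified treatment of the case $K=k$, which the paper instead dispatches via \cref{theorem:fldr-bound}.
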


\begin{proof}
\begin{figure}[t]
\centering
\includegraphics[width=\linewidth]{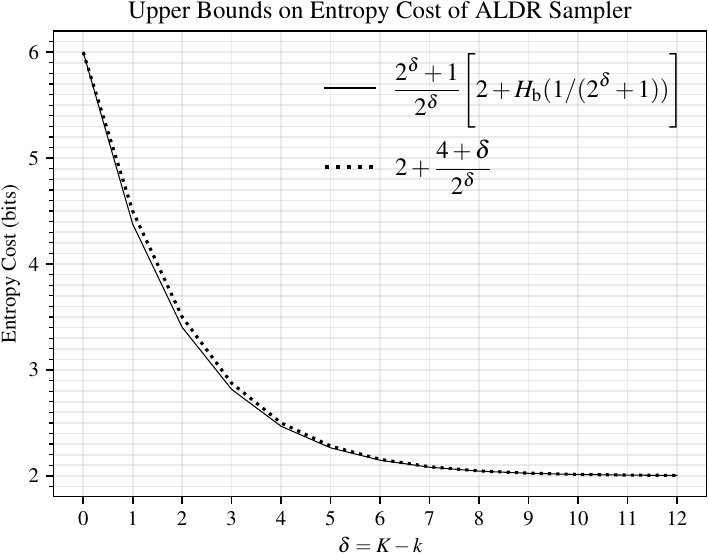}
\caption{Upper bound on the entropy cost of the ALDR sampler with increasing
depth $K$, using the generic bound $\toll{\ky[\bq_K]} < 2$.
The solid line shows a bound in terms of the binary entropy
function \cref{eq:aldr-generic-actual} and the dashed
line shows a simple upper bound on this quantity following \cref{theorem:aldr-generic-bound},
all as a function of $\delta \defeq K-k$.}
\label{fig:aldr-generic-toll-bound}
\end{figure}

If $K=k$, then \cref{theorem:fldr-bound} applies directly.
Otherwise, applying \cref{theorem:ky-toll} and
\cref{proposition:amplified-reject-prob} to bound the toll $\toll{\ky[\bq]}$ and
the rejection bound $2^K/M$ in \cref{eq:fldr-toll-binary-entropy} yields
\begin{align}
&\toll{\aldr[\bp, K]} \notag \\
&< \frac{2^K}{M} \left( 2 + \Hb{M/2^K} \right) \notag \\
&< \frac{2^{K-k}+1}{2^{K-k}} \left[ 2 + \Hb{1/\left(2^{K-k}+1\right)} \right]
  \label{eq:aldr-generic-actual} \\
&= \left(1+\frac{2^k}{2^K}\right) \left(2 + \log\left(1+\frac{2^k}{2^K}\right) \right) + \frac{2^k}{2^K}\log\left(\frac{2^K}{2^k}\right) \notag \\
&= 2 + \frac{2^k}{2^K} \left(2 + K-k \right) + \left(1+\frac{2^k}{2^K}\right)\log\left(1+\frac{2^k}{2^K}\right). \notag
\end{align}
Therefore, it suffices to show that whenever $x$ is positive,
$(1+2^x) \log(1+2^{-x}) \leq 2$.
When $x$ is zero, the value of this function is $(1+2^x) \log(1+2^{-x}) = 2$.
For $x > 0$,
\begin{align*}
\frac{\diff}{\diff{x}} (1+2^x) \log(1+2^{-x}) = 2^x \ln(1+2^{-x}) - 1 < 0,
\end{align*}
which completes the proof (cf.~\cref{fig:aldr-generic-toll-bound}).
\end{proof}

Recall that $\toll{\ky[\bq]} \in [0, 2)$ is the entropy toll from a single
iteration of the FLDR rejection loop, which uses the entropy-optimal sampler for
$\bq$.
The excess $O((K-k)/2^{K-k})$ beyond this toll decays very
quickly in $\delta \defeq K-k$.
\Cref{fig:aldr-generic-toll-bound} shows how these values grow as $\delta$
increases.
This bound is (asymptotically) the best possible when using the
generic result $\toll{\ky[\bq]} \in [0,2)$.
However, this excess decreases so rapidly that it is possible to bound the
entire toll of the ALDR sampler to $2$ bits with linear amplification
$\delta = O(k)$, as demonstrated in \cref{theorem:aldr-2k-toll-two}.

\subsection{Tighter Bound on Entropy Cost for \texorpdfstring{$K = 2k$}{K=2k}}

The main theorem of this section states that $\aldr[\bp,2k]$ achieves a
toll strictly less than $2$ bits with only a doubling of memory relative to
$\fldr[\bp] \equiv \aldr[\bp,k]$.
The result demonstrates that achieving efficient space and runtime
is compatible with achieving an entropy cost in the entropy-optimal range.

\begin{theorem}[Bounding the toll of ALDR]
\label{theorem:aldr-2k-toll-two}
For any $K \geq 2k$, the entropy toll of $\aldr[\bp,K]$ is bounded by two.
That is, the entropy cost of the ALDR sampler satisfies
$H(\bp) \leq \expect{\cost{\aldr[\bp,K]}} < H(\bp) + 2$.
\end{theorem}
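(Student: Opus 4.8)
The plan is to reduce the bound to a statement about the entropy-optimal sampler for the amplified proposal and then exploit that $K\ge 2k$ makes the rejection mass exponentially small in $k$. We may assume $m<2^k$, since $m=2^k$ forces $q_{K,0}=0$ and then $\toll{\aldr[\bp,K]}=\toll{\ky[\bp]}<2$ already. Write $\bq_K=(q_0,\dots,q_n)$ with $q_i=A_i/2^K$, $M=c_K m$, put $\kappa\defeq(\nu(q_0)+H_1(1-q_0))/(1-q_0)$, and recall from \cref{theorem:knuth-yao,eq:ddg-tree-toll} that $\toll{\ky[\bq_K]}=\sum_{i=0}^n(\nu(q_i)-H_1(q_i))=\sum_{i=0}^n q_i\trel{q_i}$. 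Substituting the chain rule $H(\bq_K)=(1-q_0)H(\bp)+H_{\rm b}(q_0)$ into \cref{eq:fldr-toll-binary-entropy} and using $\sum_{i=0}^n q_i=1$ gives the equivalences
\begin{equation}
\toll{\aldr[\bp,K]}<2
\iff \expect{\cost{\ky[\bq_K]}}<(1-q_0)(H(\bp)+2)
\iff \sum_{i=1}^{n}p_i\bigl(2-\trel{q_i}\bigr)>\kappa .
\end{equation}
Since $K\ge 2k$, \cref{proposition:amplified-reject-prob} gives $q_0<1/(2^{K-k}+1)\le 1/(2^k+1)$, hence $q_0<2^{-k}$, $q_0<1/(m+1)$, and $A_0=2^K-M<2^{K-k}$; the last fact puts every set bit of $q_0$ at depth $\ge k+1$, so $\nu(q_0)<(k+2)2^{-k}$, and with elementary bounds on $H_1(1-q_0)$ (cf.~\cref{theorem:topsoe}) this yields $\kappa=O((k+1)2^{-k})$. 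So the generic estimate $\trel{q_i}<2$ of \cref{theorem:ky-toll} is not enough: one must recover a total slack of order $k\,2^{-k}$ from the summands.

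The second step is a quantitative sharpening of \cref{theorem:ky-toll}. For $x\in(0,1)$ let $\mathrm{sig}(x)\defeq x\,2^{\lceil\log(1/x)\rceil-1}\in[1/2,1)$ be its significand; since $\trel{x}=\trel{\mathrm{sig}(x)}$ and, expanding over bit positions, $2\mathrm{sig}(x)-\nu(\mathrm{sig}(x))=\tfrac12-\sum_{d\ge 3}\epsilon_d(\mathrm{sig}(x))(d-2)2^{-d}\ge 0$, one gets
\begin{equation}
2-\trel{x}=\frac{2\mathrm{sig}(x)-\nu(\mathrm{sig}(x))+H_1(\mathrm{sig}(x))}{\mathrm{sig}(x)}\ \ge\ \log\frac{1}{\mathrm{sig}(x)}\ >\ 0,
\end{equation}
and retaining the term of the first $0$-bit of $\mathrm{sig}(x)$, at depth $d_0\ge 3$ say, improves this to $2-\trel{x}\ge(d_0-2)2^{-d_0}$. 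I would also record the exact identity $\nu(y)+\nu(1-y)=2(1-2^{-L})$ for any dyadic $y$ whose reduced denominator is $2^{L}$ (obtained by pairing complementary bits). Now partition $\{1,\dots,n\}=D\sqcup N$, where $D$ collects the outcomes with $p_i$ a power of two. For $i\in N$: because $q_0<1/(m+1)$, the value $q_i$ stays in the same dyadic octave as $p_i$, so $\mathrm{sig}(q_i)=(1-q_0)\mathrm{sig}(p_i)\le(1-q_0)(1-1/m)<1-1/m$, which (using $m<2^k$) caps the first $0$-bit depth of $\mathrm{sig}(q_i)$ at $k$ and thus gives a slack $2-\trel{q_i}$ of order $k\,2^{-k}$. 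For $i\in D$: $\mathrm{sig}(q_i)=1-q_0$, and since $v_2(A_0)=v_2(M)$ both $q_0$ and $1-q_0$ have reduced denominator $2^L$ with $L=K-v_2(M)$; the identity then yields the exact value $2-\trel{q_i}=\kappa-2(q_0-2^{-L})/(1-q_0)$, which is $\le\kappa$ (as $q_0\ge 2^{-L}$) but strictly positive.

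It remains to verify $\sum_{i=1}^n p_i(2-\trel{q_i})>\kappa$. Substituting the exact value on $D$ and writing $\sigma_D\defeq\sum_{i\in D}p_i$, this is equivalent to $\sum_{i\in N}p_i(2-\trel{q_i})>\kappa(1-\sigma_D)+2\sigma_D(q_0-2^{-L})/(1-q_0)$, into which one inserts the $N$-slack bound from the previous step. I expect the main obstacle to be the near-dyadic regime $\sigma_D\to 1$, in which the left side degenerates; there one should instead work from the equivalent form $\sum_{i=0}^n q_i(2-\trel{q_i})>2q_0+H_{\rm b}(q_0)$, \emph{not} discarding the reject term $q_0(2-\trel{q_0})$ but bounding it below via $\nu(q_0)$, and combining it with the exact $D$-slack summed against the geometrically decaying weights $(p_i)_{i\in D}$, so that the telescoping leaves a positive surplus over $\kappa$ for every $k\ge 2$; the finitely many small cases $k\in\{2,3\}$ are checked by hand. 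Nothing in the argument uses $K=2k$ beyond $q_0<1/(2^k+1)$, so it applies verbatim for every $K\ge 2k$.
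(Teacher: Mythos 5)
Your opening reduction is sound and is essentially the paper's own decomposition: writing $q_i=(1-q_0)p_i$ and using \cref{eq:fldr-toll-binary-entropy}, the condition $\toll{\aldr[\bp,K]}<2$ is indeed equivalent to $\sum_{i=1}^n p_i\,(2-\trel{q_i})>\kappa$ with $\kappa=(\nu(q_0)+H_1(1-q_0))/(1-q_0)$, your exact identity for dyadic outcomes matches \cref{lemma:aldr-pow-two-toll}, and your octave-preservation observation is \cref{lemma:q-p-tightness}. The genuine gap is in the quantitative core. For $i\in N$ you extract a slack $2-\trel{q_i}$ of order $k2^{-k}$ per unit mass, using only $\mathrm{sig}(p_i)\le 1-1/m$, \emph{uniformly} in how much mass sits on the dyadic outcomes. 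But the inequality you must verify is
\begin{equation}
\min_{i\in N}\bigl(2-\trel{q_i}\bigr)\;>\;\kappa+\frac{\sigma_D}{\sigma_N}\cdot\frac{2\,(q_0-2^{-L})}{1-q_0},
\end{equation}
and the second term scales like $2^{b}q_0$ when the dyadic mass is $\sigma_D=1-2^{-b}$ (so $\sigma_D/\sigma_N\approx 2^{b}$, with $b$ up to $k-2$) while $q_0-2^{-L}$ can be comparable to $q_0\approx 2^{-k}$ (e.g.\ $m=11\cdot 2^{k-4}$, $K=2k$, $A_0=5\cdot 2^{k-4}$ gives a right-hand side of constant order $\approx 2^{-5}$). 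A slack of order $k2^{-k}$, or even the $\log(1/\mathrm{sig}(q_i))$ bound combined only with $\mathrm{sig}(p_i)\le 1-1/m$, cannot beat a requirement that grows like $2^{b}q_0$; the balance in the paper is exactly tight, since the dyadic excess per unit mass is $\approx 2A_0/2^K$ and must be cancelled by a non-dyadic slack of at least $(2^{b+1}-2)A_0/2^K$ on mass $2^{-b}$.

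The missing ingredients are precisely what the paper's proof supplies and your sketch never uses: (i) when dyadic outcomes carry mass $1-2^{-b}$, every non-dyadic probability is individually small, which after normalization yields the sharper constraint $\mathrm{sig}(p_i)\le 1-2^{b}/m$ (not merely $1-1/m$); and (ii) a toll bound that is \emph{linear} in the distance of the significand from $1$ with a multiplicative factor of order $k-b$ (\cref{lemma:relative-toll-bound-linear}), so that the slack becomes $\gtrsim (A_0/M)\,2^{b+1}$ plus enough extra to absorb the rejection contribution of \cref{lemma:aldr-rejection-toll}; even then the paper needs casework on $K$, $A_0$, $b$, $k$ and a finite computation for $k<8$, plus the separate boundary case $b=k-2$ (where $x=3$ and the dyadic excess vanishes). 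Your proposed rescue for the regime $\sigma_D\to 1$ --- ``not discarding the reject term $q_0(2-\trel{q_0})$'' --- is vacuous: that term is already accounted for in $\kappa$ in your own reduction (algebraically the inequality is unchanged), and ``the telescoping leaves a positive surplus'' is asserted, not proved. As written, the argument does not close in exactly the regime where the theorem is hardest.
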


\Citet[Theorem 2.2]{knuth1976} prove that every entropy-optimal sampler satisfies
this bound using the fact that
$({\nu(x) - \Ho{x}})/{x} < 2$,
but this alone will not suffice to prove that the toll of ALDR is
strictly less than $2$ (cf.~\cref{theorem:aldr-generic-bound}).
Instead, we need a tighter upper bound parametric in $x$.
The proof of \cref{theorem:aldr-2k-toll-two} requires several intermediate
results.
We first define relative tolls to split up the toll contributions by weight,
then bound these relative tolls, and finally combine these bounds based on the
weights of the input distribution, paying special attention to the case when
$p_i$ is a power of two, which turns out to give the worst-case relative toll
for ALDR.

\begin{definition}
\label{definition:relative-toll}
The \textit{relative toll} of $x \in [0,1]$ is
\begin{equation}
\label{eq:relative-toll}
\trel{x} \defeq \frac{\nu(x) - \Ho{x}}{x},
\end{equation}
where $\Ho{x} \defeq x \log(1/x)$, so $H(\bp) = \sum_{i=1}^{n} \Ho{p_i}$
and $\toll{\ky[\bp]} = \sum_{i=1}^{n} p_i \trel{p_i} = \expect{\trel{p_i}}$.
\end{definition}

\begin{lemma}
\label{lemma:bit-shift-relative-toll}
The relative toll is invariant under bit shifts of its argument.
That is, $\trel{2^a x} = \trel{x}$ for $a \in \integers$.
\end{lemma}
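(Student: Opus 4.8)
The plan is to reduce everything to the single doubling step $\trel{2x}=\trel{x}$ and then bootstrap. For general $a\in\integers$: the case $a=0$ is immediate; for $a\ge 1$ I would iterate, $\trel{2^a x}=\trel{2\cdot 2^{a-1}x}=\trel{2^{a-1}x}=\cdots=\trel{x}$; and for $a\le -1$ I would put $y\defeq 2^a x$ and apply the positive case to $y$, so that $\trel{x}=\trel{2^{-a}y}=\trel{y}=\trel{2^a x}$. Hence everything comes down to the identity $\trel{2x}=\trel{x}$.

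To prove $\trel{2x}=\trel{x}$ I would, recalling \cref{eq:relative-toll}, show that the numerator transforms by the same factor of $2$ as the denominator, i.e. $\nu(2x)-H_1(2x)=2\bigl(\nu(x)-H_1(x)\bigr)$. The $H_1$ part is a one-line computation: $H_1(2x)=2x\log(1/(2x))=2x\bigl(\log(1/x)-1\bigr)=2H_1(x)-2x$. For the $\nu$ part I would use that doubling shifts the binary digits, $\epsilon_d(2x)=\floor{2^d\cdot 2x}\bmod 2=\floor{2^{d+1}x}\bmod 2=\epsilon_{d+1}(x)$, which holds for every real $x$ straight from the definition $\epsilon_d(x)\defeq\floor{2^d x}\bmod 2$ and needs no case analysis. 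Reindexing the defining series then gives $\nu(2x)=\sum_d d\,\epsilon_{d+1}(x)2^{-d}=2\sum_e (e-1)\epsilon_e(x)2^{-e}=2\bigl(\nu(x)-\sum_e\epsilon_e(x)2^{-e}\bigr)=2\bigl(\nu(x)-x\bigr)$, since the binary expansion of $x$ sums to $x$ and the $e=0$ term contributes nothing to $\nu$. Subtracting the two identities yields $\nu(2x)-H_1(2x)=(2\nu(x)-2x)-(2H_1(x)-2x)=2(\nu(x)-H_1(x))$, and dividing by $2x$ gives $\trel{2x}=\trel{x}$.

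The one place to be careful — and the main obstacle — is the domain. For the argument to run when $2^a x$ leaves $[0,1]$, I would read $\nu$ (hence $\trel$) through its extension to formal Laurent series (\cref{definition:nu-entropy}), under which $\sum_e\epsilon_e(x)2^{-e}=x$ and $\nu(x)=\sum_e e\,\epsilon_e(x)2^{-e}$ continue to hold with the index $e$ ranging over all of $\integers$ (only finitely many negative terms); the shift identity $\epsilon_d(2x)=\epsilon_{d+1}(x)$ itself needs no such care. Alternatively, reading the statement as quantifying only over $a$ with $2^a x\in[0,1]$, every intermediate value $2^j x$ for $0\le j\le a$ also lies in $[0,1]$, and the elementary facts about binary expansions suffice with no extension at all. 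Either way, once this bookkeeping (dyadic $x$, the points $0$ and $1$, and values past $1$) is pinned down, the remaining work is exactly the short algebra above.
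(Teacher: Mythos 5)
Your proposal is correct and follows essentially the same route as the paper: the paper's proof is exactly the one-line chain $\trel{2x} = \bigl((2\nu(x)-2x)-(2H_1(x)-2x)\bigr)/(2x) = \trel{x}$, applied repeatedly for higher powers of two, which is your doubling step plus iteration. Your additional work---deriving $\nu(2x)=2\nu(x)-2x$ from the digit shift $\epsilon_d(2x)=\epsilon_{d+1}(x)$ and flagging the domain bookkeeping for arguments leaving $[0,1]$---merely fills in details the paper leaves implicit.
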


\begin{proof}
Apply the equation
\begin{align*}
\trel{2x}
&\defeq \frac{\nu(2x)-\Ho{2x}}{2x} \\
&= \frac{(2\nu(x)-2x)-(2\Ho{x}-2x)}{2x} \\
&= \frac{\nu(x)-\Ho{x}}{x}
\eqdef \trel{x}
\end{align*}
(repeatedly as necessary for larger powers of two).
\end{proof}

The next results establish useful facts about $\nu$.

\begin{remark}[$\nu$ entropy of Bernoulli distributions]
\label{remark:bernoulli-nu}
Because $\nuu{1-2^{-a}} = 2 - (a+2)2^{-a}$ for any integer $a$,
the entropy cost of the entropy-optimal sampler for a
Bernoulli distribution with dyadic parameter $x = b/2^a$ in lowest terms (i.e., either
$b$ is odd or $x = 0 / 2^0$) is $\nu(x) + \nu(1-x) = \nuu{2^{-a}} + \nuu{1-2^{-a}} = 2 - 2^{1-a}$.
Similarly, $\nu(x) \leq 2 - \nu(1-x)$ for any $x \in [0,1]$,
with strict inequality if and only if $x$ is dyadic.
\end{remark}

\begin{lemma}[Piecewise linear relative toll bound]
\label{lemma:relative-toll-bound-linear}
For any $a \geq 2$ and $b \in \nat$, the relative toll of
any real number $x \in (2^{-b} - 2^{-b-a+1}, 2^{-b} - 2^{-b-a}]$
is less than two:
\begin{equation*}
\trel{x} < 2 - (1 - 2^b x) (a - 2 + 1/\ln(2)).
\qedhere
\end{equation*}
\end{lemma}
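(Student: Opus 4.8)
The plan is to reduce everything to bounding the single-value quantity $\nu(x) - H_1(x)$ on the stated interval, using the definition $\trel{x} = (\nu(x) - H_1(x))/x$ and the fact that on this interval $x$ is very close to the dyadic rational $2^{-b}$ from below. First I would observe that on the interval $x \in (2^{-b} - 2^{-b-a+1}, 2^{-b} - 2^{-b-a}]$, the first $a-1$ bits of the binary expansion of $x$ to the right of the binary point starting at position $b$ are pinned down: writing $x = 2^{-b}(1 - t)$ with $t \in [2^{-a}, 2^{-a+1})$, the number $1-t$ has binary expansion $0.111\ldots$ with at least $a-1$ leading ones after accounting for the shift. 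This lets me split $\nu(x)$ into the contribution from the ``guaranteed'' high-order one-bits (which I can compute in closed form, analogous to the $\nu(1-2^{-a}) = 2-(a+2)2^{-a}$ identity in \cref{remark:bernoulli-nu}) plus a tail contribution from the remaining low-order bits, which I bound crudely.

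Concretely, the key steps in order: (1) By \cref{lemma:bit-shift-relative-toll}, $\trel{x} = \trel{2^b x}$, so it suffices to take $b = 0$ and bound $\trel{y}$ for $y \in (1 - 2^{-a+1}, 1 - 2^{-a}]$; the claimed bound becomes $\trel{y} < 2 - (1-y)(a - 2 + 1/\ln 2)$. (2) Write $y = 1 - s$ with $s \in [2^{-a}, 2^{-a+1})$. The binary expansion of $y$ begins with exactly $a-1$ ones (bits $1$ through $a-1$), then bit $a$ is $0$ or $1$ depending on $s$, followed by a tail. Decompose $\nu(y) = \sum_{d\ge 1} d\,\epsilon_d(y) 2^{-d}$ as the sum over $d = 1,\ldots,a-1$ (all ones) plus the remainder. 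The first part equals $\sum_{d=1}^{a-1} d\,2^{-d} = 2 - (a+1)2^{-a+1}$. (3) Bound the tail $\sum_{d\ge a} d\,\epsilon_d(y)2^{-d}$ from above by the full sum $\sum_{d \ge a} d\, 2^{-d}$ and from below (if needed) by zero, and relate it to $s$; since the tail bits represent a number of size $\Theta(s)$, I expect a bound like tail $\le$ (something)$\cdot s$ plus lower-order terms. (4) For $H_1(y) = y\log(1/y) = (1-s)\log(1/(1-s))$, use the expansion $\log(1/(1-s)) = s/\ln 2 + O(s^2)$, or more carefully the inequality $-\log(1-s) \ge s/\ln 2$ (convexity), to get a clean lower bound on $H_1(y)$ that contributes the $1/\ln 2$ term. (5) Assemble: $\nu(y) - H_1(y) \le [2 - (a+1)2^{-a+1} + \text{tail}] - [\text{lower bound on } H_1]$, divide by $y = 1-s$, and check the result is $< 2 - s(a-2+1/\ln 2)$. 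Since $1/y = 1/(1-s) = 1 + s + O(s^2)$, dividing by $y$ introduces a correction that I must track to first order in $s$.

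The main obstacle I anticipate is getting the tail and the $1/y$-division corrections to combine in the right direction, i.e., verifying that all the lower-order $O(s^2)$ or $O(a 2^{-a})$ slop is genuinely dominated so the strict inequality holds for all $a \ge 2$ (the boundary case $a=2$, where the interval is $(2^{-b-1}, 3\cdot 2^{-b-2}]$, is likely the tightest and deserves a direct check). A secondary subtlety is that within the interval the bit at position $a$ can be either $0$ or $1$, so the expansion of $\nu(y)$ isn't a single formula; I would handle this by noting that the worst case for the upper bound on $\nu(y)$ is when all subsequent bits are $1$, i.e., $y$ as large as possible, which is consistent with using $s$ as small as $2^{-a}$ — so monotonicity in $s$ of both sides, rather than an exact formula, is the cleanest route. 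If a clean monotonicity argument in $s$ works for each fixed $a$, the whole lemma collapses to evaluating both sides at the endpoint $s = 2^{-a}$ (equivalently $x = 2^{-b} - 2^{-b-a}$) and confirming the inequality there, plus checking the derivative sign — this is the structure I would aim for.
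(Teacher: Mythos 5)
Your step (1) (bit-shift via \cref{lemma:bit-shift-relative-toll}) and step (4) (the bound $-\log(1-s) \geq s/\ln 2$) match the paper, but the core of your plan---step (3) and the monotonicity fallback in step (5)---has a genuine gap. You propose to upper-bound the tail $\sum_{d \geq a} d\,\epsilon_d(y)2^{-d}$ of $\nu(y)$ by something of the form $(\text{const})\cdot s$. First, the tail bits of $y = 1-s$ represent the value $2^{1-a}-s$, not $\Theta(s)$, so the dependence runs in the opposite direction from what you assume. More fundamentally, no bound of the form $\text{tail} \leq C\cdot(\text{tail value})$ can exist: a single bit at depth $N$ contributes $N2^{-N}$ to the tail while contributing only $2^{-N}$ to its value, so taking $y = 1-2^{1-a}+2^{-N}$ with $N$ large defeats any fixed constant $C$ (and any constant depending only on $a$). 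For the same reason your fallback fails: $\nu(y)$ restricted to the interval is wildly non-monotone in $s$ (compare $y = 1-2^{-a}$, tail $a2^{-a}$, with $y$ slightly smaller having bits $a{+}1,a{+}2,\dots$ set and tail near $(a+2)2^{-a}$), so the lemma does not collapse to checking the endpoint $s = 2^{-a}$, and ``worst case is $y$ as large as possible'' is false.

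The paper's proof sidesteps exactly this obstruction by never upper-bounding the tail of $\nu(y)$ directly. Instead it uses $\nu(x) \leq 2-\nu(1-x)$ (\cref{remark:bernoulli-nu}) and then \emph{lower}-bounds the complement: since $1-x$ has its leading bit at position $a$ and all deeper bits carry weight at least $a+1$ per unit value,
\begin{equation}
\nu(1-x) \;\geq\; a2^{-a} + \bigl(1-2^{-a}-x\bigr)(a+1),
\end{equation}
a direction in which the unbounded depth weights help rather than hurt. Combining this with $\log(x) \leq (x-1)/\ln 2$ and $1/x \geq 2-x$ (valid on $[1/2,1]$, used with $2-a \leq 0$) yields the stated piecewise-linear bound. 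If you replace your step (3) with this complement trick---or, alternatively, bound your tail via the Knuth--Yao inequality $\nu(t) < H_1(t) + 2t$ applied to the tail value $t = 2^{1-a}-s$ and then redo the bookkeeping---your outline can be repaired, but as written the linear-in-$s$ tail bound and the monotonicity reduction are not salvageable.
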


\begin{proof}
The relative toll function and corresponding bounds are plotted in
\cref{fig:relative-tolls}.
%
\begin{figure}
\centering
\includegraphics[width=\linewidth]{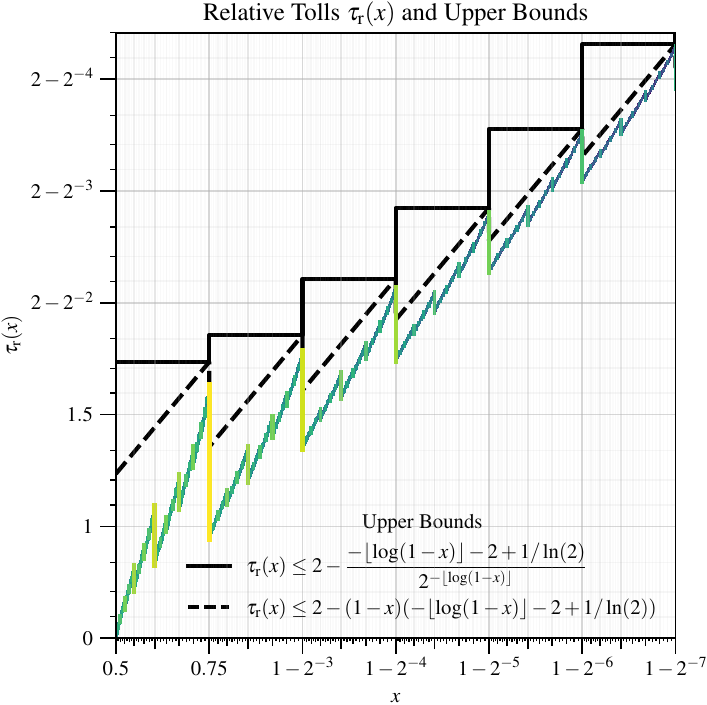}
\caption{Relative tolls and upper bounds for $x \in [1/2,1)$.
At each dyadic rational, a vertical segment connects the
left and right limits of $\trel{x}$.
Dyadic rationals with larger $2$-adic valuation are
plotted with thicker lines.}
\label{fig:relative-tolls}
\end{figure}

First, apply \cref{lemma:bit-shift-relative-toll} to reduce to the case
$x \in (1 - 2^{1-a}, 1 - 2^{-a}]$.
Then $1-x \in [2^{-a}, 2^{1-a})$ so that
\begin{equation*}
\nu(1-x) \geq a 2^{-a} + (1 - 2^{-a} - x)(a+1).
\end{equation*}
\Cref{remark:bernoulli-nu} implies that
\begin{align*}
\trel{x}
&\leq \frac{
  2 - \nu(1-x)
}{x} + \log(x) \\
&\leq \frac{
  2 - a 2^{-a}
  - (1 - 2^{-a} - x)(a+1)
}{x} + \log(x) \\
&< \frac{2-a}{x} + a + \frac{x-1}{\ln(2)} \\
&\leq (2-a)(2-x) + a + \frac{x-1}{\ln(2)} \\
&= 2 - (1-x)(a-2+1/\ln(2)). \tag*{\qedhere}
\end{align*}
\end{proof}

\begin{corollary}[Piecewise constant relative toll bound]
\label{corollary:relative-toll-bound-constant}
For any $a \geq 2$ and $b \in \nat$, the relative toll of
any real $x \in [2^{-b-1}, 2^{-b} - 2^{-b-a}]$
can be bounded strictly less than two as
$\trel{x}
< 2 - 2^{-a}(a-2+1/\ln(2))$.
\end{corollary}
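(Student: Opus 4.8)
The plan is to tile the interval $[2^{-b-1},\,2^{-b}-2^{-b-a}]$ by the subintervals on which the piecewise linear estimate of \cref{lemma:relative-toll-bound-linear} is valid, apply that lemma on each piece, and then observe that the weakest of the resulting per-piece bounds is exactly the piecewise constant bound claimed here. No genuinely new idea is needed beyond \cref{lemma:relative-toll-bound-linear}; the corollary is essentially a bookkeeping consequence of it.

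First, \Cref{lemma:bit-shift-relative-toll} lets me reduce to $b=0$, i.e.\ $x\in[1/2,\,1-2^{-a}]$. The half-open intervals $I_j\defeq(1-2^{1-j},\,1-2^{-j}]$ for $j=2,3,\dots,a$ are pairwise disjoint and chain up exactly, so $\bigcup_{j=2}^{a}I_j=(1/2,\,1-2^{-a}]$; together with the single point $x=1/2$, these cover all of $[1/2,\,1-2^{-a}]$. At $x=1/2$ one computes $\trel{1/2}=0$ directly (since $\nu(1/2)=H_1(1/2)=1/2$); this is the only point of the interval not reached by \cref{lemma:relative-toll-bound-linear} (reaching it would need the lemma's exponent parameter to be $1$, which the lemma disallows), and it trivially lies below the claimed bound as soon as that bound is positive, which the monotonicity computation below confirms.

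For $x\in I_j$ with $2\le j\le a$, \cref{lemma:relative-toll-bound-linear}, applied with $b=0$ and with $j$ playing the role of its exponent parameter, gives $\trel{x}<2-(1-x)(j-2+1/\ln(2))$. On $I_j$ one has $1-x\ge 2^{-j}$, and since $j-2+1/\ln(2)>0$ this sharpens to $\trel{x}<2-2^{-j}(j-2+1/\ln(2))$. It then remains to show that $g(t)\defeq 2^{-t}(t-2+1/\ln(2))$ is non-increasing on $[2,\infty)$: then $g(j)\ge g(a)$ whenever $2\le j\le a$, whence $\trel{x}<2-g(j)\le 2-g(a)=2-2^{-a}(a-2+1/\ln(2))$ uniformly over all the pieces; and since $g(a)\le g(2)=1/(4\ln(2))<2$, the claimed bound is positive, which also settles the $x=1/2$ case. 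The monotonicity of $g$ is a one-line derivative computation: $g'(t)=2^{-t}\bigl(1-\ln(2)(t-2+1/\ln(2))\bigr)=-\ln(2)\,(t-2)\,2^{-t}\le 0$ for $t\ge 2$.

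There is no substantive obstacle; the only care required is in the bookkeeping — verifying that the $I_j$ tile the shifted interval so that every $x$ falls in exactly one piece (with the left endpoint $2^{-b-1}$ handled separately as above), and getting the direction of the monotonicity of $g$ right, so that the largest index $j=a$ yields the weakest of the per-piece bounds $2-2^{-j}(j-2+1/\ln(2))$.
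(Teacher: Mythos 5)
Your proof is correct and follows essentially the same route as the paper's: apply \cref{lemma:relative-toll-bound-linear} on the dyadic subinterval containing $x$ with the appropriate parameter $j\le a$, use $1-2^bx\ge 2^{-j}$, observe that $2-2^{-a}(a-2+1/\ln(2))$ is increasing in $a$ (equivalently, your $g$ is non-increasing), and dispose of the left endpoint via $\trel{2^{-b-1}}=0$. The paper states this in one line; you have merely spelled out the tiling and the derivative computation explicitly.
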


\begin{proof}
Apply \cref{lemma:relative-toll-bound-linear}, using $1-2^b x \geq 2^{-a}$
and the fact that
$2 - 2^{-a}(a-2+1/\ln(2))$ is an increasing function of $a \geq 2$,
as well as $\trel{2^{-b-1}} = 0$.
\end{proof}

\begin{definition}
\label{definition:fldr-toll}
The \textit{relative FLDR toll} of a weight $A$ given weight sum $M$
and $K \defeq \ceil{\log(M)}$ is
\begin{align*}
\begin{aligned}[b]
&\trelfldr{A, M}
\defeq \trel{\frac{A}{2^K}} +
\frac{2^K}{M}
  \left[
        \begin{aligned}
        &\Hb{M/2^K} \\
        &+ \nuu{1 - M/2^K} \\
        &- \Ho{1 - M/2^K}
        \end{aligned}\right]\span \\
&= \trel{\frac{A}{2^K}} +
  \frac{2^K}{M}
  \left( \Ho{\frac{M}{2^K}}
    + \nuu{1 - \frac{M}{2^K}} \right)\!.
\end{aligned}\span \tag*{\qedhere}
\end{align*}
\end{definition}


\begin{remark}
Rewriting \cref{eq:fldr-toll-binary-entropy} as
\begin{align}
&\toll{\fldr[A_1, \ldots, A_n]}
\notag \\
&= \frac{2^K}{M}
  \left( \Hb{M/2^K}
        + \mathbb{E}_{i \sim \bq}\left[\trel{A_i / 2^K}\right] \right)
\notag \\
&= \mathbb{E}_{i \sim \bp}\left[
  \trel{A_i / 2^K} +
  \frac{2^K}{M}
  \left(\begin{aligned}
    &\Hb{M/2^K} \\
    &+ \nuu{A_0 / 2^K} \\
    &- \Ho{A_0 / 2^K}
    \end{aligned}\right)
\right]
\notag \\
&= \mathbb{E}_{i \sim \bp}\left[ \trelfldr{A_i, M} \right]
\label{eq:entropy-difference-convex-combination}
\end{align}
confirms that \cref{definition:fldr-toll}
is consistent with our usage of the relative toll.
\end{remark}

\begin{lemma}
\label{lemma:aldr-pow-two-toll}
When $K \geq 2k$ and $m$ is not a power of two but $a_i / m$ is a power of two,
the relative FLDR toll of $A_i$ given weight sum $M$ is bounded as
$2 \leq \trelfldr{A_i, M} < 2 ( 1 + A_0/2^K )$.
\end{lemma}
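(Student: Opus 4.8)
The approach is to exploit the power-of-two hypothesis to collapse $\trelfldr{A_i,M}$ into a one-line expression involving only the reject probability, and then reduce the desired two-sided bound to an elementary estimate on the $\nu$-entropy cost of a single Bernoulli split.

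First I would set $q_0 \defeq A_0/2^K$, the reject probability of $\bq_K$, so that $M/2^K = 1 - q_0$, $2^K/M = 1/(1-q_0)$, and $1 - M/2^K = q_0$. Since $a_i/m$ is a power of two, write $m = 2^s a_i$ for some integer $s \geq 1$; then $A_i = c a_i = cm/2^s = M/2^s$, hence $A_i/2^K = (1-q_0)/2^s$, and \cref{lemma:bit-shift-relative-toll} gives $\trel{A_i/2^K} = \trel{1-q_0}$. Substituting into the second form of \cref{definition:fldr-toll} and expanding $\trel{1-q_0} = (\nu(1-q_0) - H_1(1-q_0))/(1-q_0)$, the two $H_1(1-q_0)$ terms cancel, leaving
\[
\trelfldr{A_i,M} = \frac{\nu(1-q_0) + \nu(q_0)}{1-q_0}.
\]
Here $\nu(q_0) + \nu(1-q_0)$ is exactly the expected entropy cost of an entropy-optimal sampler for $\mathrm{Bernoulli}(q_0)$, so the claim reduces to showing $2(1-q_0) \leq \nu(q_0) + \nu(1-q_0) < 2(1-q_0)(1+q_0) = 2 - 2q_0^2$.

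Next I would bring in \cref{remark:bernoulli-nu}. Because $m$ is not a power of two, $A_0 = 2^K \bmod m$ is nonzero, so $q_0 > 0$ is a positive dyadic; writing it in lowest terms as $q_0 = b/2^a$ with $b$ odd and $a \geq 1$, \cref{remark:bernoulli-nu} yields $\nu(q_0)+\nu(1-q_0) = 2 - 2^{1-a}$. The lower bound is then immediate, since $q_0 = b/2^a \geq 2^{-a}$ gives $2^{1-a} \leq 2q_0$ and hence $\nu(q_0)+\nu(1-q_0) \geq 2 - 2q_0 = 2(1-q_0)$, i.e.\ $\trelfldr{A_i,M} \geq 2$. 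For the upper bound it suffices to establish $q_0^2 < 2^{-a}$, which gives $2 - 2^{1-a} < 2 - 2q_0^2$ and therefore $\trelfldr{A_i,M} < 2(1+q_0) = 2(1 + A_0/2^K)$. This is where the doubling hypothesis is used: on the one hand $0 < A_0 \leq m - 1 < 2^k$ (as $2^{k-1} < m < 2^k$), so $q_0 < 2^{k-K}$ and $q_0^2 < 2^{2(k-K)}$; on the other hand the reduced denominator of $q_0$ divides $2^K$, so $a \leq K$, and $K \geq 2k$ then forces $a \leq K \leq 2(K-k)$, whence $2^{-a} \geq 2^{-2(K-k)} > q_0^2$.

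The main obstacle is the bookkeeping in this last step: one must track the reduced denominator exponent $a$ of $q_0$ and see that it is exactly the combination $a \leq K \leq 2(K-k)$ (needing $K \geq 2k$) together with the crude size bound $A_0 < 2^k$ that squeezes $q_0^2$ below $2^{-a}$ with room to spare. It is also worth noting that the hypothesis that $m$ is not a power of two is genuinely needed and not merely convenient: if $m = 2^k$ then $A_0 = 0$, $q_0 = 0$, and $\trelfldr{A_i,M} = 0$, so the lower bound of $2$ would fail.
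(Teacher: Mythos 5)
Your proof is correct and follows essentially the same route as the paper's: both reduce $\trelfldr{A_i,M}$ via the bit-shift invariance $\trel{A_i/2^K}=\trel{M/2^K}$ to $\tfrac{2^K}{M}\left(\nu(M/2^K)+\nu(1-M/2^K)\right)$, evaluate this Bernoulli cost with \cref{remark:bernoulli-nu} (your $2-2^{1-a}$ is the paper's $2-2\gcd(M,2^K)/2^K$), and obtain the upper bound from the same arithmetic fact $A_0^2 < 2^K\gcd(A_0,2^K)$, using $A_0<2^k$ and $K\geq 2k$.
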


\begin{proof}
When $A_i/M$ is a power of two, \cref{lemma:bit-shift-relative-toll} shows that
$\trel{A_i/2^K} = \trel{M/2^K}$.
Then, applying \cref{remark:bernoulli-nu} to \cref{definition:fldr-toll} gives
\begin{align*}
&\trelfldr{A_i, M}\\
&= \trel{M / 2^K} +
\frac{2^K}{M}
\left( \Ho{M/2^K}
  + \nuu{1 - M/2^K} \right)
\\
&= \frac{2^K}{M} \left(
  \nuu{M/2^K} + \nuu{1 - M/2^K}
\right)
\\
&= \frac{2^K}{M} \left(
  2 - 2\gcd(M,2^K)/2^K
\right).
\end{align*}
Therefore, $\trelfldr{A_i, M}$ equals $0$ if $m$ is a power of two
and equals $2$ if $2^K - M$ is a power of two.
Otherwise,
\begin{align*}
&\frac{2^K}{M} \left( 2 - 2\gcd\left(M,2^K\right)/2^K \right) \\
&= 2 \left( 1 + \frac{A_0}{M} - \frac{\gcd\left(A_0, 2^K\right)}{M} \right) \\
&\leq 2 \left( 1 + \frac{A_0-1}{M} \right) \\
&< 2 \left( 1 + \frac{A_0}{2^K} \right),
\end{align*}
because $\left(A_0/M\right)-\left(A_0/2^K\right) = A_0^2/\left(M2^K\right) < 1/M$ whenever $K \geq 2k$.
\end{proof}

\begin{lemma}[Tightness of $q_i \approx p_i$]
\label{lemma:q-p-tightness}
Suppose that $m$ is not a power of two and $p_i \defeq a_i / m$ is not a power of two.
If $K \geq 2k$, then $p_i$ and $q_i \defeq A_i / 2^K$ share the same most-significant bit.
\end{lemma}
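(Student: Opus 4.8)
The statement unpacks as follows: let $s \ge 1$ be the unique integer with $2^{-s} < p_i < 2^{-s+1}$, which exists because $p_i \in (0,1)$ is not a power of two (here $p_i < 1$, since $p_i = 1$ would force $n=1$ and then $p_1 = 1$ is itself a power of two). It then suffices to establish $2^{-s} \le q_i < 2^{-s+1}$. The upper bound is immediate: combining the definitions in \cref{eq:amplified-proposal-distribution} with $c = \floor{2^K/m}$ gives $q_i = \floor{2^K/m}\, a_i / 2^K \le (2^K/m)(a_i/2^K) = p_i < 2^{-s+1}$. So the whole content is the lower bound $q_i \ge 2^{-s}$.

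The plan is to derive two complementary estimates and multiply them. First, writing $2^K = m\floor{2^K/m} + r$ with $0 \le r \le m-1$ yields $\floor{2^K/m} \ge (2^K - m + 1)/m$, hence $q_i \ge p_i\left(1 - (m-1)/2^K\right)$. Second, because $p_i > 2^{-s}$ strictly, the positive integer $2^s a_i$ exceeds $m$, so $2^s a_i \ge m+1$, which gives $p_i = a_i/m \ge (1 + 1/m)\,2^{-s}$. Multiplying the two bounds,
\begin{align*}
q_i \ge 2^{-s}\left(1 + \tfrac1m\right)\!\left(1 - \tfrac{m-1}{2^K}\right) = 2^{-s}\left(1 + \tfrac1m - \tfrac{m^2-1}{m\,2^K}\right),
\end{align*}
so $q_i \ge 2^{-s}$ follows as soon as $(m^2-1)/(m\,2^K) \le 1/m$, i.e.\ $2^K \ge m^2 - 1$. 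This is exactly where the hypothesis $K \ge 2k$ is used: since $m \le 2^k$, we get $2^K \ge 2^{2k} \ge m^2 > m^2 - 1$, and the lemma follows.

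The calculation is short and has no genuinely hard step; the point requiring care --- and the reason the threshold is $2k$ rather than $2k+1$ --- is to resist the temptation to replace $p_i$ by $2^{-s+1}$ when bounding the slack $q_i/2^{-s}$, which would lose a factor of two. Retaining the factor $(m+1)/m$ and exploiting the integrality $2^s a_i \ge m+1$ is precisely what buys that factor back. The two auxiliary facts used along the way (namely $\floor{2^K/m} \le 2^K/m$ for the upper bound, and $s \ge 1$ so that $2^s$ is a genuine positive integer in the divisibility step) are immediate.
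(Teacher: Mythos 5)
Your proof is correct and follows essentially the same route as the paper's: both arguments combine the integrality fact that a non-power-of-two $p_i$ exceeds its leading bit by a relative factor of at least $(m+1)/m$ with the floor bound $c_K \geq (2^K-m+1)/m$ (equivalently, the acceptance probability $M/2^K$ is within $O(m/2^K)$ of one), and then multiply, closing the gap via $K \geq 2k$ and $m \leq 2^k$. The only differences are cosmetic normalization (your $2^{-s}$ versus the paper's bit-shift map $s(x) \in [1/2,1)$) and the resulting threshold $2^K \geq m^2-1$ versus the paper's $2^K \geq 2^{2k}$.
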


\begin{proof}
The following are equivalent:
\begin{inlinelist}[label=(\roman*)]
\item $p_i$ and $q_i$ share the same most significant bit;
\item $\floor{\log(p_i)} = \floor{\log(q_i)}$;
\item $2^{-1-\floor{\log(p_i)}} q_i \in [1/2, 1)$.
\label{item:share-msb}
\end{inlinelist}
We will prove \cref{item:share-msb}.
Consider the function $s(x) \defeq 2^{-1-\floor{\log(x)}} x$, which
bit-shifts any positive real $x$ into the interval $[1/2, 1)$.
Because $a_i / m$ is not a power of two,
\begin{equation*}
s(p_i) = 2^{-1-\floor{\log(a_i/m)}}\frac{a_i}{m} \in (1/2, 1) \cap \frac{1}{m} \integers,
\end{equation*}
so $s(p_i) \geq 1/2 + 1/2m$ (or $s(p_i) \geq 1/2 + 1/m$ if $m$ is even).
The bounds $m \leq 2^k-1$ and $M > 2^K - 2^k$ give
\begin{align*}
&2^{-1-\floor{\log(a_i/m)}}\frac{A_i}{2^K}\\
&\geq \frac{m+1}{2m}\frac{M}{2^K}\\
&> \frac{2^k}{2(2^k-1)}\frac{2^K-2^k}{2^K}
\geq 1/2,
\end{align*}
which proves \cref{item:share-msb}.
\end{proof}

\begin{lemma}[Rejection toll bound]
\label{lemma:aldr-rejection-toll}
The rejection contribution to the relative FLDR toll is bounded in terms of the
rejection weight $A_0 \defeq 2^K-M$ as
\begin{equation*}
\begin{aligned}[b]
&\frac{2^K}{M} \left( \Ho{\frac{M}{2^K}} + \nuu{1 - \frac{M}{2^K}\right)} \\
  &\qquad < \frac{A_0}{M}\left(K+3-\floor{\log(A_0)}\right).
\end{aligned}
\qedhere
\end{equation*}
\end{lemma}

\begin{proof}
The leading bit of $A_0 / 2^K$ is at position $K - \floor{\log(A_0)}$, so
\begin{equation*}
\nuu{A_0 / 2^K} < (K + 1 - \floor{\log(A_0)}) \left(A_0 / 2^K\right).
\end{equation*}
Then
\begin{align*}
&\frac{2^K}{M} \left( \Ho{M/2^K} + \nuu{1 - M/2^K} \right) \\
&< \log\left(2^K/M\right) + \frac{2^K}{M} \frac{A_0}{2^K} (1-\floor{\log\left(A_0/2^K\right)}) \\
&< \frac{A_0}{M \ln(2)} + \frac{A_0}{M}\left(K+1-\floor{\log(A_0)}\right) \\
&< \frac{A_0}{M}\left(K+3-\floor{\log(A_0)}\right). \tag*{\qedhere}
\end{align*}
\end{proof}

These lemmas provide the tools needed to prove \cref{theorem:aldr-2k-toll-two}
(i.e., $\toll{\aldr[\bp, K]} < 2$ for all $\bp$ and all $K \geq 2k$),
given in \cref{appx:proof-theorem-aldr-2k-toll-two} of the online supplement.
The next result shows that
$K=2k$ is the smallest depth satisfying the entropy bound in
\cref{theorem:aldr-2k-toll-two}.

\begin{theorem}[Depth $K\,{=}\,2k$ is minimal]
\label{theorem:aldr-doubling-minimal}
For every $k \geq 4$, there exists a probability distribution $\bp$
such that $\fldr[\bp]$ has depth $k$ and the $\aldr$
entropy tolls satisfy
$\toll{\aldr[\bp, K]} > 2$
for $K = k, \dots, 2k-1$.
\end{theorem}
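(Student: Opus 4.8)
The plan is to exhibit an explicit family: for each $k \geq 4$ let $m \defeq 2^k - 3$ and
$\bp_k \defeq \bigl((2^{k-1}-1)/m,\ (2^{k-1}-2)/m\bigr)$, and show all the required tolls exceed $2$.
First I would verify that $\bp_k$ is a legal instance of \cref{problem:sampling} whose FLDR tree has depth $k$: the weights $a_1 = 2^{k-1}-1$ and $a_2 = 2^{k-1}-2$ are positive and coprime (their difference is $1$), they sum to $m$, and $2^{k-1} < m \le 2^k$ forces $\ceil{\log m} = k$.

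The key structural step is to control the amplified proposals $\bq_K$ for $K \in \{k,\dots,2k-1\}$. Since $c_k = \floor{2^k/m} = 1$, one has $\bq_K = \bq_k$ whenever $c_K = 2^{K-k}$, and an elementary check shows $c_K = \floor{2^K/m} = 2^{K-k}$ precisely when $3\cdot 2^{K-k} < 2^k - 3$; at the extreme $K = 2k-2$ this reads $3\cdot 2^{k-2} < 2^k - 3$, which holds iff $2^{k-2} > 3$, i.e.\ $k \ge 4$. Meanwhile $c_{2k-1} = 2^{k-1}+1$ (bracketing $2^{2k-1}/m$ between $(2^{k-1}+1)$ and $(2^{k-1}+2)$). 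Hence, over the whole range of interest, only two proposals arise, namely $\bq_k$ for $K = k,\dots,2k-2$ and the single new proposal $\bq_{2k-1}$, and by \cref{eq:fldr-cost} the toll is constant on each block. It therefore suffices to prove $\toll{\aldr[\bp_k,k]} > 2$ and $\toll{\aldr[\bp_k,2k-1]} > 2$.

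For these two evaluations I would use the identity $\toll{\aldr[\bp,K]} = (2^K/M)\bigl(\toll{\ky[\bq_K]} + H_{\rm b}(M/2^K)\bigr)$ from \cref{eq:fldr-toll-binary-entropy}. For $K = k$ we get $M = m$ and $\bq_k = (3,\ 2^{k-1}-1,\ 2^{k-1}-2)/2^k$; for $K = 2k-1$ we get $M = (2^{k-1}+1)(2^k-3) = 2^{2k-1}-2^{k-1}-3$ and $\bq_{2k-1} = (2^{k-1}+3,\ 2^{2k-2}-1,\ 2^{2k-2}-2^{k-1}-2)/2^{2k-1}$. In both cases the two accept-weights are of the form $2^{a}(2^{b}-1)$ — contiguous binary runs of ones — so their $\nu$-values are evaluated from $\sum_{i=1}^{j} i\,2^{-i} = 2-(j+2)2^{-j}$ together with \cref{remark:bernoulli-nu}, while $\trel$ of the small reject-weight ($3/2^k$ in the first case, $\Theta(2^{-k})$ in the second) is a bounded constant times a tiny probability. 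Pushing the arithmetic through, both tolls come out to $2 + 3\cdot 2^{-k} - O(2^{-2k})$, which is $>2$ for every $k \ge 4$, completing the proof.

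I expect the third step to be the main obstacle: $\toll{\ky[\bq_K]}$ and $H_{\rm b}(M/2^K)$ are individually of size $2 - \Theta(k\,2^{-k})$ and $\Theta(k\,2^{-k})$, so one must track the near-cancellation of the $\Theta(k\,2^{-k})$ contributions carefully enough to certify that the surviving term is the positive $+3\cdot 2^{-k}$ and is not overwhelmed by error terms coming from the logarithmic $H_1$ factors; sandwiching each $\log(1/(1-2^{-j}))$ between $2^{-j}/\ln 2$ and $2^{-j}/\ln 2 + O(2^{-2j})$ is enough, but it has to be applied termwise and the accumulated $O(2^{-2k})$ slack kept below $3\cdot 2^{-k}$ for all $k\ge 4$. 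A secondary point needing care is the case $k\ge 4$ in the structural step, since for $k=3$ one has $c_4 = 3 \neq 2$ and the clean two-proposal picture fails — which is exactly why the hypothesis $k\ge 4$ appears.
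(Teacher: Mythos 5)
Your construction and the bulk of your argument coincide with the paper's: the same family $\bp_k = \bigl((2^{k-1}-1)/m,(2^{k-1}-2)/m\bigr)$, $m = 2^k-3$, and the same structural step showing $c_K = 2^{K-k}$ for $K=k,\dots,2k-2$ (hence $\bq_k=\cdots=\bq_{2k-2}$ and a constant toll on that block, with $k\ge 4$ entering exactly where you say), together with $c_{2k-1}=2^{k-1}+1$. The differences are in how the two remaining tolls are certified to exceed $2$. For $K=k$ the paper also evaluates $\sum_i \nu(A_i/2^k) = 3-6\cdot 2^{-k}$ exactly, but then it finishes in one line: the expected cost is $3(2^k-2)/(2^k-3) > 3$ while $H(\bp)=H_{\rm b}\bigl((2^{k-1}-1)/(2^k-3)\bigr)\le 1$ because there are only two outcomes, so the toll exceeds $2$ with no asymptotic bookkeeping; the ``near-cancellation'' you identify as the main obstacle disappears once you use this trivial entropy bound instead of termwise logarithm estimates. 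For $K=2k-1$ the paper does not recompute anything: since each $a_i$ has bit-length at most $k-1$ and $c_{2k-1}=2^{k-1}+1$, none of the products $c_{2k-1}\times a_i$ ($i=0,1,2$) carries, so the equality case of \cref{theorem:aldr-leq-fldr} gives $\toll{\aldr[\bp,2k-1]}=\toll{\fldr[\bp]}$ immediately. Your plan to grind through the amplified weights directly is workable, but two details in your sketch are off: $A_2 = 2^{2k-2}-2^{k-1}-2$ is \emph{not} of the form $2^a(2^b-1)$ (its binary expansion is two separated runs of ones, bits $1,\dots,k-2$ and $k,\dots,2k-3$), so the single-run $\nu$ formula does not apply to it as stated; and the exact toll equals $2 + 3/(2^k-3) + \bigl(1-H_{\rm b}(\cdot)\bigr) > 2 + 3\cdot 2^{-k}$, so the correction has the opposite sign to your claimed $2+3\cdot 2^{-k}-O(2^{-2k})$. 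Neither slip is fatal to the argument, and both evaporate if you adopt the paper's two shortcuts ($H_{\rm b}\le 1$ and the carry-free equality with FLDR).
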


\begin{proof}
\begin{figure}
\tikzset{sibling distance=3pt, level distance=10pt}
\centering
\begin{subfigure}[t]{.325\linewidth}
\centering
\begin{adjustbox}{valign=t}
\begin{tikzpicture}[remember picture]
\Tree[.\node[branch,name=root-kiswa]{}; [ [ \node[leaf,thick,name=back8-kiswa]{\phantom{0}}; [ 7 \node[leaf,name=back16-kiswa]{\phantom{0}}; ] ] [ 6 7 ] ] [ 6 7 ] ]
\draw[red,-latex,out=110,in=180] (back16-kiswa.center) to (root-kiswa.south);
\draw[red,-latex,out=110,in=180] (back8-kiswa.center) to (root-kiswa.south);
\node[rectangle,name=hook,at={(back8-kiswa)}]{};
\end{tikzpicture}
\end{adjustbox}
\end{subfigure}\hfill
\begin{subfigure}[t]{.675\linewidth}
\centering
\begin{adjustbox}{valign=t}
\begin{tikzpicture}[remember picture]
\Tree [.\node[branch,name=root-gnash]{}; [ [ [.\node[branch,name=back8-gnash]{B}; [ [ \node[leaf,name=back64-gnash]{\phantom{0}}; [ 7 \node[leaf,name=back128-gnash]{\phantom{0}}; ] ] [ 6 7 ] ] [ 6 7 ] ] [ 7 \node[leaf,name=back16-gnash]{\phantom{0}}; ] ] [ 6 7 ] ] [ 6 7 ] ]
\draw[red,-latex] (back128-gnash.center) to[out=110,in=180] (root-gnash.south);
\draw[red,-latex] (back64-gnash.center) to[out=110,in=180] (root-gnash.south);
\draw[red,-latex] (back16-gnash.center) to[out=110,in=180] (root-gnash.south);
\end{tikzpicture}
\end{adjustbox}
\end{subfigure}
\begin{subfigure}[t]{.325\linewidth}
\caption{$\fldr$ ($k=4$)}
\end{subfigure}\hfill
\begin{subfigure}[t]{.675\linewidth}
\caption{ALDR ($K=7$)}
\end{subfigure}
\caption{Isomorphic DDG trees for $\bp = (6/13, 7/13)$.}
\label{fig:ddg-isomorphism}
\end{figure}

Let $k \geq 4$, $m \defeq (2^{k}-3)$, and
\begin{equation}
\bp \defeq \left(\frac{2^{k-1}-1}{m}, \frac{2^{k-1}-2}{m}\right).
\end{equation}
Applying \cref{eq:ddg-tree-toll,eq:fldr-cost} gives
\begin{align*}
&\toll{\fldr[\bp]}
\\
&\defeq
  \begin{aligned}[t]
    &\frac{2^k}{m} \left(
        \nuu{ \frac{3}{2^k} }
        + \nuu{ \frac{2^{k-1}-1}{2^k} }
        + \nuu{ \frac{2^{k-1}-2}{2^k} }
      \right)
    \\
    &- H(\bp)
  \end{aligned}
\\
&=
\frac{2^k}{2^k-3} \left[
    \begin{aligned}
        &\left(\frac{k-1}{2^{k-1}} + \frac{k}{2^k}\right) \\
        &+ \left(\frac{3}{2} - \frac{k+2}{2^k}\right) \\\
        &+ \left(\frac{3}{2} - \frac{k+1}{2^{k-1}}\right)
      \end{aligned}
      \right]
    - \Hb{\frac{2^{k-1}-1}{2^k-3}}
\\
&=
\frac{2^k}{2^k-3} \left[ 3 \frac{2^k-2}{2^k} \right]
- \Hb{\frac{2^{k-1}-1}{2^k-3}}
\\
&> 3 - 1
= 2.
\end{align*}
Because $k \geq 4$, we have $2^{2k-2} = 2^{k-2}m + 3\cdot2^{k-2}$ and
$3\cdot2^{k-2} < 2^k-3 \eqdef m$, so $c_{2k-2}=2^{k-2}$ and $c_K = 2^{K-k}$ for
$K = k,\dots,2k-2$, which implies that $\bq_k = \cdots = \bq_{2k-2}$
(cf.~\cref{eq:c-case-even}), and so the corresponding $\aldr[\bp, K]$ tolls are
equal.
For $K = 2k-1$, we have $c_{2k-1} = 2^{k-1}+1$, and every $a_i$ has a bit-length
of at most $k-1$, so there are no carries in any product $a_i \times c_{2k-1}$.
The toll of $\aldr[\bp, 2k-1]$ is thus also equal to the $\fldr[\bp]$ toll, by
the no-carry case of \cref{theorem:aldr-leq-fldr}.
\end{proof}

\Cref{fig:ddg-isomorphism} illustrates the argument from the preceding proof,
where $\aldr[\bp, 2k-1]$ is isomorphic to $\fldr[\bp]$,
unrolling the back edge at depth $k-1$ once.
Although the toll of $\fldr[\bp]$ exceeds $2$ for each of these distributions,
for larger $k$ it approaches $2$:
\begin{align*}
\toll{\fldr[\bp]}
&= \toll{\aldr[\bp, 2k-1]} \\
&< 2 + (k+3)2^{1-k}
\end{align*}
in accordance with \cref{theorem:aldr-generic-bound}.

\section{Further Entropy Properties}
\label{sec:properties}

\Cref{fig:fldr478} suggests that $\aldr[\bp,K]$ can
``interpolate'' between $\fldr[\bp]$ (when $K = k$) and the
entropy-optimal $\ky[\bp]$ (for sufficiently large $K$).
This section studies the properties of $\aldr[\bp,K]$ as $K$ increases.
\begin{itemize}[itemsep=5pt]
\item \Cref{sec:properties-ky} characterizes distributions
$\bp$ for which $\aldr[\bp,K]$ is entropy optimal for
some depth $K$.
\item \Cref{sec:properties-fldr} establishes that the entropy cost
of $\aldr[\bp,K]$ is upper bounded by that of $\fldr[\bp] \equiv \aldr[\bp,k]$,
providing a necessary and sufficient condition for strict inequality.
\end{itemize}

\subsection{Comparison of ALDR and Entropy-Optimal Sampling}
\label{sec:properties-ky}

The first result shows that \cref{theorem:aldr-2k-toll-two} is tight and
that for certain target distributions $\bp$,
$\aldr[\bp,K]$ fails to be entropy optimal for any $K$.

\begin{proposition}
\label{proposition:aldr-large-gap}
For every $\epsilon > 0$, there exists a discrete probability distribution $\bp$
such that $\toll{\ky[\bp]} < \epsilon$
and $\toll{\aldr[\bp,K]} > 2 - \epsilon$ for all $K$.
\end{proposition}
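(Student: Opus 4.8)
The plan is to build a distribution $\bp$ whose coprime denominator $m$ is odd and lies just below $2^k$, and whose weights are ``almost'' a prefix $2^{-1}, 2^{-2}, \dots$ of the infinite binary tree, together with a negligible correction. The first feature forces the rejection mass of \emph{every} amplified proposal $\bq_K$ to stay tiny, so that $M/2^K$ always sits just below $1$ and hence has relative toll close to $2$; the second feature makes $\toll{\ky[\bp]}$ negligible while keeping almost all the proposal mass on bit-shifts of $M/2^K$, which inherit that large relative toll. So $\aldr$ never manages to exploit amplification, no matter how large $K$ grows.

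Concretely, for large integers $j \geq 1$ and $\ell \geq 2$ I would take $o \defeq 2^\ell-1$, $m \defeq 2^j o$, and the $j+2$ coprime weights $(a_1,\dots,a_{j+2}) \defeq (2^{j-1}o,\, 2^{j-2}o,\, \dots,\, 2o,\, o,\, o-1,\, 1)$ (these sum to $m$ and are coprime by the trailing $1$), and set $\bp \defeq (a_1/m,\dots,a_{j+2}/m)$. Then $m = 2^{j+\ell}-2^j$, so the FLDR depth is $k = \ceil{\log m} = j+\ell$ and $2^k-m = 2^j$, i.e., the depth-$k$ rejection probability is $q_{k,0} = (2^k-m)/2^k = 2^{-\ell}$. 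For the first part of the claim, note $a_i/m = 2^{-i}$ for $i \le j$, so $\nu(a_i/m) = H_1(a_i/m)$ and these outcomes contribute nothing to $\toll{\ky[\bp]} = \sum_i(\nu(a_i/m)-H_1(a_i/m))$ (\cref{definition:relative-toll}); the two remaining probabilities are each below $2^{-j}$, so using $\trel{\cdot} < 2$ (\cref{corollary:relative-toll-bound-constant}) they contribute under $2^{2-j}$, giving $\toll{\ky[\bp]} < 2^{2-j}$.

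For the lower bound on $\toll{\aldr[\bp,K]}$, valid for all $K \ge k$: the amplified weights are $A_i = c_K\,2^{j-i}o$ for $1 \le i \le j$, so $q_{K,i} = A_i/2^K = (M/2^K)\,2^{-i}$ is a bit-shift of $M/2^K$, whence $\trel{q_{K,i}} = \trel{M/2^K}$ by \cref{lemma:bit-shift-relative-toll}. Dropping the nonnegative $H_{\rm b}$ term and the remaining (nonnegative) relative-toll summands in $\toll{\ky[\bq_K]}$, \cref{eq:fldr-toll-binary-entropy} gives $\toll{\aldr[\bp,K]} \ge (2^K/M)\,\trel{M/2^K}\sum_{i=1}^{j}q_{K,i} = \trel{M/2^K}(1-2^{-j})$. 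Since $c_K/2^K = \floor{2^K/m}/2^K$ is non-decreasing in $K$ — equivalently, the rejection probability is non-increasing, by \cref{prop:rejection-probability-decreasing} and \cref{eq:c-case-even} — we obtain $1 - M/2^K \le q_{k,0} = 2^{-\ell}$. Any real $x \in [1-2^{-\ell},1)$ has its first $\ell$ binary digits equal to $1$, so $\nu(x) \ge \sum_{d=1}^{\ell}d\,2^{-d} = 2-(\ell+2)2^{-\ell}$, while $H_1(x) \le -\log(1-x) \le 2^{1-\ell}$; since $x<1$, $\trel{x} \ge \nu(x)-H_1(x) \ge 2-(\ell+4)2^{-\ell}$. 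Hence $\toll{\aldr[\bp,K]} > 2 - (\ell+4)2^{-\ell} - 2^{1-j}$ for every $K \ge k$. Given $\epsilon>0$, picking $\ell$ with $(\ell+4)2^{-\ell}<\epsilon/2$ and then $j$ with $2^{2-j}<\epsilon$ and $2^{1-j}<\epsilon/2$ makes $\toll{\ky[\bp]}<\epsilon$ and $\toll{\aldr[\bp,K]}>2-\epsilon$ for all $K$.

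The main obstacle is the \emph{uniformity} in $K$ of the estimate $\trel{M/2^K}\approx 2$. It rests on two points that need care: first, that the geometric block $a_1,\dots,a_j$ survives amplification by an arbitrary $c_K$ as exact bit-shifts of $M/2^K$ (this is structural, so robust); second, that the rejection mass of the amplified proposal never exceeds its depth-$k$ value $2^{-\ell}$, which is precisely the monotonicity statement of \cref{prop:rejection-probability-decreasing} together with the even-case equality \cref{eq:c-case-even}. The remaining delicate items — the binary-digit claim for reals just below $1$ and the elementary bound on $H_1$ near $1$ — are routine, and the $\toll{\ky[\bp]}$ estimate is immediate once the geometric weights are recognised as powers of two.
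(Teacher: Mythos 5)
Your proof is correct, and it is worth comparing with the paper's, since the two share a core idea but differ in construction and in the mechanism that makes the lower bound uniform in $K$. The paper reduces to dyadic $\epsilon$ and takes weights $(1,1,1,3,\dots,3)$ with $m=6/\epsilon$, so that $1-\epsilon/2$ of the mass sits on outcomes whose probabilities are powers of two; the KY toll is then $3(\epsilon/6)\trel{\epsilon/6}<\epsilon$, and the ALDR bound follows in one line from \cref{lemma:aldr-pow-two-toll}, which evaluates the relative FLDR toll of a power-of-two probability exactly as $(2^K/M)\left(\nu(M/2^K)+\nu(1-M/2^K)\right)\geq 2$ --- i.e., the rejection contribution is retained and is precisely what pushes the relative toll up to $2$ for every amplification depth, with no special structure imposed on $m$. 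You likewise concentrate nearly all mass on power-of-two probabilities (the geometric prefix $2^{-1},\dots,2^{-j}$), but you drop the rejection contribution and instead force $\trel{M/2^K}$ itself to be near $2$ uniformly in $K$; this requires the extra structural choice $m=2^j(2^\ell-1)$, so that the rejection mass never exceeds its depth-$k$ value $2^{-\ell}$ (correctly justified via \cref{prop:rejection-probability-decreasing} together with \cref{eq:c-case-even}), whence $M/2^K$ has $\ell$ leading ones and $\nu(M/2^K)\geq 2-(\ell+2)2^{-\ell}$. Both routes are sound: the paper's is shorter because the lemma absorbs the uniformity in $K$, while yours is self-contained, using only the bit-shift invariance (\cref{lemma:bit-shift-relative-toll}), the decomposition \cref{eq:fldr-toll-binary-entropy}, and rejection-probability monotonicity, and it handles arbitrary $\epsilon$ without the reduction to dyadic $\epsilon$. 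One slip to fix: the chain ``$H_1(x)\leq-\log(1-x)\leq 2^{1-\ell}$'' is wrong as written, since $x\geq 1-2^{-\ell}$ gives $-\log(1-x)\geq\ell$; what you need (and evidently intend) is $H_1(x)\leq\log(1/x)\leq(1-x)/(x\ln 2)\leq 2^{1-\ell}$ for $\ell\geq 2$, which is what your final estimate uses.
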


\begin{proof}
It suffices to assume $\epsilon$ is a positive power of $1/2$,
by setting $\epsilon \gets \min\left(1/2, 2^{\floor{\log(\epsilon)}}\right)$ if needed.
Suppose $\bp = (a_1/m, a_2/m, \dots, a_n/m)$ is given by
\begin{align*}
n \defeq 2 + 2/\epsilon, \qquad
m \defeq 6 / \epsilon
\\
a_1 = a_2 = a_3 \defeq 1, \;
a_4 = \dots = a_n \defeq 3.
\end{align*}
The outcomes whose labels are $4, \dots, n$ have probability
$p_4 = \dots = p_n = \epsilon/2$ and total probability $1 - \epsilon/2$.
Then
\begin{equation*}
\toll{\ky[\bp]} = 3(\epsilon/6)\trel{\epsilon / 6} < \epsilon,
\end{equation*}
whereas
\begin{equation*}
\toll{\aldr[\bp,K]}
\geq (1-\epsilon/2) \trelfldr{A_n, M}
\geq 2 - \epsilon
\end{equation*}
by \cref{lemma:aldr-pow-two-toll}.
\end{proof}

For a distribution $\bp$ where almost all of the probabilities are powers
of two, the entropy toll of $\aldr[\bp,K]$ samplers can remain arbitrarily
close to two with increasing DDG tree depth $K$.

\begin{figure}[t]
\centering
\includegraphics[width=\linewidth]{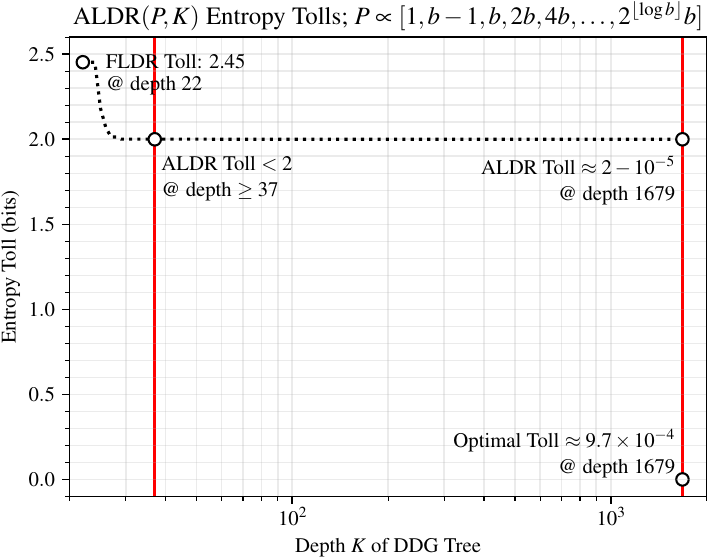}
\caption{Entropy toll of $\aldr[\bp,K]$ does not necessarily converge to
the optimal toll as $K$ increases.
Contrast to \cref{fig:fldr478}, where $\aldr[\bp,K]$ interpolates
between the $\fldr$ and entropy-optimal samplers as $K$ increases.
}
\label{fig:fldr-toll-1669}
\end{figure}

\begin{figure*}
\centering
\tikzset{level distance=20pt}
\tikzset{hidden/.style={color=gray, inner sep=1pt, minimum height=10pt}}
\captionsetup[subfigure]{skip=0pt,belowskip=10pt}
\begin{subfigure}[t]{.25\linewidth}
\caption{Unrolled Zero Times}
\label{fig:fldr-tree-unrolling-0}
\centering
\tikzset{sibling distance=8pt}
\begin{tikzpicture}
\Tree[ 4 [ [ R 1 ] R ] ]
\end{tikzpicture}
\end{subfigure}%
\begin{subfigure}[t]{.25\linewidth}
\caption{Unrolled One Time}
\label{fig:fldr-tree-unrolling-1}
\centering
\tikzset{sibling distance=2pt}
\begin{tikzpicture}
\Tree[ 4 [ [ [.\node[hidden]{R}; 4 [ [ R 1 ] \node(nleft){R}; ] ] 1 ] [.\node[hidden]{R}; 4 [ [ \node(nright){R}; 1 ] R ] ] ] ]
\node[draw,red,fit=(nleft)(nright),inner sep=1pt]{};
\end{tikzpicture}
\end{subfigure}%
\begin{subfigure}[t]{.5\linewidth}
\caption{Unrolled Two Times}
\label{fig:fldr-tree-unrolling-2}
\centering
\tikzset{sibling distance=2pt}
\begin{tikzpicture}
\Tree[ 4 [ [ [.\node[hidden]{R}; 4 [ [ [.\node[hidden]{R}; 4 [ [ R 1 ] R ] ] 1 ] [.\node[hidden]{R}; \node(nleft){4}; [ [ R 1 ] R ] ] ] ] 1 ] [.\node[hidden]{R}; 4 [ [ [.\node[hidden](ntop){R}; 4 [ [ R \node(nbottom){1}; ] [.\node(nright){R}; ] ] ] 1 ] [.\node[hidden]{R}; 4 [ [ R 1 ] R ] ] ] ] ] ]
\node[draw,red,fit=(nleft)(ntop)(nright)(nbottom),inner sep=1pt]{};
\end{tikzpicture}
\end{subfigure}
\caption{DDG tree of $\fldr$ sampler with output distribution $\bp = (1/5, 4/5)$ unrolled zero, one, and two times.
Leaf nodes labeled R (reject) indicate a back edge to the root of the tree.
In panels \subref{fig:fldr-tree-unrolling-1} and \subref{fig:fldr-tree-unrolling-2},
internal nodes labeled R (in gray) identify the locations of the unrolling.}
\label{fig:fldr-tree-unrolling}
\end{figure*}

\begin{example}[ALDR-KY gap]
\Cref{fig:fldr-toll-1669} shows the tolls of
$\fldr$, $\aldr$, and entropy-optimal DDG trees for
\begin{equation*}
\bp =  \left(\frac{1}{m}, \frac{b-1}{m}, \frac{b}{m}, \frac{2b}{m}, \frac{4b}{m}, \dots, \frac{2^{10}b}{m}\right),
\end{equation*}
where $b=1669$ and $m = 2^{11}b$.
Following the example in \cref{proposition:aldr-large-gap}---for which many
probabilities that are powers of two comprise most of the probability---this
choice ensures that $1-2^{-11}$ of the probability consists of
powers of two, which drives $\toll{\ky[\bp]}$ to near zero
(more precisely, $9.7 \times 10^{-4}$ bits),
whereas the toll of $\aldr[\bp,K]$ remains near $2$ for all $K$.
Additionally, $b=1669$ is a prime number such that the order of 2 modulo $b$ is
$b-1$, which ensures that the entropy-optimal sampler $\ky[\bp]$ has a large
depth of $b-1 + \ceil{\log(b)} = 1679$.
The $\fldr[\bp]$ sampler has depth $k=22$ and toll $2.45$ bits.
The $\aldr[\bp,K=37]$ sampler has a toll that is strictly
less than two bits (cf.~\cref{theorem:aldr-2k-toll-two}), and it remains
slightly below this value for all depths $K = 37, \dots, 1679$ without
ever reaching the optimal toll of nearly zero.
\end{example}

To address the question of when $\aldr[\bp,K]$ can be entropy optimal, we recall
that \cref{theorem:knuth-yao} fully characterizes entropy optimality in terms of
the distribution of leaf nodes across levels in a DDG tree.
To apply \cref{theorem:knuth-yao} to $\aldr[\bp,K]$, it is necessary to
characterize the distribution of leaf nodes in $\aldr[\bp,K]$, which are
obtained by infinitely unrolling the back edges that correspond to the
rejection label in $\ky[\bq_K]$.
The following theorem provides a method for directly counting the leaf nodes in
$\aldr[\bp,K]$ in terms of the leaf nodes in $\ky[\bq_K]$.
The proof is in~\cref{appx:proof-theorem-generating-fn}
of the online supplement.

\begin{theorem}[Counting leaves]
\label{theorem:generating-fn}
The number of leaf nodes with label $i$ at depth $d$ in the unrolled tree
$\fldr[A_1,\dots,A_n]$ can be computed in terms of the leaves in the unexpanded
tree $\ky[\bq]$ as the coefficients of a generating function (i.e., formal power
series):
\begin{align*}
&\sum_{d=0}^{\infty} \leaves{\fldr*[A_1,\dots,A_n]}{d}{i} z^d
= \frac{\sum_{d=0}^{\infty} \leaves{\ky*[\bq]}{d}{i} z^d}
  {1 - \sum_{d=0}^{\infty} \leaves{\ky*[\bq]}{d}{0} z^d}
\\
&= \frac{\sum_{d=0}^K \epsd{A_i/2^K}z^d}{1 - \sum_{d=0}^K \epsd{A_0/2^K}z^d}
\\
&= \sum_{j=0}^{\infty}
  \left\lbrace
    \left(\sum_{d=0}^K\epsd{\frac{A_i}{2^K}}z^d\right)
    \left(\sum_{d=0}^K \epsd{\frac{A_0}{2^K}}z^d\right)^j
  \right\rbrace,
\end{align*}
where $(1-f(z))^{-1} \defeq \sum_{j=0}^{\infty} f(z)^j$ for
a generating function $f(z)$.
\end{theorem}

\begin{corollary}[Optimal FLDR rejection]
\label{corollary:fldr-power-of-two}
If the rejection probability $q_0 \defeq A_0/2^K$ in $\fldr$
is neither zero nor a power
of two, then the FLDR tree is not entropy optimal, i.e.,
$\toll{\fldr[A_1,\dots,A_n]} > \toll{\ky[\bp]}$.
\end{corollary}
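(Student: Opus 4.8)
The plan is to derive the result from the leaf-counting generating function of \cref{theorem:generating-fn}, the entropy-optimality characterization of \cref{theorem:knuth-yao}, and \cref{remark:toll-of-dist}. Since $\fldr[A_1,\dots,A_n]$ is a DDG tree whose output distribution is $\bp$, \cref{remark:toll-of-dist} reduces the claim $\toll{\fldr[A_1,\dots,A_n]} > \toll{\ky[\bp]}$ to showing that $\fldr[A_1,\dots,A_n]$ is \emph{not} entropy optimal. By the second equivalent condition of \cref{theorem:knuth-yao}, an entropy-optimal tree $T$ for $\bp$ satisfies $\leaves{T}{d}{i} = \epsilon_d(p_i) \in \set{0,1}$ for all $i$ and $d$; so it suffices to exhibit a label $i \in \set{1,\dots,n}$ and a depth $d$ with $\leaves{\fldr[A_1,\dots,A_n]}{d}{i} \geq 2$.

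To produce such $i$ and $d$, I would work with the generating functions in \cref{eq:generating-fn}. Write $g_i(z) \defeq \sum_{d=0}^K \epsilon_d(A_i/2^K) z^d$ and $f(z) \defeq \sum_{d=0}^K \epsilon_d(A_0/2^K) z^d$, so that the generating function for the label-$i$ leaves of $\fldr[A_1,\dots,A_n]$ is $g_i(z) \sum_{j=0}^{\infty} f(z)^j$, a formal power series with nonnegative coefficients. Since $A_0 = 2^K - M < 2^K$, the coefficient $\epsilon_0(A_0/2^K)$ vanishes, so $f(0) = 0$ and the geometric series is legitimate. Since $A_0$ is nonzero and not a power of two, $A_0$ has at least two binary digits equal to one, so $f(z)$ has at least two nonzero terms $z^{r_1}$ and $z^{r_2}$ with $1 \leq r_1 < r_2 \leq K$. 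The weights $A_1,\dots,A_n$ are positive, so $g_i(z) \neq 0$; fix any $i \in \set{1,\dots,n}$ together with a position $d_0 \in \set{0,\dots,K}$ for which $\epsilon_{d_0}(A_i/2^K) = 1$.

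Finally I would read off the coefficient of $z^{d_0 + r_1 + r_2}$. Because every series in sight has nonnegative coefficients, this coefficient is at least the coefficient of the same monomial in the single summand $g_i(z) f(z)^2$. The cross term shows that the coefficient of $z^{r_1 + r_2}$ in $f(z)^2$ is at least $2$, and convolving with the $z^{d_0}$ term of $g_i(z)$ shows that the coefficient of $z^{d_0 + r_1 + r_2}$ in $g_i(z) f(z)^2$ is at least $2$. Hence $\leaves{\fldr[A_1,\dots,A_n]}{d_0 + r_1 + r_2}{i} \geq 2$, so $\fldr[A_1,\dots,A_n]$ fails the second condition of \cref{theorem:knuth-yao}, is not entropy optimal, and therefore $\toll{\fldr[A_1,\dots,A_n]} > \toll{\ky[\bp]}$ by \cref{remark:toll-of-dist}.

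There is no essential obstacle; the only points requiring a little care are that $f(0) = 0$ (so that $(1 - f(z))^{-1}$ is a well-defined power series), that ``nonzero and not a power of two'' genuinely forces $f(z)$ to have at least two terms, and that nonnegativity of all coefficients prevents the cross term in $f(z)^2$ from being cancelled elsewhere in the series. Equivalently, one may phrase the argument as: $(1 - f(z))^{-1}$ already has a coefficient that is at least $2$ (coming from the cross term in $f(z)^2$), and multiplying by any nonzero $g_i(z)$ with nonnegative coefficients preserves a coefficient that is at least $2$.
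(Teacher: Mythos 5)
Your proposal is correct and follows essentially the same route as the paper: both extract two set bits of $A_0$ and one set bit of some accepted weight, and use the cross term in the square of the rejection polynomial inside the geometric series of \cref{theorem:generating-fn} to exhibit a depth with two leaves of the same label, contradicting the leaf-count characterization in \cref{theorem:knuth-yao}. The only cosmetic difference is that the paper fixes the label $1$ while you allow an arbitrary label $i$, and you spell out the nonnegativity and $f(0)=0$ justifications slightly more explicitly.
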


\begin{proof}
If $A_0/2^K$ is neither zero nor a power of two, then $A_0$ must have
at least two set bits, say
$\epsd[d_1]{A_0/2^K} = \epsd[d_2]{A_0/2^K} = 1$.
Also, let $d_3$ be any set bit in $A_1 / 2^K$, i.e.,
$\epsd[d_3]{A_1/2^K} = 1$.
Using $\left[z^D\right]\sum_d g_d z^d \defeq g_D$ to denote the $z^D$-coefficient of
a given generating function, the number of leaf nodes with label $1$ at depth
$d_1 + d_2 + d_3$ (cf.~\cref{fig:fldr-tree-unrolling-2}, depth $8$) in the fully
unrolled tree is
\begin{align*}
&\left[z^{d_1+d_2+d_3}\right]
  \frac{\sum_{d=0}^K z^d \epsd{A_1/2^K}}{1 - \sum_{d=0}^K z^d \epsd{A_0/2^K}}
\\
&\geq \left[z^{d_1+d_2+d_3}\right] \frac{z^{d_3}}{1 - z^{d_1} - z^{d_2}} \\
&\geq \left[z^{d_1+d_2+d_3}\right] z^{d_3}(z^{d_1} + z^{d_2})^2 \\
&\geq \left[z^{d_1+d_2+d_3}\right] 2z^{d_1+d_2+d_3} \\
&= 2,
\end{align*}
which is entropy suboptimal by \cref{theorem:knuth-yao}.
\end{proof}

\begin{example}[Suboptimal FLDR rejection]
\Cref{fig:fldr-tree-unrolling} shows DDG trees obtained by unrolling
the back edges of $\fldr[\bp]$ zero, one, and two times,
where $\bp \defeq (1/5, 4/5)$.
Nodes labeled $1$ and $4$
represent the outcomes with $1/5$ and $4/5$ probability, respectively,
as in \cref{fig:ddg-ky-fldr}.
Nodes labeled R represent ``reject nodes'', which are back edges to
the root of the tree.
The twice-unrolled tree in \cref{fig:fldr-tree-unrolling-2} has two leaves
with label $4$ at depth $6$, and two leaves with label $1$ at depth $8$,
corresponding to two possible orders of traversing the $1/4$-- and
$1/8$--probability reject nodes in \cref{fig:fldr-tree-unrolling-0}.
This type of duplication occurs for any FLDR (or ALDR) tree whose rejection
probability is not a power of two.
The proof of \cref{corollary:fldr-power-of-two} uses this same idea of
traversing the same two rejection nodes in a different order to identify a
depth with two identical labels.
\end{example}

We now characterize which probability distributions can have
entropy-optimal ALDR trees.

\begin{theorem}[Characterization of entropy-optimal ALDR trees]
\label{theorem:aldr-ky-match-bounded}
Consider coprime integer weights $(a_1,\dots,a_n)$
whose sum $m$ is not a power of two.
Write $m = 2^u x$ for odd $x>1$ and $u \geq 0$, and
write $\ell$ for the order of $2 \bmod x$,
so that a standard construction of the entropy-optimal \citeauthor{knuth1976}
tree has depth $u + \ell$ (cf.~\cref{fig:fldr-ky-match-even-optimal}).
The following statements are equivalent.
\begin{enumerate}[label=(\labelcref{theorem:aldr-ky-match-bounded}.\arabic*)]
\item \label{item:aldr-ky-match-bounded-k}
$\aldr[\bp,K]$ is entropy optimal
for some depth $K \leq u+\ell$.

\item \label{item:aldr-ky-match-bounded-k-ul}
$\aldr[\bp,u+\ell]$ is entropy optimal.

\item \label{item:aldr-ky-match-bounded-gen-eq}
For each $i=1,\dots,n$,
the binary expansions of the probabilities $p_i$ and $q_i$ satisfy
\begin{equation*}
(1 - z^\ell) \sum_d \epsd{p_i} z^d = \sum_d \epsd{q_i} z^d.
\end{equation*}

\item \label{item:aldr-ky-match-bounded-gen}
For each $i=1,\dots,n$,
all the coefficients of the generating function
$(1 - z^\ell) \sum_d \epsd{p_i} z^d$ are nonnegative.

\item \label{item:aldr-ky-match-bounded-p}
For each $i=1,\dots,n$,
the binary expansion of $p_i$ satisfies
$\epsd{p_i} \leq \epsd[d+\ell]{p_i}$
for all $d \ge 1$.
\qedhere
\end{enumerate}
\end{theorem}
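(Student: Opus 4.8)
The plan is to route the five-way equivalence through a single arithmetic computation at depth $K = u+\ell$, combined with the leaf-counting generating function of \cref{theorem:generating-fn} and the entropy-optimality criterion of \cref{theorem:knuth-yao}. First I would record the relevant weights at $K = u+\ell$. Since $x \mid 2^\ell - 1$, we get $c_K = \floor{2^K/m} = \floor{2^\ell/x} = (2^\ell-1)/x$, hence $M = c_K m = 2^u(2^\ell-1)$ and the rejection weight is $A_0 = 2^K - M = 2^u$, which is a power of two; moreover $q_i = A_i/2^K = (1-2^{-\ell})p_i$ for every $i$. (Note $u+\ell \geq k$: from $m > 2^{k-1}$ we get $x > 2^{k-1-u}$, so $\ell > \log(x) > k-1-u$ and $\ell \geq k-u$, so $\aldr[\bp,u+\ell]$ is well-defined.) I would also observe that the binary expansion of $p_i = a_i/(2^u x)$ is eventually periodic with period dividing $\ell$ after a preperiod of length at most $u$, so that $(1-z^\ell)\sum_d \epsilon_d(p_i)z^d$ is in fact a polynomial of degree at most $u+\ell$, whose $z^d$-coefficient equals $\epsilon_d(p_i) - \epsilon_{d-\ell}(p_i) \in \set{-1,0,1}$.

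For \ref{item:aldr-ky-match-bounded-k-ul} $\iff$ \ref{item:aldr-ky-match-bounded-gen-eq}, I would substitute $A_0/2^K = 2^{-\ell}$ into \cref{theorem:generating-fn} to get $\sum_d \leaves{\aldr[\bp,u+\ell]}{d}{i} z^d = (\sum_d \epsilon_d(q_i) z^d)/(1-z^\ell)$; by \cref{theorem:knuth-yao}, $\aldr[\bp,u+\ell]$ is entropy optimal exactly when $\leaves{\aldr[\bp,u+\ell]}{d}{i} = \epsilon_d(p_i)$ for all $d$ and all $1 \leq i \leq n$, i.e.\ exactly when $(1-z^\ell)\sum_d \epsilon_d(p_i)z^d = \sum_d \epsilon_d(q_i)z^d$, which is \ref{item:aldr-ky-match-bounded-gen-eq}. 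The implication \ref{item:aldr-ky-match-bounded-gen-eq} $\Rightarrow$ \ref{item:aldr-ky-match-bounded-gen} is immediate since the right-hand side has coefficients in $\set{0,1}$; conversely, because the coefficients of $(1-z^\ell)\sum_d \epsilon_d(p_i)z^d$ lie in $\set{-1,0,1}$, nonnegativity forces it to be a $\set{0,1}$-coefficient polynomial evaluating to $(1-2^{-\ell})p_i = q_i$ at $z = 1/2$, and since $\sum_d \epsilon_d(q_i)z^d$ is the terminating binary expansion of the dyadic number $q_i$, uniqueness of finite binary representations forces the two polynomials to coincide, giving \ref{item:aldr-ky-match-bounded-gen} $\Rightarrow$ \ref{item:aldr-ky-match-bounded-gen-eq}. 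Finally \ref{item:aldr-ky-match-bounded-gen} $\iff$ \ref{item:aldr-ky-match-bounded-p} is the trivial restatement that the $z^d$-coefficient of $(1-z^\ell)\sum_d\epsilon_d(p_i)z^d$ is $\epsilon_d(p_i) - \epsilon_{d-\ell}(p_i)$.

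It remains to show \ref{item:aldr-ky-match-bounded-k} $\iff$ \ref{item:aldr-ky-match-bounded-k-ul}; the implication from \ref{item:aldr-ky-match-bounded-k-ul} is trivial, so suppose $\aldr[\bp,K]$ is entropy optimal for some $K \leq u+\ell$. Since $m$ is not a power of two, $A_0 = 2^K - c_K m \neq 0$, so \cref{corollary:fldr-power-of-two} forces $A_0/2^K = 2^{-j}$ for some $j \geq 1$, hence $M = 2^{K-j}(2^j-1)$. Comparing the odd part of this with that of $M = c_K 2^u x$ gives $x \mid 2^j-1$, so $\ell \mid j$ and $j \geq \ell$, while comparing $2$-adic valuations gives $K - j = v_2(c_K) + u \geq u$. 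Therefore $K \geq u + j \geq u + \ell$, forcing $K = u+\ell$, which is \ref{item:aldr-ky-match-bounded-k-ul}.

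The step I expect to require the most care is the direction \ref{item:aldr-ky-match-bounded-gen} $\Rightarrow$ \ref{item:aldr-ky-match-bounded-gen-eq}, where one must promote a nonnegativity-of-coefficients hypothesis to an exact generating-function identity: this hinges on the fact that $(1-z^\ell)\sum_d\epsilon_d(p_i)z^d$ is genuinely a finite polynomial (which uses eventual periodicity of the binary expansion) and on uniqueness of terminating binary expansions of dyadic numbers. The supporting computation $A_0 = 2^u$ at $K = u+\ell$ is the other linchpin, tying \cref{theorem:generating-fn} to the denominator $1 - z^\ell$; everything else is bookkeeping with \cref{theorem:knuth-yao} and \cref{corollary:fldr-power-of-two}.
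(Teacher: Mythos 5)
Your proof is correct and follows essentially the same route as the paper: \cref{corollary:fldr-power-of-two} to force a power-of-two rejection probability (hence $K=u+\ell$), \cref{theorem:generating-fn} with $A_0/2^K=2^{-\ell}$ plus \cref{theorem:knuth-yao} for the generating-function criterion, uniqueness of terminating binary expansions for the nonnegativity direction, and the coefficient identity for the final equivalence. The only difference is that you spell out details the paper leaves implicit (the computation $c_K=(2^\ell-1)/x$, $A_0=2^u$, the check $u+\ell\geq k$, and the $2$-adic valuation argument showing $u+\ell$ is the minimal such depth), which is a faithful elaboration rather than a new approach.
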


\begin{proof}
By \cref{corollary:fldr-power-of-two}, the rejection probability of
the ALDR tree must be a power of two in order to achieve
entropy optimality.
The smallest depth at which the rejection probability $1-cm/2^K$ becomes a power
of two is $K = u + \ell$, where $M = 2^{u+\ell} - 2^u$, so
\labelcref{item:aldr-ky-match-bounded-k,item:aldr-ky-match-bounded-k-ul} are
equivalent.

\Cref{theorem:generating-fn,theorem:knuth-yao} show that
\labelcref{item:aldr-ky-match-bounded-k-ul,item:aldr-ky-match-bounded-gen-eq} are
equivalent, because the distribution of labeled leaves at different depths in
the tree uniquely determines the set of entropy-optimal trees.

For the equivalence of
\labelcref{item:aldr-ky-match-bounded-gen-eq,item:aldr-ky-match-bounded-gen}, the
forward implication follows from the nonnegativity of $\epsd{q_i}$.
For the reverse implication, consider the following properties of
$(1 - z^\ell) \sum_d \epsd{p_i} z^d$:
its only nonzero coefficients appear at $0 < d \leq u+\ell$,
every coefficient is at most $1$,
and when evaluated at $z \mapsto 1/2$, it becomes $q_i \defeq A_i / 2^K$.
Given the additional assumption of nonnegative coefficients, uniqueness of the
finite binary representation of dyadic numbers then implies that
$(1 - z^\ell) \sum_d \epsd{p_i} z^d = \sum_d \epsd{A_i / 2^K} z^d$,
which establishes the reverse implication.

Lastly, the equivalence of
\labelcref{item:aldr-ky-match-bounded-gen,item:aldr-ky-match-bounded-p}
follows from
$(1 - z^\ell) \sum_d \epsd{p_i} z^d
= \sum_d (\epsd{p_i} - \epsd[d-\ell]{p_i})$.
\end{proof}

\begin{remark}[Full characterization of entropy-optimal ALDR trees]
\label{remark:aldr-ky-match-general}
In the notation of \cref{theorem:aldr-ky-match-bounded},
$\aldr[\bp,K]$ can match the entropy-optimal tree
$\ky[\bp]$ within
depth $K \leq u+\Lambda\ell$ if and only if
there is some $\lambda \in \{1, \dots, \Lambda\}$ such that
for all $d \geq 1$ and $1 \leq i \leq n$,
the bits in the binary expansion of $p_i$ satisfy
$\epsd{p_i} \leq \epsd[d+\lambda\ell]{p_i}$.
This result follows directly from the proof of
\cref{theorem:aldr-ky-match-bounded},
but allowing any rejection probability $2^{-\lambda\ell}$
and replacing $1/(1-z^\ell)$ with $1/(1-z^{\lambda\ell})$.
If the inequality holds for some $\Lambda$,
then it also holds for $\lambda = \ceil{u / \ell}$.
This property is similar to the requirement that no probability be dyadic,
but stronger.
For example, it is impossible to obtain an entropy-optimal representation
of $5/6 = (0.1\overline{10})_2$ using any finite DDG tree with a
back edge to the root.
\end{remark}

\begin{corollary}[Entropy-optimal ALDR trees for odd denominators]
\label{proposition:fldr-ky-match-odd}
In the notation of \cref{theorem:aldr-ky-match-bounded},
if the distribution $\bp$ has odd denominator $m$,
then $\aldr[\bp,\ell]$ is the entropy-optimal tree.
\end{corollary}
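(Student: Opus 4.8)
The plan is to apply \cref{theorem:aldr-ky-match-bounded} with the choice $K=\ell$. Since $m$ is odd, the decomposition $m = 2^u x$ with $x$ odd forces $u=0$ and $x=m$, so $\ell$ is the multiplicative order of $2$ modulo $m$ and the Knuth--Yao depth $u+\ell$ equals $\ell$. (The standing hypothesis of \cref{theorem:aldr-ky-match-bounded} that $m$ is not a power of two excludes only $m=1$, where $\bp$ is a point mass and is trivially entropy optimal.) It therefore suffices to verify one of the equivalent conditions of that theorem; I would verify \labelcref{item:aldr-ky-match-bounded-p}, namely that $\epsilon_d(p_i) \le \epsilon_{d+\ell}(p_i)$ for every $d \ge 1$ and every $1 \le i \le n$.

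The key step is to show that the binary expansion of each $p_i = a_i/m$ is purely periodic with period dividing $\ell$. Write $p_i = a_i'/m_i'$ in lowest terms; then $m_i' \mid m$, so $m_i'$ is odd and $2^\ell \equiv 1 \pmod{m_i'}$, i.e.\ $m_i' \mid 2^\ell - 1$. Writing $p_i = N_i/(2^\ell - 1)$ with $0 \le N_i < 2^\ell - 1$ and using the identity $1/(2^\ell - 1) = \sum_{j\ge 1} 2^{-j\ell}$, the binary expansion of $p_i$ is the $\ell$-bit binary string of $N_i$ repeated indefinitely. In particular $\epsilon_{d+\ell}(p_i) = \epsilon_d(p_i)$ for all $d \ge 1$, so the required inequality holds with equality.

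Having established \labelcref{item:aldr-ky-match-bounded-p}, \cref{theorem:aldr-ky-match-bounded} immediately yields \labelcref{item:aldr-ky-match-bounded-k-ul}: the tree $\aldr[\bp,u+\ell] = \aldr[\bp,\ell]$ is entropy optimal. (Equivalently, one could verify \labelcref{item:aldr-ky-match-bounded-gen}: by periodicity the generating function $(1-z^\ell)\sum_d \epsilon_d(p_i) z^d$ collapses to the polynomial $\sum_{d=1}^{\ell} \epsilon_d(p_i) z^d$, which manifestly has nonnegative coefficients.)

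I expect no serious obstacle here. The only point requiring care is the passage to lowest terms, which is what licenses the conclusion that $\ell = \mathrm{ord}_m(2)$ is a common period of the binary expansions of \emph{all} the $p_i$ simultaneously (the order of $2$ modulo each reduced denominator $m_i'$ divides $\ell$); everything else is a direct application of \cref{theorem:aldr-ky-match-bounded}.
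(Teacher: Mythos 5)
Your proposal is correct and follows essentially the same route as the paper's own (one-line) proof: apply \cref{theorem:aldr-ky-match-bounded} with $u=0$, using the fact that the binary expansion of each $p_i$ is purely periodic with period $\ell$ so that condition \labelcref{item:aldr-ky-match-bounded-p} holds with equality. Your additional details---passing through $m \mid 2^\ell-1$ (or the reduced denominators $m_i'$) to justify the periodicity, and noting that the excluded case $m=1$ is trivial---are just an explicit elaboration of what the paper leaves implicit.
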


\begin{proof}
Apply \cref{theorem:aldr-ky-match-bounded}, noting that $u=0$ and
the binary expansion of each $p_i$ repeats with period $\ell$.
\end{proof}

\begin{example}[ALDR matching \citeauthor{knuth1976}]
\begin{figure}[b]
\centering
\tikzset{sibling distance=1pt, level distance=10pt}
\begin{subfigure}[t]{.31\linewidth}
\centering
\begin{adjustbox}{valign=t}
\begin{tikzpicture}
\Tree[.\node[branch,name=root]{}; [ [ \node[leaf,name=back1]{\phantom{0}}; [ 1 3 ] ] [ 2 3 ] ] [ 4 \node[leaf,name=back2]{\phantom{0}}; ] ]
\draw[red,-latex,out=90,in=0] (back2.center) to (root);
\draw[red,-latex,out=90,in=180] (back1.center) to (root);
\end{tikzpicture}
\end{adjustbox}
\end{subfigure}%
\begin{subfigure}[t]{.35\linewidth}
\centering
\begin{adjustbox}{valign=t}
\begin{tikzpicture}
\Tree[.\node[branch,name=root]{}; [ [ [ \node[leaf,name=back]{\phantom{0}}; 2 ] [ [ 1 3 ] 1 ] ] [ 2 4 ] ] [ 3 4 ] ]
\draw[red,-latex,out=90,in=180] (back.center) to (root);
\end{tikzpicture}
\end{adjustbox}
\end{subfigure}%
\begin{subfigure}[t]{.33\linewidth}
\centering
\begin{adjustbox}{valign=t}
\begin{tikzpicture}
\Tree [ [.\node[name=root1,branch,fill=blue]{}; [ [ [ \node[leaf,name=back1]{\phantom{0}}; \node[leaf,name=back2]{\phantom{0}}; ] [ 1 3 ] ] [ 1 2 ] ] [ 2 4 ] ] [.\node[name=root2,branch]{}; 3 4 ] ]
\draw[red,-latex,out=90,in=180] (back1.center) to (root1);
\draw[red,-latex,out=45,in=180] (back2.center) to (root2);
\end{tikzpicture}
\end{adjustbox}
\end{subfigure}

\begin{subfigure}{.33\linewidth}
\caption{$\fldr[\bp]$}
\label{fig:fldr-ky-match-even-fldr}
\end{subfigure}%
\begin{subfigure}{.33\linewidth}
\caption{$\aldr[\bp,5]$}
\label{fig:fldr-ky-match-even-amplified}
\end{subfigure}%
\begin{subfigure}{.33\linewidth}
\caption{$\ky[\bp]$}
\label{fig:fldr-ky-match-even-optimal}
\end{subfigure}

\caption{Comparison of $\fldr$ (depth 4), ALDR (depth 5),
  and \citeauthor{knuth1976} DDG trees for $\bp = (1, 2, 3, 4)/10$.
  The DDG tree representations in \subref{fig:fldr-ky-match-even-amplified} and \subref{fig:fldr-ky-match-even-optimal}
  are isomorphic and identical upon infinite unrolling.
  }
\label{fig:fldr-ky-match-even}
\end{figure}

Suppose $\bp = (1, 2, 3, 4)/10$.
In the notation of \cref{theorem:aldr-ky-match-bounded}, $m=10$,
$u=1$, $x=5$, and the order of $2 \bmod 5$ is $\ell=4$.
The binary expansion of each probability repeats with period $\ell$ and
with zero preperiod (i.e., the periodic repetition starts immediately with
the first set bit).
\Cref{fig:fldr-ky-match-even} shows that
$\aldr[\bp, 5]$ matches the entropy-optimal sampler $\ky[\bp]$;
its rejection node points to the root, whereas a typical representation of
the entropy-optimal uses two back edges that target the
children of the root.
\end{example}

\subsection{Comparison of ALDR and FLDR}
\label{sec:properties-fldr}

This section establishes conditions under which $\aldr[\bp,K]$ has a lower
entropy cost than $\fldr[\bp] \equiv \aldr[\bp,k]$
for a given $K > k$.
The first result shows that the toll does not necessarily decrease with $K$.

\begin{proposition}[ALDR toll not monotonic in depth]
\label{proposition:aldr-not-monotonic-decrease}
There exists a discrete probability distribution $\bp$ and depth $K$ such that
$\toll{\aldr[\bp, K]} < \toll{\aldr[\bp, K+1]}$.
\end{proposition}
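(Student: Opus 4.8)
The plan is to exhibit one explicit target distribution $\bp$ and one depth $K$ for which amplifying by a single level strictly increases the toll. The cleanest route is to pick a small $\bp$ whose ALDR tree is \emph{entropy optimal} at depth $K$ but \emph{not} entropy optimal at depth $K+1$: on entropy-optimal depths the toll equals $\toll{\ky[\bp]}$ exactly, while on all other depths it is strictly larger (\cref{remark:toll-of-dist}), so such a pair immediately witnesses non-monotonicity. By \cref{corollary:fldr-power-of-two}, non-optimality of an ALDR tree is detected purely from whether its rejection weight is a power of two, so the whole argument reduces to integer arithmetic.

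\textbf{The example.}
I would take $\bp \defeq (1/5, 4/5)$, so $m = 5$ and $k \defeq \ceil{\log 5} = 3$. The order of $2$ modulo $5$ is $\ell = 4$ (since $2^4 \equiv 1 \bmod 5$), so by \cref{proposition:fldr-ky-match-odd} the tree $\aldr[\bp, 4]$ is entropy optimal. Next, tracking the amplification constants: $c_5 = \floor{32/5} = 6$ and $c_6 = \floor{64/5} = 12$ are both even, so by \cref{eq:c-case-even} the proposals satisfy $\bq_4 = \bq_5 = \bq_6$ and hence $\aldr[\bp, 6] \equiv \aldr[\bp, 4]$ is also entropy optimal; consequently $\toll{\aldr[\bp, 6]} = \toll{\ky[\bp]}$ by \cref{remark:toll-of-dist}. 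Finally, $c_7 = \floor{128/5} = 25$ is odd, so $\bq_7 \neq \bq_6$ by \cref{eq:c-case-odd}, and the rejection weight of $\aldr[\bp, 7]$ is $A_0 = 2^7 - 25\cdot 5 = 3$, which is not a power of two. By \cref{corollary:fldr-power-of-two} the tree $\aldr[\bp, 7]$ is therefore not entropy optimal, so \cref{remark:toll-of-dist} gives $\toll{\aldr[\bp, 7]} > \toll{\ky[\bp]} = \toll{\aldr[\bp, 6]}$, which is exactly the claimed non-monotonicity with the choice $K = 6$.

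\textbf{Obstacles.}
There is no substantive obstacle; every step is a citation of an earlier result plus a finite computation. The only point deserving a sentence of care is that $\aldr[\bp, 6]$ and $\aldr[\bp, 4]$ are literally the \emph{same} DDG tree: this holds because $\bq_6 = \bq_4$ as probability distributions (the weight vectors $(4,12,48)/64$ and $(1,3,12)/16$ coincide), the \citeauthor{knuth1976} construction in \cref{theorem:knuth-yao} depends only on the binary expansions of the proposal probabilities, and the back-edge replacement defining the FLDR/ALDR tree is deterministic; this equivalence is exactly what \cref{eq:c-case-even} records. Should one wish to avoid the optimality machinery entirely, an alternative is to compute $\toll{\aldr[\bp,6]}$ and $\toll{\aldr[\bp,7]}$ directly from \cref{eq:fldr-cost} using the $\nu$-values of the two proposals, but the optimality argument is shorter and self-contained.
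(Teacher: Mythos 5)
Your proof is correct, and it takes a genuinely different route from the paper's. The paper proves the proposition by brute force: it takes $\bp = (4/19, 7/19, 8/19)$ and reports the exact expected costs $\expect{\cost{\aldr[\bp,10]}} = 3038/1007 < 6150/2033 = \expect{\cost{\aldr[\bp,11]}}$, then notes that the mechanism is the irregular behavior of the rejection outcome's relative toll ($\trel{17/1024} \approx 0.32$ at depth $10$ versus $\trel{15/2048} \approx 1.64$ at depth $11$). You instead avoid computing any toll at all: you choose $\bp = (1/5,4/5)$ so that $\aldr[\bp,6]$ is entropy optimal (via \cref{proposition:fldr-ky-match-odd} with $\ell = 4$ and the even-amplification identities $c_5 = 6$, $c_6 = 12$ in \cref{eq:c-case-even}), while $\aldr[\bp,7]$ has odd $c_7 = 25$ and rejection weight $A_0 = 3$, not a power of two, hence is strictly suboptimal by \cref{corollary:fldr-power-of-two}; \cref{remark:toll-of-dist} then forces $\toll{\aldr[\bp,6]} = \toll{\ky[\bp]} < \toll{\aldr[\bp,7]}$. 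Your integer-arithmetic checks are right ($\floor{32/5}=6$, $\floor{64/5}=12$, $\floor{128/5}=25$, $128 - 125 = 3$), the results you cite all appear before this proposition in the paper (so there is no circularity), and the step identifying $\aldr[\bp,6]$ with $\aldr[\bp,4]$ is the same inference the paper itself makes in the proof of \cref{theorem:aldr-doubling-minimal}. What each approach buys: yours is shorter and purely structural, requiring no evaluation of $\nu$ or $H(\bp)$, while the paper's explicit computation quantifies the size of the jump and exposes the cause (the erratic relative toll of the rejection weight as $K$ varies), which motivates the surrounding discussion of the optimal amplification factor.
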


\begin{proof}
Set $\bp = (4/19, 7/19, 8/19)$.
Then
\begin{align*}
\expect{\cost{\aldr[\bp, 10]}} &= \frac{3038}{1007} = 3.01608\dots \\
\expect{\cost{\aldr[\bp, 11]}} &= \frac{6150}{2033} = 3.02508\dots
\end{align*}
The tolls just subtract $H(\bp)$ from the entropy cost,
so they have the same relationship, as shown in \cref{fig:fldr478-tolls}.
This non-monotonicity is dictated by the irregularity of
$\trel{A_0/2^K}$ as $K$ varies.
The rejection outcome in $\ky[\bq_{10}]$ has
probability $17/1024$, which has a small relative toll
$\trel{17/1024} \approx 0.32$, whereas the rejection outcome in
$\ky[\bq_{11}]$ has probability $15/2048$,
which has a much larger relative toll $\trel{15/2048} \approx 1.64$.
\end{proof}

To compare the entropy costs of $\fldr[\bp]$ and
$\aldr[\bp,K]$ in general, we must relate
the terms $\nuu{a_i/2^k}$ and $\nuu{c_K a_i / 2^K}$, which, respectively,
make up the costs of the two trees.
This comparison requires an analysis of the behavior of
$\nu$ under multiplication.
To better describe the properties of $\nu$, we generalize
its domain from the set of positive real numbers to a more natural domain.

\begin{definition}
\label{definition:nu-entropy}
The \textit{``new'' entropy function}
$\nu(x) \defeq \sum_{d=0}^{\infty} d \epsd{x}2^{-d}$
from \citet{knuth1976}
can be extended to any $\nat$-ordered and $\real$-valued sequence,
which in the general case is represented by the
coefficients of a formal Laurent series
$g = \sum_{d = D}^{\infty} g_d z^d \in \real((z))$
for some $D \in \integers$.
On this domain, we define $\nu(g) \defeq \sum_d d g_d z^d$.
These formal Laurent series can be converted to real numbers by evaluating with
$z \mapsto 1/2$, which is consistent with the definition of $\nu$ on real
numbers, because the diagram
\begin{equation}
\begin{tikzcd}
  {\real_{\geq 0}} &&& {\real((z))} \\
  \real &&& {\real((z))}
  \arrow["{x \mapsto \sum_d \epsd{x} z^d}", tail, from=1-1, to=1-4]
  \arrow["\nu"', from=1-1, to=2-1]
  \arrow["\nu", from=1-4, to=2-4]
  \arrow["{z \mapsto 1/2}", rightharpoonup, from=2-4, to=2-1]
\end{tikzcd}
\end{equation}
commutes.
The series evaluation $z \mapsto 1/2$ is a partial function because it may
fail to converge for some $g$.
However, convergence occurs if $g_d/2^d = O(d^{-2})$, which holds for
every sequence in this paper.
\end{definition}

\begin{remark}
\label{remark:toll-per-weight}
The generalized $\nu$ entropy can be used to directly compute the toll of a tree
given the generating function for its leaf counts.
For example, applying $\nu$ to a generating function from
\cref{theorem:generating-fn} yields
\begin{equation*}
\begin{aligned}
&\left(\frac{A_i}{M}\right)\trelfldr{A_i, M}
\\
&\quad=\begin{aligned}[t]
  &\left[
    \nuu{\frac{\sum_{d=0}^K \epsd{A_i/2^K}z^d}{1 - \sum_{d=0}^K \epsd{A_0/2^K}z^d}}
      \right]_{z \mapsto 1/2}
    \\
  &- \Ho{A_i / M}
  \end{aligned}
\end{aligned}
\end{equation*}
as an alternative expression for the toll contributions in a FLDR tree
(cf.~\cref{definition:fldr-toll}).
\end{remark}

\begin{definition}[Carries]
\label{definition:carries}
For nonnegative $x,y \in \real_{\geq 0}$ and $\star \in \set{+, \times}$, the operation
$x \star y$ is said to \textit{carry} iff
$\sum_d \epsd{x \star y} z^d \neq
\left( \sum_d \epsd{x} z^d \right) \star \left( \sum_d \epsd{y} z^d \right)$.
\end{definition}

\begin{remark}
\label{remark:carries}
Given $x \defeq (0.x_1x_2x_3\ldots)_2$ and $y \defeq (0.y_1y_2y_3\ldots)_2 \in [0,1)$,
the addition $z = x + y = (0.z_1z_2z_3\ldots)_2$ does not carry if
and only if $z_i = x_i + y_i$ for $i \geq 1$ (cf.~\citep[Eqs.~(2.23--2.24)]{knuth1976}).
In other words, the addition carries if and only if one of the
two conditions holds: \begin{inlinelist}[label=(\alph*)]
\item $z$ is dyadic when $x$ is not; or
\item there exists $i \geq 1$ such that $x_i = y_i = 1$.
\end{inlinelist}
The multiplication $z = xy = (0.z_1z_2z_3\ldots)_2$ does not carry
if and only if $z_1 = 0$ and $z_j = \sum_{i=1}^{j-1}x_i y_{j-i}$.
(All binary expansions in these definitions are concise.)
\end{remark}

We now characterize the properties of $\nu$ needed for \cref{theorem:aldr-leq-fldr},
building on \citet[Eq.~(2.23)]{knuth1976} to additionally handle multiplication,
and generalizing the algebraic notion of a ring derivation.

\begin{lemma}[{$\nu$} is a subadditive subderivation]
\label{lemma:nu-derivation}
For real $x, y \geq 0$,
the $\nu$ entropy satisfies
\begin{align}
\nu(x + y) &\leq \nu(x) + \nu(y),
\label{eq:nu-subadditive}\\
\nu(x y) &\leq x \nu(y) + y \nu(x),
\label{eq:nu-derivation}
\end{align}
with equality in \cref{eq:nu-subadditive} iff $x + y$
does not carry,
and equality in \cref{eq:nu-derivation} iff $x \times y$
does not carry.
These inequalities have the same form as additivity and Leibniz's law (i.e., the
product rule), except that `$=$' is replaced by `$<$' whenever the corresponding
operation carries.
\end{lemma}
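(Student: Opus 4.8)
The plan is to lift everything to the ring $\real((z))$ of formal Laurent series, where $\nu$ is simply the operator $z\,\diff/\diff z$. Checking on monomials $z^a$ shows that $\nu$ is $\real$-linear and obeys Leibniz's rule \emph{exactly}, so $\nu(g+h)=\nu(g)+\nu(h)$ and $\nu(gh)=g\,\nu(h)+h\,\nu(g)$ hold identically in $\real((z))$. Writing $X\defeq\sum_d\epsilon_d(x)z^d$ and $Y\defeq\sum_d\epsilon_d(y)z^d$ for the (concise) binary expansions and evaluating these identities at $z\mapsto1/2$, the commuting diagram of \cref{definition:nu-entropy} gives $\nu(X+Y)|_{z\mapsto1/2}=\nu(x)+\nu(y)$ and $\nu(XY)|_{z\mapsto1/2}=x\,\nu(y)+y\,\nu(x)$; moreover $X+Y$ and $XY$ have nonnegative integer coefficients of at most polynomial growth and evaluate at $z\mapsto1/2$ to $x+y$ and $xy$ respectively.

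The heart of the argument is a descent lemma: if $g=\sum_{d\ge D}g_d z^d\in\real((z))$ has nonnegative integer coefficients bounded by an exponential of base less than $2$ (so the evaluation $z\mapsto1/2$ converges, as in \cref{definition:nu-entropy}) and $w\defeq g|_{z\mapsto1/2}$, then $\nu(g)|_{z\mapsto1/2}\ge\nu(w)$, with equality if and only if $g=\sum_d\epsilon_d(w)z^d$. To prove it, set $S_j\defeq\sum_{d<j}g_d2^{-d}$ and $E_j\defeq\floor{2^{j-1}w}2^{1-j}=\sum_{d<j}\epsilon_d(w)2^{-d}$ for $j\in\integers$. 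Then $2^{j-1}S_j$ is a nonnegative integer and $S_j\le w$, whereas $E_j$ is the \emph{largest} multiple of $2^{1-j}$ that is $\le w$, so $S_j\le E_j$ for every $j$, with $S_j=E_j$ for all $j$ exactly when $g_d=\epsilon_d(w)$ for all $d$. A summation by parts — legitimate because the growth hypothesis forces $b(w-S_{b+1})\to0$ as $b\to\infty$ — yields $\nu(g)|_{z\mapsto1/2}=a\,w+\sum_{j>a}(w-S_j)$ for any $a\le D$, and likewise $\nu(w)=a\,w+\sum_{j>a}(w-E_j)$ for $a\le D'\le D$, where $D'$ is the leading index of $w$. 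Subtracting gives $\nu(g)|_{z\mapsto1/2}-\nu(w)=\sum_{j>a}(E_j-S_j)\ge0$, with equality iff $S_j=E_j$ for all $j$.

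It then remains to combine the pieces. Applying the descent lemma to $g=X+Y$ gives $\nu(x+y)=\nu\bigl(\textstyle\sum_d\epsilon_d(x+y)z^d\bigr)|_{z\mapsto1/2}\le\nu(X+Y)|_{z\mapsto1/2}=\nu(x)+\nu(y)$, with equality iff $X+Y=\sum_d\epsilon_d(x+y)z^d$, which is precisely the statement that $x+y$ does not carry (\cref{definition:carries}). The multiplicative bound follows in the same way from $g=XY$, with equality iff $XY=\sum_d\epsilon_d(xy)z^d$, i.e.\ iff $x\times y$ does not carry. This also gives the reading noted in the statement: $\nu$ satisfies additivity and the product rule, save that `$=$' degrades to `$<$' exactly when the operation carries.

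I expect the main obstacle to be the bookkeeping inside the descent lemma: making the summation by parts over $\integers$-indexed sums precise, controlling the tail via the growth hypothesis (which is harmless here since the coefficients of $X$ and $Y$ lie in $\{0,1\}$ and those of $XY$ grow polynomially), and cleanly reading off the equality case. The underlying geometry is the only real idea, and it is elementary: among nonnegative integer representations of a fixed $w$, the binary expansion is the one that concentrates weight at the most significant positions, while $\nu$ penalizes weight at less significant positions — an ``anti-carry'' $z^{d}\mapsto 2z^{d+1}$ preserves $g|_{z\mapsto1/2}$ but strictly increases $\nu|_{z\mapsto1/2}$.
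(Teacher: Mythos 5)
Your proposal is correct and follows essentially the same route as the paper: lift $\nu$ to formal Laurent series via \cref{definition:nu-entropy}, where $\nu = z\,\diff/\diff z$ is exactly additive and satisfies Leibniz's rule, evaluate at $z \mapsto 1/2$, and attribute the loss (and the equality condition) to carries in the sense of \cref{definition:carries}. The only difference is that your ``descent lemma'' (comparing the partial sums $S_j$ of a nonnegative-integer-coefficient series against the binary partial sums $E_j$ of its value) supplies an explicit proof of the inequality $\bigl[\nu\bigl(\sum_d \epsilon_d(x \star y) z^d\bigr)\bigr]_{z \mapsto 1/2} \leq \bigl[\nu\bigl(\bigl(\sum_d \epsilon_d(x) z^d\bigr) \star \bigl(\sum_d \epsilon_d(y) z^d\bigr)\bigr)\bigr]_{z \mapsto 1/2}$ together with its equality case, a step the paper states with only the remark that ``the only difference is carries.''
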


\begin{proof}
Consider the extension of the $\nu$ entropy to formal Laurent series
$\nuu{\sum_d g_d z^d} = \sum_d d g_d z^d$
from \cref{definition:nu-entropy}.
This $\nu$ is a derivation on formal Laurent series,
\begin{align*}
  \nu(f(z) + g(z)) &= \nu(f(z)) + \nu(g(z)), \\
  \nu(f(z)g(z)) &= f(z) \nu(g(z)) + g(z) \nu(f(z)),
\end{align*}
because $\nu = z \frac{\diff}{\diff z}$.
%
In using bitstring conversions from real numbers, the only difference is
carries, which gives the following two inequalities for sums
and products, respectively:
\begin{align*}
\nu(x + y)
&= \left[ \nuu{\sum_d \epsd{x + y} z^d} \right]_{z \mapsto 1/2} \\
&\leq \left[ \nuu{\sum_d \epsd{x} z^d+ \sum_d \epsd{y} z^d } \right]_{z \mapsto 1/2} \\
&= \begin{aligned}
    &\left[ \nuu{\sum_d \epsd{x} z^d} \right]_{z \mapsto 1/2}
    \\
    &+ \left[ \nuu{\sum_d \epsd{y} z^d} \right]_{z \mapsto 1/2}
  \end{aligned}
\\
&= \nu(x) + \nu(y),
\shortintertext{and}
\nu(x y)
&= \left[ \nuu{\sum_d \epsd{x y} z^d} \right]_{z \mapsto 1/2} \\
&\leq \left[ \nuu{\sum_d \epsd{x} z^d \sum_e \epsd[e]{y} z^e} \right]_{z \mapsto 1/2} \\
&= \left[ \begin{aligned}
  &\sum_d \epsd{x} z^d \nuu{\sum_e \epsd[e]{y} z^e}
  \\
  &+ \sum_e \epsd[e]{y} z^e \nuu{\sum_d \epsd{x} z^d}
  \end{aligned}\right]_{z \mapsto 1/2} \\
&= \left[ \begin{aligned}
    &\sum_d \epsd{x} \nu(y) z^d
    \\
    &+ \sum_e \epsd[e]{y} \nu(x) z^e
    \end{aligned} \right]_{z \mapsto 1/2} \\
&= x \nu(y) + y \nu(x).
\end{align*}
These inequalities are strict if and only if $x + y$ or $x \times y$,
respectively, carries.
\end{proof}

The main theorem of this section shows that the non-monotonicity
identified in \cref{proposition:aldr-not-monotonic-decrease} never occurs when
comparing the entropy cost of $\fldr[\bp]$ with that of
$\aldr[\bp, K]$ for $K > k$.
In particular, since $\fldr[\bp] \equiv \aldr[\bp, k]$ is the first sampler
in the $\aldr[\bp,K]$ family ($K=k,k+1,\dots$) of rejection samplers for
the target $\bp$, the entropy costs of all members are upper bounded by
that of the first member.

\begin{theorem}[Comparison of $\aldr$ and $\fldr$ entropy cost]
\label{theorem:aldr-leq-fldr}
For every distribution $\bp$, the entropy cost
of $\aldr$ is upper bounded by that of $\fldr$:
\begin{align*}
\expect{\cost{\aldr[\bp,K]}} \leq \expect{\cost{\fldr[\bp]}}
&&(K \geq k),
\end{align*}
with equality if and only if no
multiplication $c_K \times a_i$ carries for each $i=0,\dots,n$.
\end{theorem}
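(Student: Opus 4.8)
The plan is to reduce the statement to a single sharp inequality about the ``new'' entropy function $\nu$ and then settle it with the subderivation machinery of \cref{lemma:nu-derivation}. Set $\mu \defeq m/2^k = q_{k,1}+\dots+q_{k,n}$ and $\gamma \defeq c_K/2^{K-k} \in [1,2)$. The amplified proposal satisfies $q_{K,i} = \gamma\,q_{k,i}$ for $i=1,\dots,n$ and $q_{K,0} = 1-\gamma\mu$, and by \cref{eq:fldr-cost} the expected costs are $\expect{\cost{\fldr[\bp]}} = (2^k/m)\expect{\cost{\ky[\bq_k]}}$ and $\expect{\cost{\aldr[\bp,K]}} = (2^K/M)\expect{\cost{\ky[\bq_K]}} = (2^k/(\gamma m))\expect{\cost{\ky[\bq_K]}}$. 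So the theorem is equivalent to $\expect{\cost{\ky[\bq_K]}} \le \gamma\,\expect{\cost{\ky[\bq_k]}}$, which by \cref{theorem:knuth-yao} unpacks to
\begin{align*}
\nu(1-\gamma\mu) + \sum_{i=1}^n \nu(\gamma\,q_{k,i}) \;\le\; \gamma\Bigl(\nu(1-\mu) + \sum_{i=1}^n \nu(q_{k,i})\Bigr).
\end{align*}

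First I would dispose of the accept terms. By the Leibniz-type bound \cref{eq:nu-derivation}, $\nu(\gamma\,q_{k,i}) \le \gamma\,\nu(q_{k,i}) + q_{k,i}\,\nu(\gamma)$, with equality precisely when $\gamma\times q_{k,i}$ does not carry; since $\gamma\,q_{k,i} = c_K a_i/2^K$ and rescaling by powers of two does not affect carrying, this is exactly the condition that $c_K\times a_i$ does not carry. Summing over $i$ and using $\sum_i q_{k,i} = \mu$ collapses the display to the residual inequality
\begin{align*}
\nu(1-\gamma\mu) + \mu\,\nu(\gamma) \;\le\; \gamma\,\nu(1-\mu).
\end{align*}

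The crux is this residual inequality, and the key observation is that $A_0 \defeq 2^K - c_K m = 2^K \bmod m$ is strictly less than $m \le 2^k$. Write $\gamma = 1+\beta$ with $\beta \defeq (c_K - 2^{K-k})/2^{K-k} \in [0,1)$; because the leading bit of $\gamma$ contributes $0$ to $\nu$, we have $\nu(\gamma) = \nu(\beta)$. Simple algebra gives $\gamma(1-\mu) = (1-\gamma\mu) + \beta$, i.e.\ the sum of $1-\gamma\mu = A_0/2^K$ and $\beta$. Since $A_0 < 2^k$, the nonzero bits of $A_0/2^K$ occur at depths exceeding $K-k$, whereas the nonzero bits of $\beta = (c_K - 2^{K-k})/2^{K-k}$ occur at depths at most $K-k$; these supports are disjoint and both numbers are dyadic, so the addition does not carry. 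The equality case of \cref{eq:nu-subadditive} then gives $\nu(\gamma(1-\mu)) = \nu(1-\gamma\mu) + \nu(\beta) = \nu(1-\gamma\mu) + \nu(\gamma)$, while \cref{eq:nu-derivation} gives $\nu(\gamma(1-\mu)) \le \gamma\,\nu(1-\mu) + (1-\mu)\nu(\gamma)$ with equality iff $\gamma\times(1-\mu)$ (equivalently $c_K\times a_0$) does not carry. Subtracting $(1-\mu)\nu(\gamma)$ yields the residual inequality, with equality iff $c_K\times a_0$ does not carry.

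Finally I would assemble the equality case: equality in the theorem holds iff it holds both in the accept-term step (no $c_K\times a_i$ carries for $i=1,\dots,n$) and in the residual inequality (no $c_K\times a_0$ carries), i.e.\ iff no $c_K\times a_i$ carries for any $i=0,\dots,n$; the degenerate case $m=2^k$, where $\fldr[\bp]$, $\aldr[\bp,K]$, and $\ky[\bp]$ all coincide, is consistent since then $a_0=0$. I expect the residual inequality to be the main obstacle: naively combining the subadditive and Leibniz bounds of \cref{lemma:nu-derivation} produces it only with equality, so one genuinely needs the carry-freeness of the addition $A_0/2^K + \beta$, which in turn relies on $A_0 < m$ --- that is, on $c_K = \floor{2^K/m}$ being the \emph{largest} valid amplification factor, which is precisely what distinguishes $\aldr[\bp,K]$ from the other members of the proposal family in \cref{eq:amplified-proposal-distribution-family}.
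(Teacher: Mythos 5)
Your proposal is correct and follows essentially the same route as the paper's proof: both rest on \cref{lemma:nu-derivation}, bounding the accept terms $\nu(c_K a_i/2^K)$ by the Leibniz-type inequality and handling the reject term via the carry-free addition $A_0/2^K + (c_K-2^{K-k})/2^{K-k}$ (justified by $A_0 < 2^k$) together with $\nu(\gamma)=\nu(\gamma-1)$, with the $\nu(\gamma)$ contributions cancelling in the end. Your normalization by $\gamma m/2^k$ and reduction to a single ``residual inequality'' is only a cosmetic repackaging of the paper's weighted summation, and your equality analysis matches the paper's carry condition for $i=0,\dots,n$.
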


\begin{listing*}[t]
\captionsetup{hypcap=false}
\begin{minipage}[t]{.495\linewidth}
\begin{algorithm}[H]
\caption{ALDR Preprocessing}
\label{alg:aldr-preprocess}
\begin{algorithmic}[1]
\Require{Coprime positive integers $(a_1,\dots,a_n)$}
\Ensure{Data structures used for sampling (\cref{alg:aldr-sample})}
\State \textbf{initialize} $A$ $\mathbf{int}[n+1]$    \Comment{amplified weights array}
\State $m \gets a_1+\dots+a_n$                        \Comment{sum of weights}
\State $K \gets 2\ceil{\log(m)}$                      \Comment{default amplification depth}
\State $c \gets \floor{2^K/m}$                        \Comment{amplification factor}
\State $A[0] \gets 2^K - c \cdot m$                   \Comment{amplified reject weight}
\State $A[i] \gets c \cdot a_i$ ($i=1,\dots,n$)       \label{algline:aldr-preprocess-prod}
                                                      \Comment{amplified weights}
\State \textbf{initialize} $L$ $\mathbf{int}[K+1]$    \Comment{leaves per depth array}
\State \textbf{initialize} $F$ $\mathbf{int}[]$       \Comment{flattened leaf labels list}
\For{$j \gets 0 ~\mathbf{to}~ K$}                     \Comment{for each level}\label{algline:aldr-preprocess-for-K}
  \For{$i \gets 0 ~\mathbf{to}~ n$}                   \Comment{for each label}\label{algline:aldr-preprocess-for-n}
    \If{$\floor{A[i]/2^{K-j}} \bmod 2 = 1$}           \Comment{leaf node}
      \State $L[j] \gets L[j] + 1$                    \Comment{update counter}
      \State $F.\hspace*{1pt}\mathit{append}(i)$      \Comment{store label}
    \EndIf
  \EndFor
\EndFor
\State \Return $L, F$
\end{algorithmic}
\end{algorithm}
\end{minipage}\hfill
\begin{minipage}[t]{.495\linewidth}
\begin{algorithm}[H]
\caption{ALDR Sampling}
\label{alg:aldr-sample}
\begin{algorithmic}[1]
\Require{%
  Data structures $L,F$ from \cref{alg:aldr-preprocess};\\
  Entropy source $\flip()$ providing i.i.d.~fair bits}
\Ensure{Sample from distribution given to \cref{alg:aldr-preprocess}}
\State $\set{d \gets 0;\ \ell \gets 0;\ v \gets 0}$       \Comment{depth, location, value}
\While{\textbf{true}}
  \If{$v < L[d]$}                                         \Comment{hit leaf node}
    \If{$F[\ell+v] = 0$}                                  \Comment{reject label}
      \State $\set{d \gets 0;\ \ell \gets 0;\ v \gets 0}$ \Comment{restart}
    \Else                                                 \Comment{accept label}
        \State \Return $F[\ell+v]$                        \Comment{return the label}
    \EndIf
  \EndIf
  \State $v \gets 2 \cdot (v - L[d]) + \flip()$           \Comment{visit random child}
  \State $\ell \gets \ell + L[d]$                         \Comment{increment location}
  \State $d \gets d + 1$                                  \Comment{increment depth}
\EndWhile
\end{algorithmic}
\end{algorithm}
\end{minipage}

\captionsetup{type=table,belowskip=10pt}
\centering
\caption{Maximum value of the sum $m$
of integers $(a_1,\dots,a_n)$ in the target distribution $\bp$
to guarantee that
\cref{alg:aldr-preprocess} does not overflow, under various settings of the
word size $w$ of the underlying word RAM computer.}
\label{table:arithmetic}
\footnotesize
\begin{tabular*}{\linewidth}{r@{\extracolsep{\fill}}ccc}
\hline
~                       & \multicolumn{3}{c}{$m$}\\ \cline{2-4}
\multicolumn{1}{c}{$w$} & \multicolumn{1}{c}{Single Word Arithmetic} & \multicolumn{1}{c}{Double Word Arithmetic} & \multicolumn{1}{c}{Quadruple Word Arithmetic} \rule{0pt}{2.5ex} \\  
8 bits                  & 16                                         & 256                                        & 65,025 \\
16 bits                 & 256                                        & 65,536                                     & 4,294,836,225 \\
32 bits                 & 65,536                                     & 4,294,967,296                              & 18,446,744,065,119,617,025 \\
64 bits                 & 4,294,967,296                              & 18,446,744,073,709,551,616                 & 340,282,366,920,938,463,426,481,119,284,349,108,225 \\\hline\hline
\end{tabular*}

\end{listing*}

\begin{proof}
Recall that $c_K = \floor{2^K/m}$.
We now apply \cref{lemma:nu-derivation} to bound parts of the entropy cost
of the ALDR tree in terms of corresponding parts of the FLDR tree.
For $1 \leq i \leq n$, \cref{eq:nu-derivation} directly yields
\begin{align}
\nuu{\frac{A_i}{2^K}}
&= \nuu{\frac{c_K}{2^{K-k}} \frac{a_i}{2^k}} \notag \\
&\leq \frac{c_K}{2^{K-k}} \nuu{\frac{a_i}{2^k}} + \frac{a_i}{2^k} \nuu{\frac{c_K}{2^{K-k}}}.
\label{eq:weight-nu-entropy}
\end{align}
For $i = 0$, the sum $A_0 + (2^k c_K - 2^K)$ does not carry because
$A_0 < 2^k$.
Then the no-carry case of \cref{eq:nu-subadditive} together with either case of
\cref{eq:nu-derivation} gives
\begin{align*}
&\nuu{\frac{A_0}{2^K}}+\nuu{\frac{2^k c_K - 2^K}{2^K}}\\
&= \nuu{\frac{A_0 + 2^k c_K - 2^K}{2^K}}
= \nuu{\frac{c_K a_0}{2^K}} \\
&= \nuu{\frac{c_K}{2^{K-k}} \frac{a_0}{2^k}}
\leq \frac{c_K}{2^{K-k}} \nuu{\frac{a_0}{2^k}} + \frac{a_0}{2^k} \nuu{\frac{c_K}{2^{K-k}}}.
\end{align*}
Applying additivity with $\nu(1) = 0$ shows that
\begin{equation*}
\nuu{\left(2^k c_K - 2^K\right)/2^K} = \nuu{(2^k c_K)/2^K},
\end{equation*}
so
\begin{align}
\nuu{\frac{A_0}{2^K}}
  \leq \frac{c_K}{2^{K-k}} \nuu{\frac{a_0}{2^k}}
    &+ \frac{a_0}{2^k} \nuu{\frac{c_K}{2^{K-k}}} \notag
    \\
    &- \nuu{\frac{c_K}{2^{K-k}}}.
\label{eq:rejection-nu-entropy}
\end{align}
Combining \cref{eq:weight-nu-entropy,eq:rejection-nu-entropy}
gives the desired bound on the total cost of ALDR:
\begin{align*}
&\expect{\cost{\aldr[\bp,K]}} \\
&= \frac{2^K}{M} \sum_{i=0}^{n} \nuu{\frac{A_i}{2^K}} \\
&\leq \begin{aligned}[t]
  &\frac{2^K}{M} \sum_{i=0}^{n}
    \left[ \frac{c_K}{2^{K-k}} \nuu{\frac{a_i}{2^k}}
    + \frac{a_i}{2^k} \nuu{ \frac{c_K}{2^{K-k}}} \right]
  \\
  &- \frac{2^K}{M}\nuu{\frac{c_K}{2^{K-k}}}
  \end{aligned}\\
&= \frac{2^K}{M} \sum_{i=0}^{n} \frac{c_K}{2^{K-k}} \nuu{\frac{a_i}{2^k}} \\
&= \frac{2^k}{m} \sum_{i=0}^{n} \nuu{\frac{a_i}{2^k}} \\
&= \expect{\cost{\fldr[\bp]}}.
\end{align*}

To establish the necessary and sufficient conditions for equality, it is sufficient
to note that every inequality in
\cref{eq:weight-nu-entropy,eq:rejection-nu-entropy} is an equality iff
the corresponding multiplication $c_K \times a_i$ does not carry, by
\cref{lemma:nu-derivation}.
Therefore, the final result is an equality iff
none of the multiplications $c_K \times a_i$ carry $(i=0,1,\dots,n)$.
\end{proof}

\section{Implementation}
\label{sec:implementation}

\Cref{alg:aldr-preprocess,alg:aldr-sample} show fast integer-arithmetic
implementations of the preprocessing and sampling steps from the high-level
description of ALDR in \cref{alg:fldr,alg:aldr},
using $K=2k$ for concreteness instead of an arbitrary rule $K=r(k)$.
The preprocessing method uses a flattened leaf list $F$, which requires
roughly half as much memory as an explicit DDG tree representation.
The cost of the $n$ multiplications on \cref{algline:aldr-preprocess-prod}
is $n\log(m)\log(\log m)$ operations using the
\citeauthor{harvey2021} method~\citep{harvey2021} or
$n (\log m)^{\log3}$ operations
using the \citeauthor{karatsuba1962} method~\citep{karatsuba1962}.
The nested loops in
\crefrange{algline:aldr-preprocess-for-K}{algline:aldr-preprocess-for-n}
require $2 n \log(m)$ iterations, where the logarithmic factor is incurred by
the need to iterate through each bit of the amplified weights.

\subsection{Numerics of Integer Arithmetic}
\label{sec:implementation-numerics}

To characterize the regime in which \cref{alg:aldr-preprocess} is guaranteed
to never overflow, consider a standard word RAM model with a word size of $w > 1$ bits,
so that each $a_i \leq 2^{w}-1$, for $i=1,\dots,n$.
The number of outcomes $n \leq 2^w-1$ is upper bounded by
the number of addressable memory cells,
although any typical application in high-precision settings
will have $n \ll 2^w$ (e.g., a 64-bit machine with 64GiB of available
memory can hold at most $2^{33} \ll 2^{64}$ outcomes).
It follows that $m \leq (2^w-1)^2$, $k \leq 2w$, and, with the
doubling amplification rule, $K = 2k \leq 4w$.
Because $m \leq 2^k$, all intermediate values in
\cref{alg:aldr-preprocess} are less than $2^{2k}$.
If $m \leq 2^{\floor{w/2}}$ requires half a machine word, then
\cref{alg:aldr-preprocess} operates using efficient single-word arithmetic.
If $m < 2^w$ requires one machine word, then \cref{alg:aldr-preprocess}
requires double-word arithmetic to avoid overflow.
In the worst case, $m = (2^w-1)^2 > 2^w$ requires two machine words,
which means \cref{alg:aldr-preprocess} requires at most quadruple-word
arithmetic to avoid overflow.
\Cref{table:arithmetic} show various values of $m$ that can be supported
without overflow
under various machine word sizes $w$ with single-, double-, and quadruple-word
integer arithmetic.
If very large values of $m$ are needed, then double-word arithmetic on a 64-bit
machine can be supported using, for example, the
\texttt{unsigned \_\_int128} type from GCC C compiler or
the \texttt{u128} type from Rust.
Arithmetic with these types is highly efficient on machines
with 64-bit architectures.

\begin{figure*}[t]
\centering
\includegraphics[width=\linewidth]{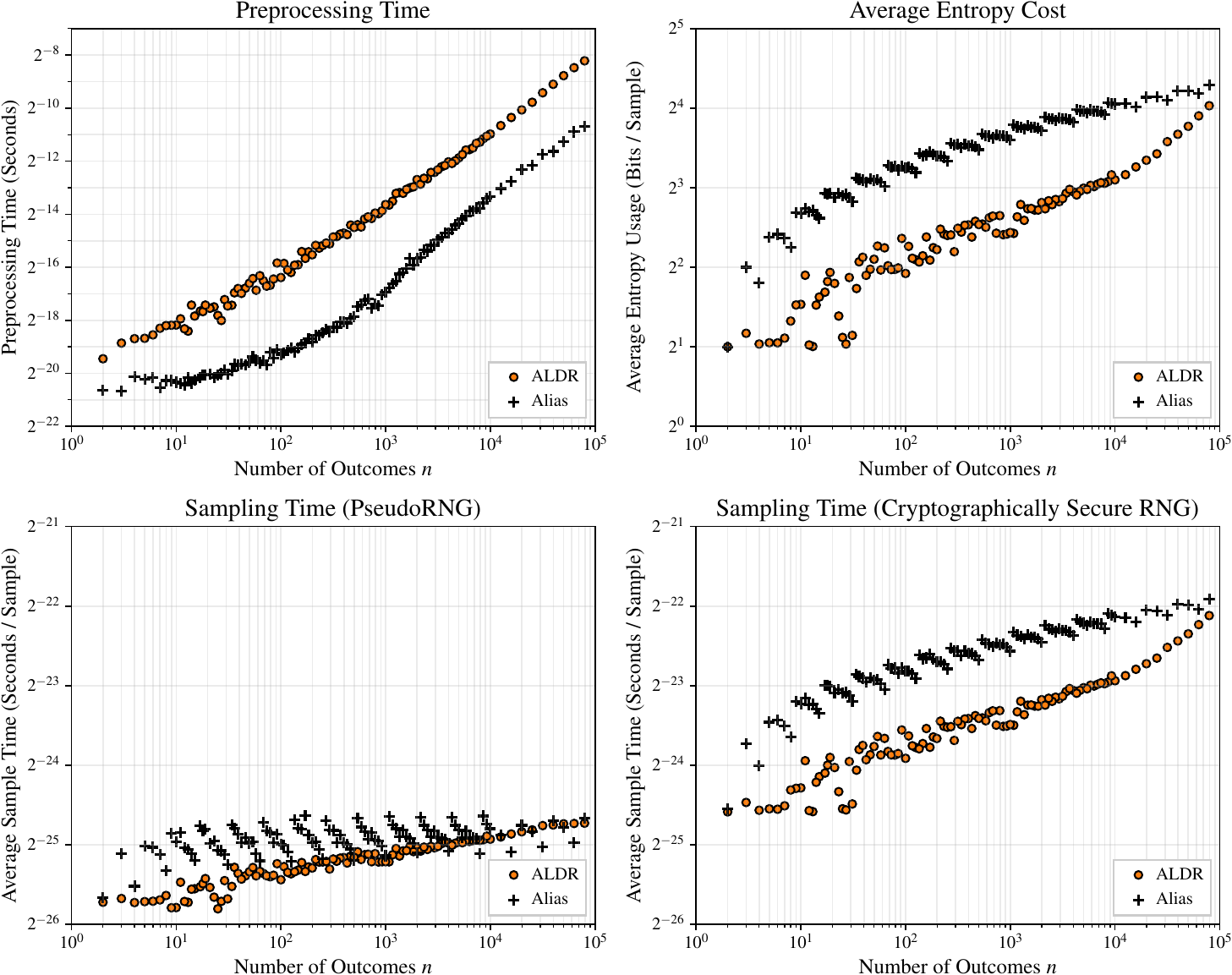}
\caption{Comparison of $\aldr$ with the \citeauthor{walker1977}
alias method~\citep{walker1977} in terms of
preprocessing time (top left panel),
entropy cost (top right panel),
and sampling time using two different pseudorandom number generators (bottom panels).
Each dot shows an $m$-type probability distribution
$\bp = (a_1/m, \dots, a_n/m)$ over $n$ outcomes with $m=10^5$. }
\label{fig:alias-aldr-compare}
\end{figure*}

\subsection{Comparison to the Alias Method}
\label{sec:implementation-alias}

The \citeauthor{walker1977} alias method~\citep{walker1977} is a
state-of-the-art sampling algorithm for discrete probability distributions.
For a target distribution $\bp = (a_1/m, \dots, a_n/m)$, the linear time preprocessing
method of \citet{vose1991} constructs a length-$n$ array $L$,
where each entry $L[i]$
stores a pair of distinct outcomes $y_{i}, z_{i} \in \set{1,\dots,n}$
and a rational weight $w_i \in [0,1]$.
After preprocessing, the sampling phase
for generating $X \sim \bp$ operates as follows:
\begin{itemize}
\item Generate a random index $I \sim \mathrm{Uniform}(1,\dots,n)$.
\item Generate $B \sim \mathrm{Bernoulli}(w_I)$;
  if $B=0$ then set $X \gets y_I$;
  else set $X \gets z_I$.
\end{itemize}
The entropy-optimal sampler for the random index
$I$ requires at most $\ceil{\log(n)}+1$ flips
on average and $O(\log(n))$ space, using the method of \citet{lumbroso2013}.
As the entropy-optimal sampler for $B$ has a cost
upper bounded by 2 flips (cf.~\cref{remark:bernoulli-nu}),
a tight upper bound for the overall entropy cost is
$\ceil{\log(n)} + 3$.

Most software libraries (e.g., \citep{galassi2009,leydold2009})
that implement the alias method
provide approximate implementations with floating-point arithmetic, even
when the input weights $(a_1, \dots, a_n)$ are integers.
A notable exception is the Rust programming language~\citep{rust2014},
which provides an exact (error-free)
implementation of the alias method for integer weights but does not use
entropy-optimal samplers to generate $I$ and $B$.
To ensure a fair comparison, $\aldr$
(\cref{alg:aldr-preprocess,alg:aldr-sample}) and an entropy-optimal
implementation of the alias method were developed%
\footnote{\url{https://github.com/probsys/amplified-loaded-dice-roller}}
in the C programming language
and evaluated on 423 distributions $\bp=(a_1/m,\dots,a_n/m)$ over
$n \in \set{2,\dots,10^5}$ outcomes and $m=10^5$.

\Cref{fig:alias-aldr-compare} shows the preprocessing times, the average
entropy costs, the sampling runtimes using a pseudorandom number generator,
and the sampling runtimes using a cryptographically secure random number generator.
The top-left panel shows that $\aldr$ has a higher preprocessing time
than the alias method, because the former must loop over the $K$
bits of the amplified integer weights.
$\aldr$ enjoys a lower average entropy cost than the alias method,
by virtue of its $H(\bp) + 2$ bound as opposed to $\ceil{\log(n)}+3$.
This gap is most noticeable when $\bp$ has low entropy, i.e.,
$H(\bp) \ll \log(n)$, and shrinks
as $\bp$ becomes closer to a uniform distribution, i.e., $H(\bp) \to \log(n)$.
(For the case $n=2$, the alias method is actually entropy optimal, whereas
$\aldr$ is not necessarily.
But in this case, an entropy-optimal Bernoulli sampler can be used directly,
without the need for any preprocessing.)
These improvements in entropy efficiency translate to direct improvements
in the average sampling time (which will overshadow the difference in
preprocessing time when taking many samples).
When using a fast pseudorandom number generator, the runtime
improvements are most prominent when $n \ll m$
(e.g., $n \leq 10^3 \ll 10^5 = m$ in \cref{fig:alias-aldr-compare}).
When using an expensive cryptographically secure random number generator,
the entropy cost becomes a more significant component of the overall
wall-clock runtime, leading to further improvements of $\aldr$.

\section{Remarks}
\label{sec:remarks}

\paragraph*{Eliminating the Toll Gap}

The \textit{Amplified Loaded Dice Roller} family of samplers
proposed in this work uses
rejection sampling to target an arbitrary $m$-type distribution $\bp$,
by using a proposal distribution
$\bq_K$ whose probabilities are multiples of $1/2^K$.
It has been shown in \cref{theorem:aldr-2k-toll-two}
that this family contains a sampler whose space
complexity scales linearithmically with the size of $\bp$ and whose
entropy cost lies within the entropy-optimal range $[H(\bp), H(\bp)+2)$.
This property in turn ensures that the toll $\toll{\aldr[\bp,2k]} < 2$ has
the same guarantee as the entropy-optimal sampler, for every $\bp$.
\Cref{proposition:aldr-large-gap}, however, shows that, for any $\epsilon > 0$,
there exists a certain distribution $\bp$ for which the toll gap
$\toll{\aldr[\bp,K]} - \toll{\ky[\bp]} > 2 - 2\epsilon$ with respect to the
entropy-optimal sampler fails to converge to zero as $K$ grows (\cref{fig:fldr-toll-1669}).
An open question is whether it is possible to eliminate this
entropy-cost gap while retaining linearithmic space complexity.

\paragraph*{Optimal Amplification Factor}

\Cref{proposition:aldr-not-monotonic-decrease} shows that the entropy
cost of $\aldr[\bp,K]$ is not necessarily monotonic as a function of
$K$ (e.g., \cref{fig:fldr478-tolls}, $\aldr[\bp,10] < \aldr[\bp,11]$).
More generally, our choice of amplification factor $c_K = \floor{2^K/m}$
for the proposal distribution $Q_K \defeq Q_{K,c_K}$~\cref{eq:amplified-proposal-distribution-family}
minimizes the probability of a rejection per rejection trial---and in turn
the expected number of trials---but it does not necessarily minimize the
entropy cost~\cref{eq:fldr-cost} which also accounts
for the entropy cost per trial.
In particular, for a maximum DDG tree depth of $K$,
the optimal amplification factor
\begin{align*}
c^*_{K} \defeq \argmin_{c = 1, 2, \dots, \floor{2^K/m}} \set*{2^{K}/cm \cdot \expect{\cost{\ky[Q_{K,c}]}}}
\end{align*}
can be found in $O(nK^2 2^K/m)$ time by enumeration,
which scales exponentially with the input size when, e.g., $K=2k$.
Is there a more efficient optimization algorithm to find $c^*_K$?
If $c_K$ is odd and $c^*_K$ is even, the rejection sampler with proposal
$Q_{K,c^*_K}$ has smaller depth than $\aldr[\bp,K]$, while matching
or outperforming its entropy cost.

\section*{Acknowledgements}

Comments by the referees are acknowledged.
This material is based upon work supported by the National Science
Foundation under Grant No.~2311983. Any opinions, findings, and conclusions
or recommendations expressed in this material are those of the author(s)
and do not necessarily reflect the views of the National Science
Foundation.

\printbibliography

\begin{IEEEbiographynophoto}{Thomas Draper}
is a Ph.D.~student in Computer Science at Carnegie Mellon University,
Pittsburgh, PA.
He received B.S.~degrees in Mathematics, Computer Science, Statistics, and
Physics from Brigham Young University, Provo, UT in 2024.
His research interests include probabilistic algorithms, information
theory, and quantum computation.
\end{IEEEbiographynophoto}

\begin{IEEEbiographynophoto}{Feras Saad}
is an Assistant Professor of Computer Science at Carnegie Mellon University,
Pittsburgh, PA.
He received the S.B., M.Eng., and Ph.D.\ degrees in Electrical Engineering
and Computer Science from the Massachusetts Institute of Technology (MIT),
Cambridge, MA in 2016, 2016, and 2022, respectively.
From 2022 to 2023, he was a Visiting Scientist at Google.
Dr.~Saad's graduate dissertations were recognized with the Charles \&
Jennifer Johnson Computer Science M.Eng Thesis Award (2017) and George
M.~Sprowls Ph.D.~Thesis Award in Artificial Intelligence and Decision
Making (2023) at MIT.
His research interests include
probabilistic programming languages,
statistical modeling and inference, and
the mathematics of probabilistic computation.
\end{IEEEbiographynophoto}
\cleardoublepage

\onecolumn

\appendices

\setcounter{page}{1}
\renewcommand{\thepage}{S-\arabic{page}}

\section{Proof of \crefnameof{theorem}~\ref{theorem:aldr-2k-toll-two}}
\label[appendix]{appx:proof-theorem-aldr-2k-toll-two}

If $m$ is a power of two then $\aldr[\bp, K]$ is entropy optimal for any $K \geq
k$, and the result follows from \cref{theorem:knuth-yao}, so we assume for the
rest of the proof that $m$ is not a power of two.

\Cref{lemma:aldr-pow-two-toll} shows that the relative FLDR toll for powers
of two is not less than two, so we must bound the total probability
contribution from power-of-two probabilities.
Write $m = 2^u x$ where $x > 1$ is odd.
Then at most $1-2^{-u}$ of the
probability distribution $\bp$ can come from powers of two.
(The fact that not every $p_i$ is a power of two makes use of coprimality
$\gcd(a_1, \dots, a_n) = 1$, unlike \cref{theorem:aldr-generic-bound}, which
applies to arbitrary integer lists but cannot strictly meet the bound of $2$
bits.)
Let $b$ be the smallest integer such that at most $1 - 2^{-b}$ of the
probability comes from powers of two, so $0 \leq b \leq u \leq k-2$ and
the total probability not from powers of two is in
$\set{2^{-b}, 2^{-b}+2^{-u}, \dots, 2^{1-b}-2^{-u}}$.
Then, for every $i$ such that $a_i / m$ is not a power of two,
$a_i \in \set{1,2,3,\dots,m(2^{1-b}-2^{-u})}$.
Dividing by $m$ and multiplying by an appropriate power of two shows that
$1/2 < 2^{-1-\floor{\log(a_i/m)}} a_i/m \leq 1-2^b/m$.
Finally, \cref{lemma:q-p-tightness} shows that
$1/2 < 2^{-1-\floor{\log(a_i/m)}} A_i/2^K \leq (1-2^b/m)M/2^K$,
which matches the conditions to apply the relative toll bounds,
\cref{lemma:relative-toll-bound-linear,corollary:relative-toll-bound-constant}.

If $b=k-2$, then $x=3$ and the relative FLDR toll of power-of-two probabilities
is exactly $2$, and for $a_i < 3$ we have $\trelfldr{A_i, M} = \trel{1/3} < 2$,
so the overall toll is less than $2$ by
\cref{eq:entropy-difference-convex-combination}.
Henceforth, $0 \leq b \leq k-3$.

We now bound the relative FLDR toll of the non-power-of-two probabilities by
proving in three cases that
\begin{equation}
  \textrm{for each $i$, if $p_i$ is not a power of two then } \trel{A_i/2^K} < 2 - (A_0/M) (2^{b+1}+K+1-\floor{\log(A_0)}).
\label{eq:relative-toll-not-pow-two}
\end{equation}

\begin{enumerate}[label={Case \arabic*.}]
\item
If $k \geq 8$ and $2^{-1-\floor{\log(a_i/m)}} A_i/M < 1 - 2^{1+b-k}$, then
\cref{corollary:relative-toll-bound-constant} gives
\begin{align*}
\trel{A_i/2^K}
&< 2 - \frac{k-b-3+1/\ln(2)}{2^{k-b-1}} \\
&= 2 - \frac{1}{2^k}\left(2^{b+1}+2^{b+1}(k-b-4+1/\ln(2))\right).
\end{align*}

Next, we claim that $2^{b+1}(k-b-4+1/\ln(2)) > k + 2$,
which we prove by casework on $0 \leq b \leq k-3$.
If $b = 0$, then
\[
2^{b+1}(k-b-4+1/\ln(2))
= 2(k-4+1/\ln(2))
= k + (k-8) + 2/\ln(2)
> k + 2
\]
because $k \geq 8$.
The same inequality holds for $1 \leq b \leq k-4$ because
\[
2^{b+1}(k-b-4+1/\ln(2)) > 2^{(b-1)+1}(k-(b-1)-4+1/\ln(2))
\]
is an increasing function of $b$ in this range.
Lastly, if $b = k-3$, then
\[
2^{b+1}(k-b-4+1/\ln(2))
= 2^{k-2}(-1 + 1/\ln(2))
> k + 2.
\]

Now we apply $2^{b+1}(k-b-4+1/\ln(2)) > k + 2$ to obtain
\[
\trel{A_i/2^K}
< 2 - \frac{1}{2^k}\left(2^{b+1}+k+2\right).
\]

We claim that
\begin{align*}
\frac{1}{2^k}\left(2^{b+1}+k+2\right)
&\geq \frac{A_0}{M}\left(2^{b+1}+K+1-\floor{\log(A_0)}\right),
\end{align*}
which will complete the proof of this case.
We consider two possibilities for $A_0$.
If $A_0 < 2^{k-1}$, then
\[
\frac{1}{2^k}\left(2^{b+1}+k+2\right)
\geq 2^{K-k+1} \frac{A_0}{M}\left(2^{b+1}+k+2\right)
>\frac{A_0}{M}\left(2^{b+1}+K+1-\floor{\log(A_0)}\right).
\]
If $A_0 \geq 2^{k-1}$, then we have
\[
\frac{1}{2^k}\left(2^{b+1}+k+2\right)
\geq 2^{K-k} \frac{A_0}{M}\left(2^{b+1}+k+2\right)
\geq\frac{A_0}{M}\left(2^{b+1}+K+1-(k-1)\right),
\]
and the case is complete.

\item
If $k \geq 8$ and
$2^{-1-\floor{\log(a_i/m)}} A_i/M \in [1 - 2^{1+b-k}, 1 - 2^{b-k})$,
then \cref{lemma:relative-toll-bound-linear} gives
\begin{align*}
&\quad \trel{A_i / 2^K} \\
&< 2 - \left[ 1 - \left(1 - \frac{2^b}{m}\right) \left(1 - \frac{A_0}{2^K}\right) \right](k-b-2+1/\ln(2)) \\
&= 2 - \left(\frac{2^b}{m}+\frac{A_0}{2^K}-\frac{2^b A_0}{2^K m}\right)(k-b-2+1/\ln(2)) \\
&= 2 - \frac{A_0}{M}\left(\frac{c_K 2^b}{A_0} + \frac{c_K(m-2^b)}{2^K}\right) (k-b-2+1/\ln(2)) \\
&= 2 - \frac{A_0}{M}\left(2^b\left( \frac{c_K}{A_0} - \frac{c_K}{2^K} \right) (k-b-2+1/\ln(2)) + \frac{M}{2^K} (k-b-2+1/\ln(2))\right) \\
&< 2 - \frac{A_0}{M}\left(2^{b+1}+2^b\left( \frac{c_K}{A_0} - \frac{c_K}{2^K} \right) (k-b-4+1/\ln(2)) + \frac{M}{2^K} (k-b-2+1/\ln(2))\right) \\
&< 2 - \frac{A_0}{M}\left(2^{b+1}+2^b \frac{2^{K-k}}{A_0} (k-b-4+1/\ln(2)) + k-b-1\right).
\end{align*}
From here, to prove the desired bound, it suffices to show
\[
2^b \frac{2^{K-k}}{A_0} (k-b-4+1/\ln(2)) + k-b-1
\geq
K+1-\floor{\log(A_0)}.
\]
We proceed by casework on $K$, $A_0$, and $b$.

First, we reduce to the case $K = 2k$ because $2^b (k-b-4+1/\ln(2)) \geq 1$,
which implies
\[
2^b \frac{2^{(K+1)-k}}{A_0} (k-b-4+1/\ln(2))
-
2^b \frac{2^{K-k}}{A_0} (k-b-4+1/\ln(2))
\geq
(K+1) - K
\]
for all $K \geq 2k$, i.e., increasing $K$ tightens the intermediate bound we
have proven faster than it tightens the bound we want to prove.

Next, we split into cases comparing $A_0$ to $2^{k-1}$ and $2^{k-2}$, recalling that $A_0 > 0$.
\begin{enumerate}[label={Case 2.\alph*.}]
\item
If $1 \leq A_0 < 2^{k-2}$, then
\[
2^b \frac{2^k}{A_0} (k-b-4+1/\ln(2)) + k-b-1
\geq 2k+1
\geq 2k+1-\floor{\log(A_0)}.
\]

\item
If $2^{k-2} \leq A_0 < 2^{k-1}$, then
\[
2^b \frac{2^k}{A_0} (k-b-4+1/\ln(2)) + k-b-1
\geq k+3
= 2k+1-\floor{\log(A_0)}.
\]

\item
If $A_0 \geq 2^{k-1}$, then it suffices to show that
$2^b (k-b-4+1/\ln(2)) + k-b-1 \geq k + 2$, or equivalently, that
$2^b (k-b-4+1/\ln(2)) \geq b + 3$.
To this end, we again use casework, now on $b$, recalling $k \geq 8$.
\begin{enumerate}[label={Case 2.c.\roman*.}]
\item
If $b=0$, then
\[
2^b (k-b-4+1/\ln(2)) = k-4+1/\ln(2) > 3.
\]

\item
If $b=1$, then
\[
2^b (k-b-4+1/\ln(2)) = 2(k-5+1/\ln(2)) > 4.
\]

\item
If $b=2$, then
\[
2^b (k-b-4+1/\ln(2)) = 4(k-6+1/\ln(2)) > 5.
\]

\item
If $b=3$, then
\[
2^b (k-b-4+1/\ln(2)) = 8(k-7+1/\ln(2)) > 6.
\]

\item
If $b \geq 4$, then
\[
2^b (k-b-4+1/\ln(2)) \geq 2^b(-1+1/\ln(2)) > b + 3.
\]
\end{enumerate}
\end{enumerate}

This analysis completes the proof of
\begin{equation*}
\trel{A_i / 2^K} < 2 - \frac{A_0}{M}\left(2^{b+1}+K+1-\floor{\log(A_0)}\right)
\end{equation*}
for the case $k \geq 8$ and
$2^{-1-\floor{\log(a_i/m)}} A_i/M \in [1 - 2^{1+b-k}, 1 - 2^{b-k})$.

\item
If $k < 8$, then direct enumeration
shows that
\cref{eq:relative-toll-not-pow-two} holds for all
$0 \leq b \leq k-3$ and $2k \leq K < 16$:
\begin{equation*}
\min_{\substack{
  2k \leq K < 16; \\
  a_i / m \textrm{ not a power of two}
}}
\left[ 2 - (A_0/M) (2^{b+1}+K+1-\floor{\log(A_0)}) - \trel{A_i/2^K} \right]
\gtrapprox 0.0394,
\end{equation*}
and this tightest case occurs at $a_i = 117$, $m = 118$, and $K = 14$.
For $K \geq 16$, we reduce to the case $k = 8$, by repeating the array of
weights $2^{8-k}$ times and amplifying these repeated weights using the depth
$K+8-k \geq 16$, which gives the amplified weight sum $2^{8-k}M$ and amplified
rejection weight $2^{8-k}A_0$.
From the result of the previous cases, together with
\cref{lemma:bit-shift-relative-toll},
it follows that
\begin{align*}
\trel{A_i/2^K}
= \trel{A_i/2^{K+8-k}}
&< 2 - \frac{2^{8-k}A_0}{2^{8-k}M}\left(2^{b+1}+K+8-k+1-\floor{\log(2^{8-k}A_0)}\right) \\
&= 2 - \frac{A_0}{M}\left(2^{b+1}+K+1-\floor{\log(A_0)}\right),
\end{align*}
which completes the proof of \cref{eq:relative-toll-not-pow-two}.
\end{enumerate}

Applying \cref{lemma:aldr-rejection-toll} together with
\cref{eq:relative-toll-not-pow-two},
if $a_i/m$ is not a power of two, then
\begin{align}
\trelfldr{A_i, M}
&= \trel{A_i / 2^K}
  + \frac{2^K}{M} \left( \Ho{M/2^K} + \nuu{1 - M/2^K} \right)
  \notag \\
&< 2 - \frac{A_0}{M}\left(2^{b+1}+K+1-\floor{\log(A_0)}\right)
  + \frac{A_0}{M} (K+3-\floor{\log(A_0)})
  \notag \\
&< 2 - (2^{b+1} - 2) A_0 / 2^K.
\label{eq:relative-aldr-toll-not-pow-two}
\end{align}

Now, we use \cref{lemma:aldr-pow-two-toll} and
\cref{eq:relative-aldr-toll-not-pow-two} in
\cref{eq:entropy-difference-convex-combination} to bound the overall toll.
In the worst case, $1-2^{-b}$ of the probability comes
from powers of two, which gives the bound
\begin{align*}
\toll{\aldr[\bp, K]}
&= \mathbb{E}_{i \sim \bp}\left[ \trelfldr{A_i, M} \right] \\
&< (1-2^{-b}) 2 (1 + A_0 / 2^K)
  + 2^{-b} \left( 2 - (2^{b+1} - 2) A_0 / 2^K \right) \\
&= 2.
\end{align*}
The claim that $\toll{\aldr[\bp, K]} < 2$ for all $\bp$ and all
$K \geq 2k$ is thus established.

\section{Proof of \crefnameof{theorem}~\ref{theorem:generating-fn}}
\label[appendix]{appx:proof-theorem-generating-fn}

Because $\ky[\bq]$ is entropy optimal,
$\leaves{\ky[\bq]}{d}{i} = \epsd{A_i/2^K}$ for all $d$ and
$0 \leq i \leq n$.
Upon unrolling one back edge at depth $r$, the label counts at depth $d$ become
$\epsd{A_i/2^K} + \epsd[d-r]{A_i/2^K}$, because one new leaf node with
label $i$ appears at depth $d+r$ for every leaf with label $i$ at depth $d$ in
the original tree.
Therefore, if there is just a single back edge at depth $r$, then by repeatedly
unrolling it to get an infinite tree, we have
\begin{equation*}
\leaves{\fldr*[A_1,\dots,A_n]}{d}{i}
= \epsd{A_i/2^K} + \epsd[d-r]{A_i/2^K} + \epsd[d-2r]{A_i/2^K} + \cdots,
\end{equation*}
which confirms the theorem for the case $A_0/2^K = 2^{-r}$.
This case is revisited in the proof of \cref{theorem:aldr-ky-match-bounded}.

To analyze the more general case where there may be multiple back edges in the
FLDR tree, it will be convenient to introduce some notation to describe the
convolutions that arise from unrolling multiple back edges at once.
For this purpose, we will use generating functions.
We wish to construct the formal power series
$\sum_{d=0}^{\infty} \leaves{\fldr*[A_1,\dots,A_n]}{d}{i} z^d$,
whose $z^d$-coefficient is the number of leaves with label $i$ at depth $d$
in the unrolled tree $\fldr[A_1,\dots,A_n]$.
The generating function for the leaves with label $i$ in the unexpanded FLDR tree is
\begin{equation*}
\sum_{d=0}^{\infty} \leaves{\ky[\bq]}{d}{i} z^d
= \sum_{d=0}^{K} \epsd{A_i/2^K} z^d.
\end{equation*}

Upon unrolling back edges targeting the root of the tree, the new leaves at
depth $d$ are in bijective correspondence with pairs of back edge sources at
depth $r$ and leaves at depth $d-r$, which corresponds exactly to the
operation of convolution or polynomial multiplication.
The generating function describing the newly-added leaves is therefore
\begin{equation*}
\sum_{d=0}^\infty \sum_{r=0}^\infty \epsd[r]{A_0/2^K} \epsd[d-r]{A_i/2^K} z^d
= \sum_{d=0}^{K} \epsd{A_0/2^K} z^d \sum_{d=0}^{K} \epsd{A_i/2^K} z^d.
\end{equation*}
Therefore, upon unrolling all back edges present in the original tree once
(cf.~\cref{fig:fldr-tree-unrolling-1}), the generating function for the leaves
with label $i$ becomes
\begin{equation*}
\sum_{d=0}^{K} \epsd{A_i/2^K} z^d
+ \sum_{d=0}^{K} \epsd{A_i/2^K} z^d \sum_{d=0}^{K} \epsd{A_0/2^K} z^d
= \left(1 + \sum_{d=0}^{K} \epsd{A_0/2^K} z^d\right) \sum_{d=0}^{K} \epsd{A_i/2^K} z^d,
\end{equation*}
and the generating function for the rejection labels in the new tree is
$(\sum_{d=0}^{K} \epsd{A_0/2^K} z^d)^2$.
Upon unrolling these new back edges once (cf.~\cref{fig:fldr-tree-unrolling-2}),
the generating function for the leaves with label $i$ becomes
\begin{equation*}
\left(
  1
  + \sum_{d=0}^{K} \epsd{A_0/2^K} z^d
  + \left(\sum_{d=0}^{K} \epsd{A_0/2^K} z^d\right)^2
  \right)
  \sum_{d=0}^{K} \epsd{A_i/2^K} z^d,
\end{equation*}
and the new rejection labels have generating function
\begin{equation*}
\left(\sum_{d=0}^{K} \epsd{A_0/2^K} z^d\right)^3.
\end{equation*}
By inductively continuing this pattern, the generating function for
the leaves with label $i$ in the fully-unrolled infinite tree is
\begin{align*}
\sum_{d=0}^{\infty} \leaves{\fldr*[A_1,\dots,A_n]}{d}{i} z^d
&=
\left(
  1
  + \sum_{d=0}^{K} \epsd{A_0/2^K} z^d
  + \left(\sum_{d=0}^{K} \epsd{A_0/2^K} z^d\right)^2
  + \cdots
  \right)
\sum_{d=0}^{K} \epsd{A_i/2^K} z^d,
\end{align*}
which completes the proof.

\end{document}